\title{Compositionality, Decompositionality and Refinement in Input/Output Conformance Testing -- Technical Report}
\author{Lars Luthmann\thanks{This work has been supported by the German Research Foundation (DFG) in the Priority Programme SPP 1593: Design For Future -- Managed Software Evolution (LO 2198/2-1).}
\institute{Real-Time Systems Lab\\TU Darmstadt, Germany}
\email{lars.luthmann@es.tu-darmstadt.de}
\and
Stephan Mennicke\thanks{This work has been supported by the German Research Foundation (DFG), grant GO-671/6-2.}
\institute{Institute for Programming and Reactive Systems\\TU Braunschweig, Germany}
\email{mennicke@ips.cs.tu-bs.de}
\and
Malte Lochau$^*$
\institute{Real-Time Systems Lab\\TU Darmstadt, Germany}
\email{malte.lochau@es.tu-darmstadt.de}
}
\newcommand{\eg}{\mbox{e.\,g.,}\xspace}
\newcommand{\etal}{\mbox{et al.}\xspace}
\newcommand{\ie}{\mbox{i.\,e.,}\xspace}
\newcommand{\cf}{\mbox{cf.}\xspace}
\newcommand{\ioco}{\textbf{ioco}\xspace}
\newcommand{\mioco}{\ensuremath{\mathmioco}\xspace}
\newcommand{\miocophi}{\ensuremath{\mathmiocophi}\xspace}
\newcommand{\miocod}{\ensuremath{\mathmiocod}\xspace}
\newcommand{\miaallowed}{\mbox{MIA\ensuremath{_A}}\xspace}
\newcommand{\miaforbidden}{\mbox{MIA\ensuremath{_\Phi}}\xspace}
\newcommand{\miadc}{\mbox{MIA\ensuremath{_\textit{DC}}}\xspace}
\newcommand{\transitionarrow}{\longrightarrow}
\newcommand{\Transitionarrow}{\Longrightarrow}
\newcommand{\mustarrow}{\transitionarrow_\Box}
\newcommand{\mayarrow}{\transitionarrow_\Diamond}
\newcommand{\oversetmust}[1]{\overset{#1}{\transitionarrow}_\Box}
\newcommand{\Oversetmust}[1]{\overset{#1}{\Transitionarrow}_\Box}
\newcommand{\oversetgamma}[1]{\overset{#1}{\transitionarrow}_\gamma}
\newcommand{\Oversetgamma}[1]{\overset{#1}{\Transitionarrow}_\gamma}
\newcommand{\oversetmay}[1]{\overset{#1}{\transitionarrow}_\Diamond}
\newcommand{\Oversetmay}[1]{\overset{#1}{\Transitionarrow}_\Diamond}
\newcommand{\transrel}[1]{\overset{#1}{\transitionarrow}}
\newcommand{\Transrel}[1]{\overset{#1}{\Transitionarrow}}
\newcommand{\mustarrowsub}[1]{\transitionarrow_{\Box#1}}
\newcommand{\mayarrowsub}[1]{\transitionarrow_{\Diamond#1}}
\newcommand{\oversetmustsub}[2]{\overset{#1}{\transitionarrow}_{\Box#2}}
\newcommand{\oversetmaysub}[2]{\overset{#1}{\transitionarrow}_{\Diamond#2}}
\newcommand\mayinit{\ensuremath{\textit{init}_\Diamond}\xspace}
\newcommand\mustinit{\ensuremath{\textit{init}_\Box}\xspace}
\newcommand\mayafter{\ensuremath{\after_\Diamond}\xspace}
\newcommand\mustafter{\ensuremath{\after_\Box}\xspace}
\newcommand\mayout{\ensuremath{\textit{Out}_\Diamond}\xspace}
\newcommand\mustout{\ensuremath{\textit{Out}_\Box}\xspace}
\newcommand\maystraces{\ensuremath{\textit{Straces}_\Diamond}\xspace}
\newcommand\mustafterd{\ensuremath{\after_{\Box D}}\xspace}
\newcommand\straces{\ensuremath{\textit{Straces}}\xspace}
\newcommand\out{\ensuremath{\textit{Out}}\xspace}
\newcommand\init{\ensuremath{\textit{init}}\xspace}
\newcommand{\miarefforbidden}{\ensuremath{\sqsubseteq_\Phi}\xspace}
\newcommand{\otimesforbidden}{\ensuremath{\otimes_\Phi}}
\newcommand{\otimesforbiddenhiding}{\ensuremath{\otimes_\Phi^H}}
\newcommand{\parallelallowed}{\ensuremath{\parallel_A}}
\newcommand{\parallelforbidden}{\ensuremath{\parallel_\Phi}}
\newcommand{\hidingallowed}{\ensuremath{\mid_A}}
\newcommand{\hidingforbidden}{\ensuremath{\mid_\Phi}}
\newcommand{\sslash}{\mathbin{/\mkern-6mu/}}
\newcommand{\sslashallowed}{\ensuremath{\mathbin{\sslash_{\mkern-8muA}}}}
\newcommand{\sslashforbidden}{\ensuremath{\mathbin{\sslash_{\mkern-6mu\Phi}}}}
\DeclareMathOperator{\after}{\mathbf{after}\xspace}
\DeclareMathOperator{\mathmiocophi}{\mathbf{mioco}_\Phi}
\DeclareMathOperator{\mathmiocod}{\mathbf{mioco}_D}
\DeclareMathOperator{\mathmioco}{\mathbf{mioco}_{\mathsf{MIA}}}
\DeclareMathOperator{\mathioco}{\mathbf{ioco}}
\DeclareMathOperator{\mathhiding}{\mathbf{hiding}}
\newcolumntype{L}[1]{>{\raggedright\let\newline\\\arraybackslash\hspace{0pt}}m{#1}}
\newcolumntype{C}[1]{>{\centering\let\newline\\\arraybackslash\hspace{0pt}}m{#1}}
\newcolumntype{R}[1]{>{\raggedleft\let\newline\\\arraybackslash\hspace{0pt}}m{#1}}
\tikzset{
	every node/.style={},
	descr/.style={fill=white, inner sep=2pt},
	state/.style={circle,draw,fill=black,inner sep=.2em},
	initial text={},
	every initial by arrow/.style={*->,>=stealth'},
	emptystate/.style={inner sep=.2em},
	labeledcircledstate/.style={circle,draw,inner sep=.2em}, 
	labeledstate/.style={inner sep=.2em}, 
	may/.style={->,>=stealth',dashed},
	must/.style={->,>=stealth'},
	empty/.style={->,>=stealth',transparent},
	labeledroundedstate/.style={rounded corners,draw,inner sep=.2em,align=center,text height=.6em,text width=1.3em,text depth=.2em},
	labeledroundedwidestate/.style={rounded corners,draw,inner sep=.2em,align=center,text height=.6em,text width=2.6em,text depth=.2em},
	labeledroundedemptystate/.style={rounded corners,inner sep=.2em,transparent,text height=.6em,text width=1.3em,text depth=.2em}
}
\newcommand\scalefactor{.8}
\newtheorem{definition}{Definition}
\newtheorem{theorem}{Theorem}
\newtheorem{lemma}{Lemma}
\newtheorem{corollary}{Corollary}
\begin{document}

\maketitle


\begin{abstract}
We propose an input/output conformance testing theory utilizing Modal Interface Automata with Input Refusals (IR-MIA) as novel behavioral formalism for both the specification and the implementation under test. A modal refinement relation on IR-MIA allows distinguishing between obligatory and allowed output behaviors, as well as between implicitly underspecified and explicitly forbidden input behaviors. The theory therefore supports positive and negative conformance testing with optimistic and pessimistic environmental assumptions. We further show that the resulting conformance relation on IR-MIA, called modal-irioco, enjoys many desirable properties concerning component-based behaviors. First, modal-irioco is preserved under modal refinement and constitutes a preorder under certain restrictions which can be ensured by a canonical input completion for IR-MIA. Second, under the same restrictions, modal-irioco is compositional with respect to parallel composition of IR-MIA with multi-cast and hiding. Finally, the quotient operator on IR-MIA, as the inverse to parallel composition, facilitates decompositionality in conformance testing to solve the unknown-component problem.
\end{abstract}

%
\section{Introduction}\label{sec:introduction}
Formal approaches to model-based testing of
component-based systems define notions of
\emph{behavioral conformance} between a \emph{specification}
and a (black-box) \emph{implementation (under test)}, both
usually given as (variations of) 
labeled transition systems (LTS).
Existing notions of behavioral conformance may be 
categorized into two research directions.
\emph{Extensional} approaches define
\emph{observational equivalences}, requiring
that no observer process (tester) is ever able to distinguish
behaviors shown by the implementation 
from those allowed by the specification~\cite{Nicola1987}.
In contrast, \emph{intensional} approaches rely on I/O labeled
transition systems (IOLTS) from which \emph{test cases} are derived as sequences
of controllable input and observable output actions,
to establish an \emph{alternating simulation 
relation} on IOLTS~\cite{Veanes2012,Gregorio2013}.
One of the most prominent conformance testing theories, 
initially introduced by Tretmans in~\cite{Tretmans1996}, combines
both views on formal conformance testing into an input/output 
conformance (\textbf{ioco}) relation on IOLTS.
Although many formal properties of, and extensions to, \textbf{ioco} have been
intensively investigated, \textbf{ioco} still suffers 
several essential weaknesses.
\begin{itemize}
\item The \textbf{ioco} relation permits \emph{underspecification} by means
of (1) unspecified input behaviors and (2)~non-deterministic input/output behaviors.
But, concerning (1), \textbf{ioco} is limited to \emph{positive} testing (\ie unspecified
inputs may be implemented arbitrarily) thus implicitly
relying on optimistic environmental assumptions.
Also supporting \emph{negative} testing in a pessimistic setting, however, would require
a distinction between \emph{critical} 
and \emph{uncritical} unintended input behaviors.
Concerning (2), \textbf{ioco} requires the implementation to exhibit
\emph{at most} output behaviors permitted by the specification.
In addition, the notion of \emph{quiescence} (\ie observable absence of any outputs) 
enforces implementations to show \emph{at least one} specified output behavior (if any).
Apart from that, no explicit distinction between \emph{obligatory} 
and \emph{allowed} output behaviors is expressible in IOLTS.	 
\item \textbf{ioco} imposes a special kind of \emph{alternating simulation} between
specification and implementation which is, in general, not a preorder,
although being a crucial property for testing relations on LTS~\cite{Nicola1995}.
\item \textbf{ioco} lacks a unified theory for input/output 
conformance testing in the face of component-based behaviors
being compatible with potential solutions for the aforementioned
weaknesses.
\end{itemize}
As all these weaknesses mainly stem from the
limited expressiveness of IOLTS as behavioral formalism,
we propose \emph{Modal Interface Automata with Input Refusals (IR-MIA)}
as a new model for input/output conformance testing for
both the specification and the implementation 
under test.
IR-MIA adopt Modal Interface Automata (MIA)~\cite{Bujtor2015a}, 
which combine concepts of Interface Automata~\cite{Alfaro2001}
(\ie I/O automata permitting underspecified input behaviors)
and (I/O-labeled) Modal Transitions 
Systems~\cite{Larsen2007,Bauer2010,Raclet2011}
(\ie LTS with distinct mandatory and optional transition relations).
In particular, we exploit enhanced versions of MIA
supporting both optimistic and pessimistic 
environmental assumptions~\cite{Luettgen2014} and
non-deterministic input/output behaviors~\cite{Bujtor2015a}.
For the latter, we have to re-interpret the universal state of MIA,
simulating every possible behavior, as 
\emph{failure state} to serve as target for those unintended, yet critical
input behaviors to be \emph{refused} by the implementation~\cite{Phillips1987}.
Modal refinement of IR-MIA therefore allows distinguishing between 
obligatory and allowed output behaviors, as well as between 
implicitly underspecified and explicitly forbidden input behaviors.

The resulting testing theory on IR-MIA
unifies positive and negative conformance testing 
with optimistic and pessimistic environmental assumptions.
We further prove that the corresponding modal I/O conformance 
relation on IR-MIA, called \textbf{modal-irioco}, 
exhibits essential properties, especially 
with respect to component-based systems testing.
\begin{itemize}
	\item \textbf{modal-irioco} is preserved under modal refinement
and constitutes a preorder under certain restrictions
which can be obtained by a canonical input completion~\cite{Rensink2007}. 
	\item \textbf{modal-irioco} is compositional with respect to parallel composition 
	of IR-MIA with multi-cast and hiding~\cite{Bujtor2015a}. 
	\item \textbf{modal-irioco} allows for decomposition of conformance testing,
	thus supporting environmental synthesis for component-based testing in contexts~\cite{Noroozi2013,Daca2014}, also 
	known as the \emph{unknown-component problem}~\cite{Villa2011}.
	To this end, we adapt the MIA quotient operator to IR-MIA, 
	serving as the inverse to parallel composition.
\end{itemize}
The remainder of this paper
is organized as follows.
In Sect.~\ref{sec:preliminaries}, we revisit the foundations of \textbf{ioco} testing.
In Sect.~\ref{sec:mioco}, we introduce IR-MIA and modal refinement on IR-MIA and, thereupon,
define \textbf{modal-irioco}, provide a correctness proof and discuss necessary restrictions
to obtain a preorder.
Our main results concerning compositionality and decompositionality
of \textbf{modal-irioco} are presented in Sect.~\ref{sec:compositionality} and
Sect.~\ref{sec:decompositionality}, respectively.
In Sect.~\ref{sec:casestudy}, we present a case study explaining the introduced concepts with a real-world example.
In Sect.~\ref{sec:related}, we discuss related work and in Sect.~\ref{sec:conclusion}, we
conclude the paper.
Please note that all proofs may be found in Appendix~\ref{sec:proofs}.
%
\section{Preliminaries}\label{sec:preliminaries}
The \textbf{ioco} testing theory
relies on I/O-labeled transition systems
(IOLTS) as behavioral formalism~\cite{Tretmans1996}.
An IOLTS $(Q,I,O,\longrightarrow)$ specifies the \emph{externally visible} behaviors
of a system or component by means of a transition
relation $\longrightarrow\,\subseteq Q\times (I\cup O \cup \{\tau\}) \times Q$
on a set of states $Q$.
The set of transition labels $\textit{A}=I\cup O$
consists of two disjoint subsets: set $I$ of
externally controllable/internally observable \emph{input} actions,
and set $O$ of internally controllable/externally observable \emph{output} actions.
In figures, we use prefix \textit{?} to mark input
actions and prefix \textit{!} for output actions, respectively.
In addition, transitions labeled with \emph{internal}
actions $\tau\not\in (I\cup O)$ denote silent moves,
neither being externally controllable, nor observable.
We write $A^\tau=A\cup\{\tau\}$, and by $q\transrel{\alpha}q'$ we denote that $(q,\alpha,q')\in\transitionarrow$ holds, where $\alpha\in A^\tau$, and
we write $q\transrel{\alpha}$ as a short hand for $\exists q'\in Q : q\transrel{\alpha} q'$ and
$q\not\transrel{\alpha}$, else.
Furthermore, we write $q\xrightarrow{\alpha_1\cdots\alpha_n}q'$ to express that
$\exists q_0,\ldots,q_n\in Q: q=q_0\transrel{\alpha_1}q_1\transrel{\alpha_2}\cdots\transrel{\alpha_n}q_n=q'$ holds,
and write $q\Transrel{\epsilon}q'$ whenever $q=q'$ or $q\xrightarrow{\tau\cdots\tau}q'$.
Additionally, by $q\Transrel{\alpha}q'$, we denote that
$\exists q_1,q_2:q\Transrel{\epsilon}q_1\transrel{\alpha}q_2\Transrel{\epsilon}q'$.
We further use the notations $q\xRightarrow{a_1\cdots a_n}q'$ and $q\Transrel{a}$ ($a, a_1, \ldots, a_n \in A^{*}$)
analogously to $q\xrightarrow{\alpha_1\cdots\alpha_n}q'$ and $q\transrel{\alpha}$.
Finally, by $q_0\transrel{a_1}q_1\transrel{a_2}\cdots\transrel{a_n}q_n$
we denote a \emph{path}, where $\sigma=a_1a_2\ldots a_n\in A^*$
is called a \emph{trace} (note: $\tau$ equals $\epsilon$).
We identify an IOLTS with its initial state (\ie $q\in Q$ is the initial state of $q=(Q,I,O,\transitionarrow)$).
We only consider \emph{strongly convergent} IOLTS (\ie no infinite $\tau$-sequences exist).

In the \ioco testing theory, both specification $s$
as well as a (black-box) implementation under test $i$ are assumed to be
(explicitly or implicitly) given as IOLTS.
In particular, \ioco does not necessarily require specification $s$
to be \emph{input-enabled}, whereas implementation $i$ is
assumed to never reject any input $a\in I$ from the environment (or tester).
More precisely, \ioco requires implementations to be \emph{weak input-enabled}
(\ie $\forall q\in Q: \forall a\in I:q\Transrel{a}$)
thus yielding the subclass of \emph{I/O transition systems} (IOTS).
Intuitively, the IOTS of implementation $i$ \emph{I/O-conforms}
to the IOLTS of specification $s$ if all output behaviors of $i$ observed
after any possible sequence $\sigma = \alpha_1\cdots\alpha_n$
in $s$ are permitted by $s$.
In case of non-determinism, more than one
state may be reachable in $i$ as well as in $s$ after sequence $\sigma$ and therefore
all possible outputs of any state in the set
$$p\after\sigma:=\{q\in Q\mid p\Transrel{\sigma}q\}$$
have to be taken into account.
Formally, set $\out(Q')\subseteq O$ denotes all output actions being
enabled in any possible state $q\in Q' = p\after\sigma$.
To further reject trivial implementations never showing any outputs,
the notion of \emph{quiescence} has been introduced by means of a special
observable action $\delta$ explicitly denoting the permission of the \emph{absence}
(suspension) of any output in a state $p$, thus requiring an input to proceed.
In particular, $p$ is \emph{quiescent}, denoted $\delta(p)$,
iff 
$$\init(p):=\{\alpha\in(I\cup O\cup\{\tau\})\mid p\transrel{\alpha}\}\subseteq I$$ 
holds.
Thereupon, we denote
$$\out(P):=\{\alpha\in O\mid \exists p\in P:p\transrel{\alpha}\}\cup\{\delta\mid \exists p\in P:\delta(p)\},$$
where symbol $\delta$ is used both as action
as well as a state predicate.
Based on these notions, I/O conformance
is defined with respect to the set of \emph{suspension traces}
$$\straces(s):=\{\sigma\in(I\cup O\cup\{\delta\})^*\mid p\Transrel{\sigma}\}$$
of specification $s$, where $q\transrel{\delta}q$ iff $\delta(q)$.

\begin{definition}[\textbf{ioco}~\cite{Tretmans1996}]\label{def:ioco-init}
	Let $s$ be an IOLTS and $i$ an IOTS with identical sets $I$ and $O$.
	$$i\mathioco s:\Leftrightarrow\forall\sigma\in \straces(s):\out(i\after\sigma)\subseteq \out(s\after\sigma).$$
\end{definition}
%
\section{Modal Input/Output Conformance with Input Refusals}\label{sec:mioco}
IOLTS permit specifications $s$ to be \emph{underspecified} by means of
\emph{unspecified} input behaviors and
\emph{non-deterministic} input/output behaviors.
In particular, if $q\not\transrel{a}$, then
no proper reaction on occurrences of input
$a\in I$ is specified while residing in state $q$.
Moreover, if $q\transrel{a}q'$ and $q\transrel{a}q''$, $a\in \textit{A}^{\tau}$,
it does not necessarily follow that $q'=q''$ and if $q\transrel{a'}q'$ and
$q\transrel{a''}q''$ with $a',a''\in O$, it does not necessarily follow that $a'=a''$
(\ie IOLTS are neither input-, nor output-deterministic).
In this way, \textbf{ioco} permits, at least up to a certain degree,
implementation freedom in two ways.
First, in case of input behaviors being unspecified in $s$, \textbf{ioco}
solely relies on \emph{positive} testing principles, \ie
reactions to unspecified input behaviors
are never tested and may therefore
show \emph{arbitrary} output behaviors if ever applied to $i$.
Second, in case of non-deterministic specifications, implementation
$i$ is allowed to show \emph{any, but at least one} of those output behaviors
being permitted by $s$ (if any), or it must be quiescent, else.
These limitations of \textbf{ioco} in handling underspecified
behaviors essentially stem from the limited expressive power
of IOLTS.
To overcome these limitations, we propose to adopt richer
specification concepts from interfaces theories~\cite{Raclet2011}
to serve as novel formal foundation for I/O conformance testing.
In particular, we replace IOLTS by
a modified version of (I/O-labeled) Modal Interface Automata (MIA)
with universal state~\cite{Bujtor2015a}.
Similar to IOLTS, MIA also support both kinds of underspecification but allow
for explicit distinctions
(1) between obligatory and allowed behaviors in case of
	non-deterministic input/output behaviors, and
(2) between critical and non-critical unspecified input behaviors.

Concerning (1), MIA separate mandatory from optional
behaviors in terms of may/must transition modality.
For every \emph{must}-transition $q\oversetmust{a}q'$,
a corresponding \emph{may}-transition $q\oversetmay{a}q'$ exists, as
mandatory behaviors must also be allowed (so-called syntactic consistency).
Conversely, may-transitions $q\oversetmay{a}q'$ for which
$q\not\oversetmust{a}q'$ holds constitute
optional behaviors.
Accordingly, we call may-transitions without corresponding
must-transitions \emph{optional}, else \emph{mandatory}.

Concerning (2), MIA make explicit input actions
$a\in I$ being unspecified, yet uncritical
in a certain state $q$ by introducing may-transitions $q\oversetmay{a}u$
leading to a special \emph{universal state} $u$ (permitting any possible
behavior following that input).
In contrast, unintended input actions to be rejected
in a certain state are implicitly forbidden
if $q\not\oversetmay{a}$ holds.
We alter the interpretation of
unspecified input behaviors of MIA by introducing
a distinct \emph{failure state} $q_\Phi$ replacing $u$.
As a consequence, an unspecified input $a\in I$ being uncritical
if residing in a certain state $q$ is (similar to IOLTS) implicitly
denoted as $q\not\oversetmay{a}$,
whereas inputs $a'\in I$ being critical while
residing in state $q$ are explicitly forbidden by $q\oversetmust{a'} q_\Phi$.
We therefore enrich I/O conformance testing by
the notion of \emph{input refusals} in the spirit of
refusal testing, initially proposed by Phillips
for testing preorders on LTS with undirected actions~\cite{Phillips1987}.
Analogous to quiescence, denoting the \emph{observable absence of any output}
in a certain state, refusals therefore denote
the \emph{observable rejection of a particular input} in a certain state
during testing.
In this way, we unify positive testing
(\ie unspecified behaviors are ignored)
and negative testing
(\ie unspecified behaviors must be rejected)
with optimistic and pessimistic environmental
assumptions known from interface theories~\cite{Raclet2011}.
In particular, we are now able to explicitly reject certain input behavior, which is not supported by \ioco.
We refer to the resulting model as
\emph{Modal Interface Automata with Input Refusals (IR-MIA)}.

\begin{definition}[IR-MIA]\label{def:ir-mia}
A \emph{Modal Interface Automaton with Input-Refusal} (IR-MIA or \miaforbidden)
is a tuple $(Q,I_Q,O_Q,\mustarrow,\mayarrow,q_\Phi)$, where $Q$ is
a finite set of \emph{states} with \emph{failure state} $q_\Phi\in Q$,
$A_Q=I_Q\cup O_Q$ is a finite set of \emph{actions} with
		$\tau\notin A_Q$ and $I_Q\cap O_Q = \emptyset$ and for all $a\in A_Q\cup\{\tau\}, i\in I_Q$,
	\begin{enumerate}
		\item $\mustarrow\subseteq ((Q\setminus\{q_\Phi\}) \times I_Q \times Q) \cup ((Q\setminus\{q_\Phi\}) \times (O_Q\cup\{\tau\}) \times (Q\setminus\{q_\Phi\}))$,
		\item $\mayarrow\subseteq ((Q\setminus\{q_\Phi\}) \times I_Q \times Q) \cup ((Q\setminus\{q_\Phi\}) \times (O_Q\cup\{\tau\}) \times (Q\setminus\{q_\Phi\}))$,
		\item $q\oversetmust{a}q' \Rightarrow q\oversetmay{a}q'$,
		\item $q\oversetmay{i}q_\Phi \Leftrightarrow q\oversetmust{i}q_\Phi$, and
		\item $q\oversetmust{i}q_\Phi \Rightarrow \left(\forall q'.q\oversetmay{i}q' \Rightarrow q'= q_\Phi \right)$.
	\end{enumerate}
\end{definition}

Property~3 ensures syntactic consistency and
properties~1 and~2 together with property~4 ensure that
the failure state $q_\Phi$ only occurs as target of must-transitions being
labeled with input actions.
Property~5 further requires consistency of refusals
of specified input actions in every state (\ie each input is either forbidden or not,
but not both in a state $q$).

Figure~\ref{fig:ir-mia-impl} shows a sample IR-MIA.
Dashed lines denote optional behaviors and solid lines denote mandatory behaviors.
Additionally, the distinct state $q_\Phi$ depicts the failure state
(\ie input \emph{f} is refused by state $q_1$).
This example also exhibits input non-determinism
(state $q_0$ defines two possible reactions to input \emph{a}),
as well as output non-determinism
(state $q_3$ defines two possible outputs after input \emph{d}).

\begin{figure}[tp]
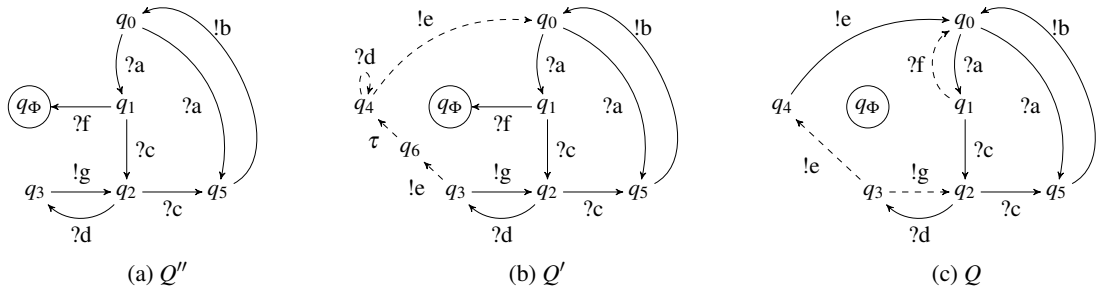

	\hfill
	\subfloat[$Q''$]{\label{fig:ir-mia-impl-2}\input{figures/ir-mia-impl-2.tex}}
	\hfill
	\subfloat[$Q'$]{\label{fig:ir-mia-impl}\input{figures/ir-mia-impl.tex}}
	\hfill
	\subfloat[$Q$]{\label{fig:ir-mia-spec}\input{figures/ir-mia-spec.tex}}
	\hfill\strut
	\caption{Sample IR-MIA}\label{fig:ir-mia}
\end{figure}

\emph{Modal refinement} provides a
semantic implementation relation on MIA~\cite{Bujtor2015a}.
Intuitively, MIA $P$ refines MIA $Q$ if mandatory behaviors of $Q$
are preserved in $P$ and optional behaviors in $P$ are permitted
by $Q$.
Adapted to IR-MIA, input behaviors being unspecified
in $Q$, may be either implemented arbitrarily in $P$, or become forbidden
after refinement.
In particular, if $q\not\oversetmay{a}$ holds in $Q$, then either
$q\not\oversetmay{a}$, $q\oversetmay{a}q'$ (and even $q\oversetmust{a}q'$), or
$q\oversetmust{a}q_{\Phi}$ holds in $P$, respectively.

\begin{definition}[IR-MIA Refinement]\label{def:mia-refinement-forbidden}
  	Let $P, Q$ be \miaforbidden with $I_P=I_Q$ and $O_P=O_Q$.
  	A relation $\mathcal{R}\subseteq P\times Q$ is an
		\emph{IR-MIA Refinement Relation}
		if for all $(p,q)\in\mathcal{R}$ and $\omega\in(O\cup\{\tau\})$, with $p\neq p_\Phi$ and $\gamma\in\{\Diamond,\Box\}$, it holds that
  	\begin{enumerate}
  		\item\label{refine-failure} $q\neq q_\Phi$,
  		\newcounter{mia-ref-must-i}
		\setcounter{mia-ref-must-i}{\value{enumi}}
  		\item\label{refine-must-i} $q\oversetmust{i}q'\neq q_\Phi$ implies $\exists p'.p\oversetmust{i}\Oversetmust{\epsilon}p'\neq p_\Phi$ and $(p',q')\in\mathcal{R}$,
  		\newcounter{mia-ref-must-o}
		\setcounter{mia-ref-must-o}{\value{enumi}}
  		\item\label{refine-must-o} $q\oversetmust{\omega}q'$ implies $\exists p'.p\Oversetmust{\hat\omega}p'$ and $(p',q')\in\mathcal{R}$,
  		\item\label{refine-may-i-consistency} $p\oversetmay{i}p' \wedge q\oversetmay{i}$ implies $\exists q'.q \oversetmay{i}\Oversetmay{\epsilon}q'$ and $(p',q')\in\mathcal{R}$,
  		\item\label{refine-may-i} $q\oversetmay{i}q'$ implies $\exists p'.p\oversetmay{i}\Oversetmay{\epsilon}p'$ and $(p',q')\in\mathcal{R}$, and
  		\item\label{refine-may-o} $p\oversetmay{\omega}p'$ implies $\exists q'.q\Oversetmay{\hat\omega}q'$ and $(p',q')\in\mathcal{R}$.
  	\end{enumerate}
  	State $p$ refines state $q$ if there exists $\mathcal{R}$ such that $(p,q)\in\mathcal{R}$.
  	(Note: $q\Transrel{\hat\omega}_\gamma q'$ equals $q\Transrel{o}_\gamma q'$ for $\hat\omega = o \in O$ and $q\Transrel{\epsilon} q'$ otherwise).
\end{definition}

Clause 1 ensures that the failure state $q_\Phi$ can only be refined by $p_\Phi$, since both suspend any subsequent behavior.
Clauses 2 and 3 guarantee that mandatory behavior of $Q$ is preserved by $P$.
All other clauses handle optional behavior, where inputs are either refined to
forbidden or mandatory inputs, and outputs are either refined to mandatory or unspecified outputs.
By $P \miarefforbidden Q$, we denote the existence of an IR-MIA refinement relation between $P$ and $Q$.

As an example, consider IR-MIA $Q$ and $Q'$ in Fig.~\ref{fig:ir-mia}.
$Q'\miarefforbidden Q$ does not hold as the mandatory
output \emph{e} of $q_4$ in $Q$ is not mandatory anymore in $Q'$.
However, the other modifications in $Q'$ are valid refinements of $Q$
as output \emph{g} of $q_3$
has become mandatory, and optional
input \emph{f} of $q_1$ is now refused (\ie
the transition is redirected to $q_\Phi$).
Additionally, inputs being unspecified in $Q$ may be added to $Q'$
(\eg $q_4$ of $Q'$ now accepts input \emph{d}).
Furthermore, internal steps may be added after inputs as well as before
and after outputs under refinement (\eg $Q'$ has a $\tau$ step
after output \emph{e} in $q_3$).
The former ensures that a refined IR-MIA may be controlled by the environment in the same way as
the unrefined IR-MIA.
Considering IR-MIA $Q''$ in Fig.~\ref{fig:ir-mia} instead, $Q''\miarefforbidden Q$ holds.
The removal of mandatory output \emph{e} from $q_4$ is valid as $q_4$ is
not reachable anymore after refinement.

In the context of modal I/O conformance testing, modal
refinement offers a controlled way to resolve
underspecification within specifications $s$.
In addition, we also assume $i$ to be represented as IR-MIA
in order to support (partially) underspecified implementations under test
as apparent in earlier phases of continuous systems and component development.

We next define an adapted version of \ioco to operate on IR-MIA.
Intuitively, a modal implementation $i$ \emph{I/O-conforms}
to a modal specification $s$ if all
observable mandatory behaviors of $s$ are also observable
as mandatory behaviors of $i$ and none of the observable optional behaviors
of $i$ exceed the observable optional behaviors of $s$.
If established between implementation $i$ and specification $s$,
modal I/O conformance ensures for all implementations $i'\miarefforbidden i$, derivable from
$i$ via modal refinement, the existence of an accompanying specification refinement $s'\miarefforbidden s$ of $s$
such that $i'$ is I/O conforming to $s'$.

Similar to $\delta$ denoting observable quiescence, we introduce
a state predicate $\varphi$ to denote
\emph{may-failure/\linebreak must-failure} states (\ie states having may/must input-transitions
leading to $q_\Phi$).
We therefore use $\varphi$ as a special symbol to
observe refusals of particular inputs in certain states of the implementation during testing.
To this end, we first lift the auxiliary notations of \textbf{ioco} from IOLTS
to IR-MIA, where we write $\gamma\in\{\Diamond,\Box\}$ for short in the following.

\begin{definition}\label{def:modal-init-forbidden}
	Let $Q$ be a \miaforbidden over $I$ and $O$, $p\in Q$ and $\sigma\in(I\cup O \cup \{ \delta,\varphi\})^*$.
	\begin{itemize}
		\item $\init_\gamma(p):=\{\mu\in(I\cup O)\mid p\transrel{\mu}_\gamma\} \cup \{\varphi\mid p=p_\Phi\}$,
		\item $p$ is \emph{may-quiescent}, denoted by $\delta_\Diamond(p)$, iff $\mustinit(p)\subseteq I$, $p\not\oversetmust{\tau}$, and $p\neq p_\Phi$,
		\item $p$ is \emph{must-quiescent}, denoted by $\delta_\Box(p)$, iff $\mayinit(p)\subseteq I$, $p\not\oversetmay{\tau}$, and $p\neq p_\Phi$,
		\item $p$ is \emph{may-failure}, denoted by $\varphi_\Diamond(p)$, iff $p=p_\Phi$ or $\exists p'\in Q:(p''\oversetmay{i}p\land p''\not\oversetmust{i}p)$,
		\item $p$ is \emph{must-failure}, denoted by $\varphi_\Box(p)$, iff $p=p_\Phi$,
		\item $p\after_\gamma\sigma:=\{p' \mid p\Transrel{\sigma}_\gamma p'\}$,
		\item $\textit{Out}_\gamma(p):=\{\mu\in O\mid p\transrel{\mu}_\gamma\}\cup\{\delta\mid\delta_\gamma(p)\}\cup\{\varphi\mid\varphi_\gamma(p)\}$, and
		\item $\textit{Straces}_\gamma(p):=\{\sigma\in(I\cup O\cup\{\delta,\varphi\})^*\mid p\Transrel{\sigma}_\gamma\}$, where $p\transrel{\delta}_\gamma p$ if $\delta_\gamma(p)$, and $p\transrel{\varphi}_\gamma p$ if $\varphi_\gamma(p)$.
	\end{itemize}
\end{definition}

Hence, quiescence as well as failure behaviors may occur with both may- and must-modality.
Intuitively, a state is may-quiescent if all enabled output transitions are optional, \ie such a state {\em may} become quiescent under refinement.
Likewise, a state $p$ is a may-failure if there is an optional input leading to $p$, since this optional input may be refused under refinement.

According to \textbf{ioco}, $\miaforbidden$ $i$
constituting a modal implementation under test
is assumed to be input-enabled.
In particular, \emph{modal input-enabledness} of
IR-MIA comes in four flavors by combining weak/\linebreak strong
input-enabledness with may-/must-modality.
Note, that $q\transrel{i}_\gamma q'$ implies $q\Transrel{i}_\gamma q'$, $q\Oversetmust{i}q'$
implies $q\Oversetmay{i}q'$, and $q\oversetmust{i}q'$ implies $q\oversetmay{i}q'$.

\begin{definition}[Input-Enabled IR-MIA]\label{def:mia-input-enabledness}
	$\miaforbidden$ $Q$ is \emph{weak/strong $\gamma$-input-enabled}, respectively, iff for each $q\in Q\setminus\{q_\Phi\}$
	it holds that $\forall i\in I:\exists q'\in Q:q\Transrel{i}_\gamma q'$, or
	$\forall i\in I:\exists q'\in Q:q\transrel{i}_\gamma q'$.
\end{definition}

May-input-enabledness is preserved under modal refinement
as optional input behaviors either remain optional,
become mandatory, or are redirected to the failure state (and
finally become must-input-enabled under complete refinement).

\newcounter{may-input-enabledness-preservation-counter}
\setcounter{may-input-enabledness-preservation-counter}{\value{lemma}}
\begin{lemma}\label{lemma:may-input-enabledness-preservation}
If \miaforbidden $i$ is strong may-input-enabled
then $i'\miarefforbidden i$ is strong may-input-enabled.
\end{lemma}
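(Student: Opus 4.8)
The plan is to reduce the global statement to a single per-state observation about the refinement relation and then propagate it along reachable states. Unfolding $i' \miarefforbidden i$, fix an IR-MIA refinement relation $\mathcal{R}$ with $(i',i)\in\mathcal{R}$. The core claim I would isolate is: whenever $(p,q)\in\mathcal{R}$ with $p\neq p_\Phi$ and $q$ is strong may-input-enabled, then $p$ is strong may-input-enabled. Indeed, by Clause~1 we have $q\neq q_\Phi$, so for every $a\in I$ the enabledness of $q$ yields some $q\oversetmay{a}q''$; Clause~5 then forces a matching $p\oversetmay{a}\Oversetmay{\epsilon}p''$, whose \emph{first} step $p\oversetmay{a}$ is exactly a single may-input transition on $a$. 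Since $a$ was arbitrary, $p$ is strong may-input-enabled. This is precisely the informal reading given before the lemma: an optional input of $i$ is reflected in $i'$ as an optional input, a mandatory input, or a must-transition into the failure state, and in all three cases a may-transition on that input survives, since every must-transition is also a may-transition (Property~3 of Definition~\ref{def:ir-mia}).

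It remains to guarantee that the core claim applies to every relevant state of $i'$, i.e.\ that every reachable non-failure state of $i'$ is $\mathcal{R}$-related to a non-failure state of $i$ (unreachable states being irrelevant to the conformance theory). I would establish this by induction along the may-transition relation --- the only relation to consider, as must-transitions are also may-transitions --- starting from $(i',i)\in\mathcal{R}$. For a related pair $(p,q)$ with $p,q$ non-failure, consider a may-step $p\oversetmay{\mu}p'$. If $\mu=\omega\in O\cup\{\tau\}$, Clause~6 supplies $q\Oversetmay{\hat\omega}q'$ with $(p',q')\in\mathcal{R}$; here $p'$ and $q'$ are automatically non-failure, since Properties~1, 2 and~4 of Definition~\ref{def:ir-mia} forbid output- and $\tau$-transitions into the failure state. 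If $\mu=a\in I$, then because $q$ is may-input-enabled we have $q\oversetmay{a}$, so the premise of Clause~4 holds and we obtain $(p',q')\in\mathcal{R}$; either $p'=p'_\Phi$ (a failure state, excluded from the enabledness requirement), or $p'\neq p'_\Phi$, in which case Clause~1 gives $q'\neq q_\Phi$ and $q'$ is again may-input-enabled because $i$ is globally so. Thus relatedness to a non-failure state of $i$ propagates to all reachable non-failure states of $i'$.

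Combining the two parts finishes the argument: every reachable non-failure state $p'$ of $i'$ is related to a non-failure state $q'$ of $i$, which is strong may-input-enabled since $i$ is, and the core claim then yields that $p'$ is strong may-input-enabled, so $i'$ is strong may-input-enabled. I expect the main obstacle to be the input case of the propagation step rather than the core claim itself: it is exactly here that the \emph{global} may-input-enabledness of $i$ is indispensable, since it is what makes the premise $q\oversetmay{a}$ of Clause~4 hold for every input and thereby keeps input-successors inside $\mathcal{R}$. One must also carefully quarantine the failure-state case, so that the enabledness requirement (which excludes $p_\Phi$) is never imposed where it should not be.
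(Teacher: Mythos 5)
Your proposal is correct, and its core coincides with the paper's own argument: Clause~5 of Def.~\ref{def:mia-refinement-forbidden} guarantees that every may-input transition of $i$ is matched by a may-input transition of $i'$, and the failure state is exempt from the enabledness requirement of Def.~\ref{def:mia-input-enabledness}. Where you genuinely go beyond the paper is in the propagation step. The paper handles it informally, arguing that a refined transition's target must have behavior ``equivalent'' to the old (input-enabled) target and is therefore again input-enabled --- which is imprecise (refinement is not equivalence) and omits the induction entirely. You instead isolate the statement that every reachable non-failure state of $i'$ is $\mathcal{R}$-related to a non-failure state of $i$, and prove it by induction along may-transitions, using Clause~6 for outputs and $\tau$ (where Properties~1, 2 and~4 of Def.~\ref{def:ir-mia} rule out failure targets) and Clause~4 for inputs, where the premise $q\oversetmay{a}$ of Clause~4 is supplied exactly by the global may-input-enabledness of $i$. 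This pinpoints where the hypothesis is indispensable: without it, states of $i'$ reached via inputs unspecified in $i$ would escape $\mathcal{R}$ and could be arbitrary --- a point the paper's proof never makes explicit. Both proofs share the same caveat that Def.~\ref{def:mia-input-enabledness} quantifies over \emph{all} non-failure states, so strictly speaking only the reachable part of $i'$ is covered; you at least flag this convention explicitly, whereas the paper silently ignores it.
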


We now define a modal version of \textbf{ioco}
on IR-MIA (called \textbf{modal-irioco} or $\mathmiocophi$), by means of
alternating suspension-trace inclusion.

\begin{definition}[\textbf{modal-irioco}]\label{def:mioco}
	Let $s$ and $i$ be \miaforbidden over $I$ and $O$ with $i$ being weak may-input-enabled.
	$i\mathmiocophi s:\Leftrightarrow$
	\begin{enumerate}
		\item $\forall\sigma\in\maystraces(s):\mayout(i\mayafter\sigma)\subseteq\mayout(s\mayafter\sigma)$, and
		\item $\forall\sigma\in\maystraces(i):\mustout(s\mayafter\sigma)\subseteq\mustout(i\mayafter\sigma)$.
	\end{enumerate}
\end{definition}

We illustrate the intuition of \textbf{modal-irioco} by providing a concrete example.
Let IR-MIA in Fig.~\ref{fig:ir-mia-impl} constitute
implementation $i$ and IR-MIA in Fig.~\ref{fig:ir-mia-spec} constitute specification $s$.
Similar to \textbf{ioco},
Property~1 of \textbf{modal-irioco} requires all possible output behaviors of $i$ to be
permitted by $s$ which is satisfied in this example.
Property~2 of \textbf{modal-irioco} requires all mandatory outputs of $s$ to be
actually implemented as mandatory outputs in $i$.
This property does not hold in the example as mandatory output \emph{e}
of $q_4$ in $s$ is not mandatory in $i$.
As a consequence, $i\mathmiocophi s$ does not hold.
The example in Fig.~\ref{fig:ir-mia} also explains
why we consider \maystraces and \mayafter in property~2
(unlike \textbf{modal-ioco} in~\cite{Lochau2014}).
Otherwise, output \emph{e} of $q_4$ in $i$
would not be considered as mandatory output behavior
because $q_4$ is not reachable via must-transitions.
In contrast, when considering the IR-MIA in Fig.~\ref{fig:ir-mia-impl-2} as $i$
and the IR-MIA in Fig.~\ref{fig:ir-mia-spec} as $s$, we have
$i\mathmiocophi s$ as the
mandatory output \emph{e} of $q_4$ in $s$ is not reachable in $i$.

Figure~\ref{fig:universal-example-i}, \ref{fig:universal-example-u} and \ref{fig:universal-example-phi}
illustrate the necessity for re-interpreting universal state $u$ of MIA~\cite{Bujtor2015a} as failure state
$q_\Phi$ in IR-MIA.
The IR-MIA in Fig.~\ref{fig:universal-example-i} serves as implementation $i$, the MIA in
Fig.~\ref{fig:universal-example-u} serves as specification $s_u$ with universal state,
and the IR-MIA in Fig.~\ref{fig:universal-example-phi} depicts the same specification with failure state
$s_\Phi$ instead of $u$.
Hence, $i$ would be (erroneously) considered to be non-conforming to $s_u$ as
state $u$ does not specify any outputs (\ie $u$ is quiescent).
In contrast, we have $i\mathmiocophi s_\Phi$ as the reaction of $i$
to input \emph{a} is never tested, because this input is unspecified in $s_\Phi$.

\begin{figure}[tp]
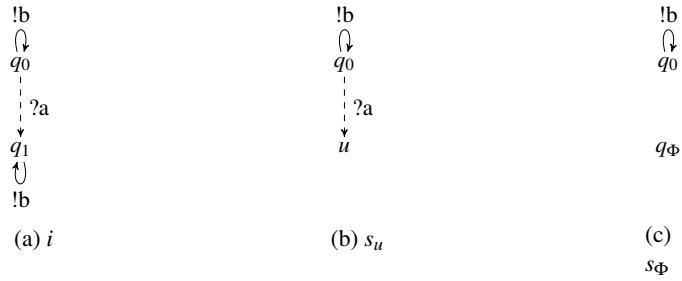

	\hfill
	\subfloat[$i$]{\label{fig:universal-example-i}\input{figures/universal-example-i.tex}}
	\hfill
	\subfloat[$s_u$]{\label{fig:universal-example-u}\input{figures/universal-example-u.tex}}
	\hfill
	\subfloat[$s_\Phi$]{\label{fig:universal-example-phi}\input{figures/universal-example-phi.tex}}
	\hfill\strut
	\caption{Problem of modal-irioco regarding MIA with universal state.}\label{fig:universal-example}
\end{figure}

An I/O conformance testing theory
is \emph{correct} if it is \emph{sound}
(\ie every implementation $i$ conforming
to specification $s$ does indeed only show specified behaviors),
and \emph{complete} (\ie every erroneous implementation $i$ is rejected)~\cite{Tretmans1996}.
For lifting these notions to IR-MIA,
we relate \textbf{modal-irioco} to \textbf{ioco}.
This way, we show compatibility of \textbf{modal-irioco} and the original \textbf{ioco} as follows.
\begin{itemize}
	\item \textbf{modal-irioco} is \emph{sound} if $i\mathmiocophi s$ implies that every refinement
of $i$ conforms to a refinement of $s$ with respect to \ioco.
  \item \textbf{modal-irioco} is \emph{complete} if the correctness of all refinements of $i$
regarding $s$ with respect to \ioco implies $i\mathmiocophi s$, and if at least
one refinement of $i$ is non-conforming to any refinement of $s$,
then $i\mathmiocophi s$ does not hold.
\end{itemize}

We first have to show that \textbf{modal-irioco} is preserved
under modal refinement.
Although, intuitions behind both relations
are quite similar, they are incomparable.
Figure~\ref{fig:mioco-miaref-preservation} gives examples showing that both $i\mathmiocophi s\Rightarrow i\miarefforbidden s$ and $i\miarefforbidden s\Rightarrow i\mathmiocophi s$ do not hold.
Firstly, we take a look at Figs.~\ref{fig:mioco-miaref-preservation-i1} and~\ref{fig:mioco-miaref-preservation-s1}.
Here $i_1\mathmiocophi s_1$ holds.
Note, that $\delta\in\mayout(s_1\mayafter !o\cdot?i)$ because the state on the right-hand side after the input \emph{i} is may-quiescent.
Though, $i_1\miarefforbidden s_1$ does not hold $i_1$ has the output \emph{o'} on the left-hand side, and it does not have \emph{o'} on the right side.
Therefore, $i\mathmiocophi s\nRightarrow i\miarefforbidden s$.
Secondly, consider Figs.~\ref{fig:mioco-miaref-preservation-i2} and~\ref{fig:mioco-miaref-preservation-s2}.
In this case, $i_2\miarefforbidden s_2$ holds because in $s_2$ the input \emph{i} on the right-hand side is underspecified and may be implemented arbitrarily.
Here, $i_2\mathmiocophi s_2$ does not hold because $\mayout(i_2\mayafter !o\cdot?i)\nsubseteq\mayout(s_2\mayafter !o\cdot?i)$.
Therefore, $i\miarefforbidden s\Rightarrow i\mathmiocophi s$ is also not true.
To conclude \miaforbidden refinement and \mioco are incomparable.
\begin{figure}[tp]
	\hfill
	\subfloat[Implementation 1]{\label{fig:mioco-miaref-preservation-i1}\input{figures/mioco-miaref-preservation-i1.tex}}
	\hfill
	\subfloat[Specification 1]{\label{fig:mioco-miaref-preservation-s1}\input{figures/mioco-miaref-preservation-s1.tex}}
	\hfill
	\subfloat[Implementation 2]{\label{fig:mioco-miaref-preservation-i2}\input{figures/mioco-miaref-preservation-i2.tex}}
	\hfill
	\subfloat[Specification 2]{\label{fig:mioco-miaref-preservation-s2}\input{figures/mioco-miaref-preservation-s2.tex}}
	\hfill\strut
	\caption{Figures~\ref{fig:mioco-miaref-preservation-i1} and~\ref{fig:mioco-miaref-preservation-s1} are a counterexample to $i\mathmiocophi s\Rightarrow i\miarefforbidden s$ because $i_1\mathmiocophi s_1$ but not $i_1\miarefforbidden s_1$.
	Figures~\ref{fig:mioco-miaref-preservation-i2} and~\ref{fig:mioco-miaref-preservation-s2} are a counterexample to $i\miarefforbidden s\Rightarrow i\mathmiocophi s$ because $i_2\miarefforbidden s_2$ but not $i_2\mathmiocophi s_2$.}\label{fig:mioco-miaref-preservation}
\end{figure}
Instead, we obtain a weaker correspondence.

\newcounter{mioco-refinement-compatability-counter}
\setcounter{mioco-refinement-compatability-counter}{\value{theorem}}
\begin{theorem}\label{theorem:mioco-refinement-compatability}
	Let $i,s$ be \miaforbidden, $i$ being weak may-input-enabled
	and $i\mathmiocophi s$.
	Then for each $i'\miarefforbidden i$ there exists
	$s'\miarefforbidden s$ such that $i'\mathmiocophi s'$ holds.
\end{theorem}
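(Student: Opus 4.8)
The plan is to prove this by explicitly constructing the accompanying specification refinement $s'$ from a given implementation refinement $i'$, and then verifying that the two \miaforbidden conformance conditions of Definition~\ref{def:mioco} hold between $i'$ and $s'$. The central difficulty is that, as the counterexamples in Fig.~\ref{fig:mioco-miaref-preservation} show, \textbf{modal-irioco} and IR-MIA refinement are genuinely incomparable, so there is no hope of taking $s' = s$ or of reading off $s'$ from $i'$ by a simple simulation argument. Instead I would let the refinement steps taken on $i$ (resolving optional outputs to mandatory or dropped, redirecting underspecified inputs, inserting $\tau$-steps) \emph{guide} a parallel, output-side-faithful refinement of $s$, so that the suspension-trace structure relating $i$ and $s$ is preserved along the way.

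First I would fix an IR-MIA refinement relation $\mathcal{R}_i \subseteq i' \times i$ witnessing $i'\miarefforbidden i$ and, using $i\mathmiocophi s$, relate states of $i$ to states of $s$ reachable along common suspension traces via \maystraces and \mayafter. The idea is to build $s'$ as a refinement of $s$ whose may/must-transition modalities on outputs are chosen to \emph{match} those that $i'$ realizes: wherever $i'$ has resolved an optional output of $i$ into a mandatory one (Property~2 of \mioco concerns exactly mandatory outputs), I would correspondingly strengthen the matching output in $s$ to mandatory in $s'$ provided $s$ permitted it as a may-transition (clause~\ref{refine-must-o} and \ref{refine-may-o} of Definition~\ref{def:mia-refinement-forbidden} must both remain satisfiable); wherever $i'$ has dropped an optional output, I leave the corresponding may-transition in $s$ optional or remove it consistently, and wherever $i'$ redirects an underspecified input to $q_\Phi$, I either leave the input underspecified in $s$ (legitimate, since clause~\ref{refine-may-i} only constrains specified inputs) or forbid it in $s'$ as well. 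I would confirm that this construction yields a genuine IR-MIA satisfying all five structural properties of Definition~\ref{def:ir-mia} and that the identity-guided relation $\mathcal{R}_s \subseteq s' \times s$ is an IR-MIA refinement relation, so $s'\miarefforbidden s$ holds.

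The heart of the argument is then the verification of the two \mioco conditions for $i'\mathmiocophi s'$. For Property~1, I must show $\mayout(i'\mayafter\sigma)\subseteq\mayout(s'\mayafter\sigma)$ for every $\sigma\in\maystraces(s')$; here I would trace $\sigma$ back through $\mathcal{R}_i$ and $\mathcal{R}_s$ to a corresponding trace in $s$, invoke Property~1 of $i\mathmiocophi s$, and argue that refinement can only \emph{shrink} the may-outputs of $i'$ relative to $i$ (clause~\ref{refine-may-o}) while $s'$ retains, by construction, every may-output of $s$ needed to cover them --- taking care that the quiescence symbol $\delta$ and the failure symbol $\varphi$ are handled through Definition~\ref{def:modal-init-forbidden}, since a state that becomes quiescent or a must-failure under refinement contributes these symbols to $\mayout$. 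For Property~2, $\mustout(s'\mayafter\sigma)\subseteq\mustout(i'\mayafter\sigma)$ over $\sigma\in\maystraces(i')$, I would use that $s'$ was built to mark an output mandatory only when $i'$ already realizes it mandatorily, so the mandatory outputs of $s'$ are dominated by those of $i'$ by construction, again routing $\delta$ and $\varphi$ through the must-modal predicates $\delta_\Box$ and $\varphi_\Box$.

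The step I expect to be the main obstacle is guaranteeing the \emph{simultaneous} satisfaction of both \mioco directions by a single choice of $s'$: strengthening outputs in $s'$ to mandatory (needed to track Property~2) risks violating Property~1 if $i'$ does not in fact exhibit every such output after every reachable state, and the interaction with non-determinism (where $\mayafter$ returns a \emph{set} of states, as in the $q_3$ output-nondeterminism of Fig.~\ref{fig:ir-mia}) means the may/must output sets must be reconciled across all states in $i'\mayafter\sigma$ and $s'\mayafter\sigma$ at once. Here the hypothesis that $i$ is weak may-input-enabled, together with Lemma~\ref{lemma:may-input-enabledness-preservation} ensuring $i'$ remains (strong, hence weak) may-input-enabled, is what keeps the $\after_\Diamond$ sets non-empty and makes the input-driven exploration of suspension traces well-defined. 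The delicate bookkeeping lies in showing that the \emph{same} refinement decisions that make $s'\miarefforbidden s$ provable are exactly those that preserve both inclusions, which I would establish by an induction on the length of the suspension trace $\sigma$, propagating the invariant that corresponding state-sets in $i'$, $i$, $s$, and $s'$ remain pairwise related by $\mathcal{R}_i$, $\mathcal{R}_s$ and the \mioco trace-correspondence at each prefix.
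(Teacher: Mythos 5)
Your overall strategy is genuinely different from the paper's. The paper never builds $s'$ from $i'$ at all: it constructs a single \emph{unifying specification} $s_u$ intended to work for every refinement $i'\miarefforbidden i$ simultaneously. Starting from $s_u=s$, it argues that inputs need no modification (since $i$ is may-input-enabled, each input of $i'$ is may-failure, must-failure, or mandatory, all of which are compatible with inputs that are optional or unspecified in $s$), and the only repair concerns quiescence: every state of $s$ having only optional outputs receives a $\tau$-transition to a fresh must-quiescent state, so that $\delta\in\mayout(s_u\mayafter\sigma)$ whenever a refinement of $i$ can become quiescent there; then $s_u\miarefforbidden s$ and $i'\mathmiocophi s_u$ for all $i'$ is claimed. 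Your per-$i'$ construction is much heavier, but it is better matched to one phenomenon a uniform construction cannot absorb: a refinement $i'$ may \emph{prune} an optional output branch of $i$ leading to states that carry must-output obligations, while the trace $\sigma$ into that region survives in $i'$ through a different branch. Then the obligation stays in $\mustout(s\mayafter\sigma)$ but disappears from $\mustout(i'\mayafter\sigma)$, and no $s'$ that keeps all of $s$'s branches reachable can satisfy Property~2 of Def.~\ref{def:mioco}; the accompanying $s'$ must prune the corresponding optional branch of $s$, so it genuinely has to depend on $i'$, exactly as your plan allows.

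That said, two points in your sketch need repair. First, you read Property~2 in the wrong direction: it demands $\mustout(s'\mayafter\sigma)\subseteq\mustout(i'\mayafter\sigma)$, i.e., the specification's mandatory outputs must be dominated by the implementation's, so \emph{strengthening} outputs of $s'$ to mandatory wherever $i'$ made them mandatory is never needed --- it can only hurt. In particular, it can destroy may-quiescence of states of $s'$ (by Def.~\ref{def:modal-init-forbidden}, $\delta_\Diamond$ holds exactly when a state has no weak must-output), and thereby break the $\delta$-coverage you need for Property~1. The tension you single out as the main obstacle --- strengthening for Property~2 versus not over-strengthening for Property~1 --- dissolves entirely once you simply never strengthen. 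Second, and this is the real crux, your Property~2 verification (``the mandatory outputs of $s'$ are dominated by those of $i'$ by construction'') only accounts for must-outputs you chose to mark; it ignores the must-outputs $s'$ \emph{inherits} from $s$, which clause~\ref{refine-must-o} of Definition~\ref{def:mia-refinement-forbidden} forbids you to delete under $s'\miarefforbidden s$. The only way to discharge an inherited obligation that $i'$ no longer meets is to make its carrying state unreachable in $s'$ by dropping the optional may-transitions of $s$ that lead to it, mirrored against the may-output transitions that $i'$ dropped. In your write-up this appears only as the parenthetical option ``or remove it consistently''; it must be promoted to the central mechanism of the construction and threaded through your trace-length induction. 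With those corrections your plan goes through; as written, it is under-determined exactly where the theorem is hardest.
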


Note, that we refer to the \textbf{ioco}-relation in our \textbf{modal-irioco} in Clause 1 in Def.~\ref{def:mioco}.
Hence, in order to relate \textbf{modal-irioco} and \textbf{ioco}, we
define applications of \textbf{ioco} to IR-MIA by
considering the may-transition relation as the actual transition relation.

\begin{definition}[\ioco on \miaforbidden]\label{def:ioco-mia}
	Let $i$, $s$ be \miaforbidden, $i$ be weak may-input-enabled.
	Then, $i\mathioco s:\Leftrightarrow\forall\sigma\in \maystraces(s):\mayout(i\mayafter\sigma)\subseteq \mayout(s\mayafter\sigma)$.
\end{definition}

Based on this definition, we are able to prove correctness of \textbf{modal-irioco}.

\newcounter{soundness-completeness-mioco-counter}
\setcounter{soundness-completeness-mioco-counter}{\value{theorem}}
\begin{theorem}[\textbf{modal-irioco} is correct]\label{theorem:soundness-completeness-mioco}
	Let $i,s$ be \miaforbidden, $i$ being weak may-input-enabled.
	\begin{enumerate}
		\item If $i\mathmiocophi s$, then for all $i'\miarefforbidden i$,
		there exists $s'\miarefforbidden s$ such that $i'\mathioco s'$.
		\item If there exists $i'\miarefforbidden i$
		such that $i'\mathioco s'$ does not hold for any $s'\miarefforbidden s$,
		then $i\mathmiocophi s$ does not hold.
	\end{enumerate}
\end{theorem}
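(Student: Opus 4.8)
The plan is to reduce both parts to the refinement-compatibility result, Theorem~\ref{theorem:mioco-refinement-compatability}, together with a purely definitional observation relating the two conformance notions. The crucial point is that Clause~1 of \textbf{modal-irioco} in Def.~\ref{def:mioco}, namely $\forall\sigma\in\maystraces(s):\mayout(i\mayafter\sigma)\subseteq\mayout(s\mayafter\sigma)$, coincides \emph{verbatim} with the defining condition of $\mathioco$ on \miaforbidden in Def.~\ref{def:ioco-mia}. Consequently, for any \miaforbidden $a,b$ whose conformance judgements are well-formed, $a\mathmiocophi b$ immediately implies $a\mathioco b$, simply by discarding the second clause.

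For Part~1 I would argue directly. Assume $i\mathmiocophi s$ and fix an arbitrary $i'\miarefforbidden i$. By Theorem~\ref{theorem:mioco-refinement-compatability} there exists $s'\miarefforbidden s$ with $i'\mathmiocophi s'$; since $i'\mathmiocophi s'$ is well-formed, $i'$ is in particular weak may-input-enabled, so that $\mathioco$ between $i'$ and $s'$ is defined. Unfolding $i'\mathmiocophi s'$ and retaining only its first clause yields exactly $i'\mathioco s'$ via Def.~\ref{def:ioco-mia}. As $i'$ was arbitrary, this establishes Part~1.

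For Part~2 I would observe that it is the contrapositive of Part~1, hence already proved. Writing $P$ for ``$i\mathmiocophi s$'' and $Q$ for ``for every $i'\miarefforbidden i$ there exists $s'\miarefforbidden s$ with $i'\mathioco s'$'', Part~1 asserts $P\Rightarrow Q$. The hypothesis of Part~2, ``there exists $i'\miarefforbidden i$ such that $i'\mathioco s'$ fails for every $s'\miarefforbidden s$'', is precisely $\neg Q$, and its conclusion is $\neg P$. Thus Part~2 is $\neg Q\Rightarrow\neg P$, logically equivalent to $P\Rightarrow Q$, and therefore follows from Part~1 with no further work.

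The statement carries no intrinsic difficulty: all the technical weight sits in Theorem~\ref{theorem:mioco-refinement-compatability}, which I am entitled to assume. The only points requiring care are (i)~recognising the definitional identity between Clause~1 of \textbf{modal-irioco} and $\mathioco$, (ii)~confirming that the $i'$ produced by the refinement-compatibility theorem inherits weak may-input-enabledness so that the derived $\mathioco$ judgement is well-formed, and (iii)~correctly negating the nested quantifiers in $Q$ to see that the two parts are contrapositives. The main (and only) risk is a bookkeeping slip in that quantifier negation rather than any genuine mathematical obstacle.
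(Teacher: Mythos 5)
Your proof is correct. For Part~1 you follow essentially the paper's route: the paper also reduces everything to Theorem~\ref{theorem:mioco-refinement-compatability} (it invokes the unifying specification $s_u$ constructed in that theorem's proof as a single witness $s'$ serving all $i'\miarefforbidden i$) and then uses exactly your definitional observation that Clause~1 of Def.~\ref{def:mioco} coincides verbatim with Def.~\ref{def:ioco-mia}, so that $i'\mathmiocophi s_u$ yields $i'\mathioco s_u$. Where you genuinely diverge is Part~2. The paper does \emph{not} argue by contraposition; it gives a second, direct semantic argument: it takes the hypothesized $i'$ for which $i'\mathioco s_u$ fails, extracts a trace $\sigma\in\maystraces(s_u)$ with $\mayout(i'\mayafter\sigma)\nsubseteq\mayout(s_u\mayafter\sigma)$, and then lifts this failure from $i'$ to $i$ (using the construction of $s_u$ and the fact that refining the may-input-enabled $i$ introduces no new may-traces) to conclude $\mayout(i\mayafter\sigma)\nsubseteq\mayout(s\mayafter\sigma)$, i.e.\ that $i\mathmiocophi s$ fails. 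Your observation that Part~2 is literally the contrapositive of Part~1 --- and your quantifier bookkeeping is exactly right, since the negation of ``$\forall i'\,\exists s'.\,i'\mathioco s'$'' is precisely Part~2's hypothesis --- makes that extra work unnecessary: your route is shorter, avoids the delicate trace-lifting step, and inherits its rigor entirely from Part~1, while what it gives up is the explicit counterexample trace that the paper's direct argument exhibits. Your point about well-formedness, reading weak may-input-enabledness of $i'$ off the judgement $i'\mathmiocophi s'$ asserted by Theorem~\ref{theorem:mioco-refinement-compatability} (or alternatively off the preservation argument behind Lemma~\ref{lemma:may-input-enabledness-preservation}), is likewise consistent with how the paper treats it.
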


Property~1 states soundness of \textbf{modal-irioco}.
However, the immediate inverse does not hold
as \textbf{ioco} does not guarantee mandatory
behaviors of $s$ to be actually implemented by $i$
(cf. Fig.~\ref{fig:soundness-converse-counter} for a counter-example where $i\mathioco s$ but not $i\mathmiocophi s$).
Instead, Property 2 states completeness of \textbf{modal-irioco}
in the sense that modal implementations $i$ are rejected if
at least one refinement of $i$ exists not conforming
to any refinement of specification $s$.
Finally, we conclude that $\miocophi$
becomes a preorder if being restricted
to input-enabled IR-MIA specifications.

\begin{figure}[tp]
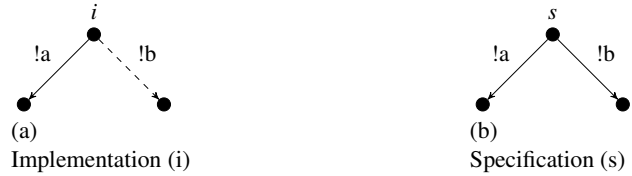

	\hfill
	\subfloat[\mbox{Implementation (i)}]{\label{fig:soundness-converse-counter-i}\input{figures/soundness-converse-counter-i.tex}}
	\hfill
	\subfloat[\mbox{Specification (s)}]{\label{fig:soundness-converse-counter-s}\input{figures/soundness-converse-counter-s.tex}}
	\hfill\strut
	\caption{Example for which the converse of soundness does not hold.
	For each variant $i'\miarefforbidden i$ it holds that $i'\mathioco s$ but $i\mathmioco s$ does not hold.
	Note, that $I=\emptyset$ such that the implementation is must-input-enabled.}\label{fig:soundness-converse-counter}
\end{figure}

\newcounter{mioco-preorder-counter}
\setcounter{mioco-preorder-counter}{\value{theorem}}
\begin{theorem}\label{theorem:mioco-preorder}
	\miocophi is a preorder on the set of weak may-input-enabled \miaforbidden.
\end{theorem}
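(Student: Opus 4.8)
The plan is to establish the two defining properties of a preorder, reflexivity and transitivity, separately. Reflexivity is immediate: for any weak may-input-enabled \miaforbidden $i$, instantiating Def.~\ref{def:mioco} with $s=i$ turns both clauses into inclusions of the form $X\subseteq X$, which hold trivially; note that $i$ lies in the carrier set, so the left-hand premise of Def.~\ref{def:mioco} is met. All the work therefore goes into transitivity: assuming $i\mathmiocophi s$ and $s\mathmiocophi r$ for weak may-input-enabled \miaforbidden $i,s,r$ over common $I,O$, I must derive $i\mathmiocophi r$.

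The key auxiliary observation I would isolate as a lemma is that Clause~1 of Def.~\ref{def:mioco} is precisely $\mathioco$ on the may-transition relation (Def.~\ref{def:ioco-mia}), and that on weak may-input-enabled \miaforbidden this coincides with may-suspension-trace inclusion, i.e. $i\mathioco s \iff \maystraces(i)\subseteq\maystraces(s)$. The backward direction needs no input-enabledness: for $\sigma\in\maystraces(s)$ and $\omega\in\mayout(i\mayafter\sigma)$ one has $\sigma\in\maystraces(i)$ and $\sigma\omega\in\maystraces(i)\subseteq\maystraces(s)$, so $\omega\in\mayout(s\mayafter\sigma)$. The forward direction is proved by induction on the length of $\sigma$, extending by inputs via weak may-input-enabledness of $s$ and by outputs, $\delta$, or $\varphi$ via Clause~1. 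I expect this forward direction to be the main obstacle, precisely because of the failure state: if $\sigma$ reaches only $q_\Phi$ in $s$, then $\mayout(s\mayafter\sigma)=\{\varphi\}$ and no input can be appended in $s$. Here one argues that Clause~1 forces $\mayout(i\mayafter\sigma)\subseteq\{\varphi\}$, so every non-failure state reached by $i$ after $\sigma$ would carry no (may-)output and hence be may-quiescent, contributing $\delta$ to $\mayout(i\mayafter\sigma)$ — a contradiction. This rules out the problematic case and lets the induction go through.

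Granting this characterization, transitivity of Clause~1 follows by chaining the inclusions. From $i\mathmiocophi s$ (Clause~1) and weak may-input-enabledness of $s$ I get $\maystraces(i)\subseteq\maystraces(s)$, and from $s\mathmiocophi r$ (Clause~1) and weak may-input-enabledness of $r$ I get $\maystraces(s)\subseteq\maystraces(r)$; hence $\maystraces(i)\subseteq\maystraces(r)$, and the backward direction of the lemma yields Clause~1 of $i\mathmiocophi r$.

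For Clause~2 I proceed pointwise. Given $\sigma\in\maystraces(i)$, the already-established inclusion $\maystraces(i)\subseteq\maystraces(s)$ gives $\sigma\in\maystraces(s)$, so Clause~2 of $s\mathmiocophi r$ applies to yield $\mustout(r\mayafter\sigma)\subseteq\mustout(s\mayafter\sigma)$, while Clause~2 of $i\mathmiocophi s$ applies directly (as $\sigma\in\maystraces(i)$) to yield $\mustout(s\mayafter\sigma)\subseteq\mustout(i\mayafter\sigma)$. Composing the two inclusions gives $\mustout(r\mayafter\sigma)\subseteq\mustout(i\mayafter\sigma)$, which is exactly Clause~2 of $i\mathmiocophi r$. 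Together with Clause~1 this gives $i\mathmiocophi r$, and with reflexivity establishes that $\mathmiocophi$ is a preorder on weak may-input-enabled \miaforbidden. The only delicate ingredient is the failure-state case in the forward direction of the characterization lemma; everything else is routine inclusion-chasing.
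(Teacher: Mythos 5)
Your proof is correct and takes essentially the same route as the paper's: reflexivity is immediate, and transitivity rests on the observation that Clause~1 together with weak may-input-enabledness forces may-suspension-trace inclusion (a trace can only leave the middle automaton's trace set via an output, $\delta$, or $\varphi$ — excluded by Clause~1 — or via an input — excluded by input-enabledness), after which the inclusions for both clauses chain by transitivity of $\subseteq$. If anything, your explicit handling of the corner case where the specification reaches only $q_\Phi$ (which is exempt from the input-enabledness requirement, so inputs cannot be appended there) is more careful than the paper's own proof, which passes over this case silently.
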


Must-input-enabledness (and therefore may-input-enabledness) of a specification $s$
may be achieved for any given IR-MIA by applying a
behavior-preserving canonical \emph{input completion}, while still
allowing arbitrary refinements of
previously unspecified inputs (instead of ignoring inputs as, \eg achieved by \emph{angelic completion}~\cite{Vaandrager1991}).
This construction essentially adapts the
notion of \emph{demonic completion}~\cite{Nicola1995}
from IOLTS to IR-MIA as follows.

\begin{figure}[tp]
	\hfill
	\subfloat[$s$]{\label{fig:demonic-completion-s}\input{figures/demonic-completion-s.tex}}
	\hfill
	\subfloat[$\Xi(s)$]{\label{fig:demonic-completion-sdc}\input{figures/demonic-completion-sdc.tex}}
	\hfill\strut
	\caption{Demonic completion adapted for \miaforbidden and \miocophi.}\label{fig:demonic-completion}
\end{figure}

\begin{definition}[Demonic Completion of IR-MIA]\label{def:demonic-completion}
	The \emph{demonic completion} of \miaforbidden $(Q,I,O,\mustarrow,\mayarrow,q_\Phi)$ with $\forall q\in Q: q\oversetmay{\tau}\Rightarrow q\oversetmust{\tau}$
	is a \miaforbidden $(Q',I,O,\mustarrow',\mayarrow',q_\Phi)$, where
	\begin{itemize}
		\item $Q'=Q\cup\{q_\chi,q_\Omega\}$ with $q_\chi,q_\Omega\notin Q$, and
		\item $\mustarrow'=\mustarrow\cup\{(q,i,q_\chi)\mid q\in Q, i\in I, q\not\oversetmust{i},q\not\oversetmust{\tau}\}\cup\{(q_\chi,\tau,q_\Omega)\}\cup\{(q_\chi,\lambda,q_\chi),(q_\Omega,\lambda,q_\Omega)\mid \lambda\in I\}$.
		\item $\mayarrow'=\mayarrow\cup\{(q,i,q_\chi)\mid q\in Q, i\in I, q\not\oversetmust{i},q\not\oversetmust{\tau}\}\cup\{(q_\chi,\tau,q_\Omega)\}\cup\{(q_\Omega,\lambda,q_\chi)\mid \lambda\in(I\cup O)\}\cup\{(q_\chi,\lambda,q_\chi),(q_\Omega,\lambda,q_\Omega)\mid \lambda\in I\}$.
	\end{itemize}
\end{definition}

The restriction imposed by $\forall q\in Q: q\oversetmay{\tau}\Rightarrow q\oversetmust{\tau}$
is due to weak input-enabled states not being input-enabled anymore if an optional $\tau$-transition is removed.
We refer to the demonic completion of \miaforbidden $s$ as $\Xi(s)$.

Figure~\ref{fig:demonic-completion} illustrates demonic completion.
As state $q_1$ of $s$ is not must-input-enabled,
a must-transition for action \textit{a} is added from $q_1$ to $q_\chi$.
The fresh states $q_\chi$ and $q_\Omega$ have outgoing must-transitions
for each $i\in I$, thus being (strong) must-input-enabled.
Additionally, $q_\chi$ in combination with $q_\Omega$ allow (but do not require)
every output $o\in O$ (in $q_\chi$ via one silent move), such that demonic
completion preserves underspecification.
We conclude that this construction preserves \textbf{modal-irioco}.

\newcounter{mioco-demonic-completion-counter}
\setcounter{mioco-demonic-completion-counter}{\value{theorem}}
\begin{theorem}\label{theorem:mioco-demonic-completion}
	Let $i$, $s$ be \miaforbidden with $i$ being weak must-input-enabled.
	Then $i\mathmiocophi \Xi(s)$ if $i\mathmiocophi s$.
\end{theorem}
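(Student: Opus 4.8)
The goal is to show that $i\mathmiocophi \Xi(s)$ follows from $i\mathmiocophi s$, under the hypothesis that $i$ is weak must-input-enabled. The plan is to verify the two clauses of Def.~\ref{def:mioco} separately, exploiting the fact that the demonic completion only \emph{adds} behavior reachable through the fresh states $q_\chi, q_\Omega$, and that this fresh behavior is carefully designed to be maximally permissive (thus never shrinking the allowed outputs in Clause~1) while being purely optional (thus never adding mandatory outputs that Clause~2 would force $i$ to match).

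\medskip
\noindent\emph{Clause~1 ($\mayout(i\mayafter\sigma)\subseteq\mayout(\Xi(s)\mayafter\sigma)$ for all $\sigma\in\maystraces(\Xi(s))$).} First I would partition the suspension traces of $\Xi(s)$ into those that stay within the original state set $Q$ and those that enter a newly added transition to $q_\chi$. For a trace $\sigma\in\maystraces(s)$ that never uses a completion transition, $\Xi(s)\mayafter\sigma = s\mayafter\sigma$, so the inclusion follows directly from $i\mathmiocophi s$ (its Clause~1). For a trace $\sigma$ whose execution in $\Xi(s)$ first leaves $Q$ at some prefix, observe that in $\Xi(s)$ the entry transition has label $i\in I$ with $q\not\oversetmust{i}$ in $s$, i.e.\ $\sigma$ extends a trace that was \emph{unspecified} in $s$. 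The key calculation here is that once $\Xi(s)$ has entered $\{q_\chi,q_\Omega\}$, by construction $q_\chi$ (via one $\tau$-step to $q_\Omega$ and the $I\cup O$ may-loops) may emit \emph{every} output in $O$ and is never quiescent, so $\mayout(\Xi(s)\mayafter\sigma)=O$ (plus possibly $\varphi$, never needed), which trivially contains $\mayout(i\mayafter\sigma)\subseteq O$. Thus Clause~1 holds on all such traces.

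\medskip
\noindent\emph{Clause~2 ($\mustout(\Xi(s)\mayafter\sigma)\subseteq\mustout(i\mayafter\sigma)$ for all $\sigma\in\maystraces(i)$).} Here the quantifier ranges over $\maystraces(i)$, which is unchanged. The essential observation is that the completion adds \emph{no} new mandatory output obligations reachable from the original states: the only new must-transitions target $q_\chi,q_\Omega$ and are labeled with inputs $\lambda\in I$ or the single $\tau$ from $q_\chi$ to $q_\Omega$, and the $I\cup O$ loops on $q_\Omega, q_\chi$ that carry outputs are only \emph{may}-transitions. Hence for any $\sigma$, the set $\mustout(\Xi(s)\mayafter\sigma)$ contains no more outputs than $\mustout(s\mayafter\sigma)$ does; indeed on the fresh states $\mustout$ yields at most $\{\delta\}$ and never an element of $O$. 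I would then split on whether the may-after set $\Xi(s)\mayafter\sigma$ contains fresh states: if it does not, the inclusion reduces to Clause~2 of $i\mathmiocophi s$; if it does, the fresh contribution to $\mustout$ must be checked not to introduce a $\delta$ obligation not already present in $i$ — this is where I expect the main subtlety.

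\medskip
\noindent\emph{Main obstacle.} The hard part is the treatment of \emph{quiescence} ($\delta$) on the completion states under Clause~2. Although $q_\chi$ and $q_\Omega$ carry only optional outputs, I must confirm that neither is \emph{must-quiescent} in a way that would add $\delta$ to $\mustout(\Xi(s)\mayafter\sigma)$ while $i$ after $\sigma$ is not must-quiescent. This is precisely why the construction routes $q_\chi\oversetmust{\tau}q_\Omega$ and equips both states with may-outputs for all of $O$: a must-$\tau$ keeps $q_\chi$ from being may-quiescent in the wrong modality, and the weak must-input-enabledness of $i$ guarantees $i$ can always proceed. I would therefore verify carefully, using Def.~\ref{def:modal-init-forbidden}, that $\delta_\Box$ never holds spuriously at a fresh state along any $\sigma\in\maystraces(i)$, and that the hypothesis ``$i$ weak must-input-enabled'' is exactly what rules out the remaining corner case where $i$ would fail to match a $\delta$ demanded after an input-prefix that the completion newly enables. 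Once this $\delta$-bookkeeping is discharged, both clauses follow and the theorem is established.
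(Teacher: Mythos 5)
Your overall strategy matches the paper's: split the traces of $\Xi(s)$ into those already in $\maystraces(s)$ (where $i\mathmiocophi s$ applies) and those entering the fresh states, then argue that the fresh states are maximally permissive for Clause~1 and contribute no obligations for Clause~2. However, your Clause~1 argument contains a genuine error that would break the proof. You claim that $q_\chi$ ``is never quiescent,'' that consequently $\mayout(\Xi(s)\mayafter\sigma)=O$ (with $\varphi$ ``never needed''), and that $\mayout(i\mayafter\sigma)\subseteq O$. All three claims are false: $\mayout$ ranges over $O\cup\{\delta,\varphi\}$, and a weak must-input-enabled implementation can perfectly well be may-quiescent, or contain a may-failure state, after a trace involving an input unspecified in $s$ --- in which case $\delta$ (resp.\ $\varphi$) belongs to $\mayout(i\mayafter\sigma)$ and must be matched on the specification side. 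The construction works for exactly the opposite reason of the one you give: since all output transitions at $q_\chi$ and $q_\Omega$ are may-only, both fresh states \emph{are} may-quiescent ($\mustinit\subseteq I$), and since the transitions $(q_\Omega,\lambda,q_\chi)$ with $\lambda\in I$ are optional, $q_\chi$ \emph{is} a may-failure; hence $\mayout(\Xi(s)\mayafter\sigma)=O\cup\{\delta,\varphi\}$ is the full set, which is precisely what the paper establishes. Under your reading (fresh states never quiescent), Clause~1 would in fact fail for the trivial implementation that accepts the unspecified input and then emits nothing.

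On Clause~2 your direction is right (the fresh states contribute no must-outputs), but you leave the decisive point --- that $\delta\notin\mustout$ at the fresh states --- as an unresolved ``subtlety,'' and your proposed resolution via the weak must-input-enabledness of $i$ is off target. The reason is again a property of the construction alone: must-quiescence requires $\mayinit(p)\subseteq I$, yet $q_\Omega$ carries optional output transitions for every $o\in O$, reachable from $q_\chi$ through the must-$\tau$, so neither fresh state is must-quiescent; and neither is a must-failure, since neither equals $q_\Phi$ and the fresh states have no transitions into $q_\Phi$. Hence $\mustout(\Xi(s)\mayafter\sigma)=\emptyset$ on all traces leaving $\maystraces(s)$, which closes Clause~2 without any appeal to input-enabledness of $i$. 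With these two corrections your outline collapses onto the paper's proof; as written, it does not go through.
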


%
\section{Compositionality}\label{sec:compositionality}

Interface theories are equipped
with a (binary) \emph{interleaving parallel operator} $\parallel$
on interface specifications to define
interaction behaviors in systems
composed of multiple concurrently running components~\cite{Alfaro2001}.
Intuitively, transition $p\transrel{a}p'$, $a\in O_{P}$, of component $P$ synchronizes with
transition $q\transrel{a}q'$, $a\in I_{Q}$, of component $Q$, where the resulting synchronized action
$(p,q)\transrel{\tau}(p',q')$ becomes a silent move.
Modal interface theories generalize parallel composition
to \emph{multicast} communication (\ie one output action
synchronizes with all concurrently running
components having this action as input)
and explicit \emph{hiding} of synchronized
output actions~\cite{Raclet2011}.
According to MIA, we define parallel composition
on IR-MIA in two steps:
(1) standard parallel product $P_1 \otimesforbidden P_2$
on $\miaforbidden$ $P_1$, $P_2$, followed by
(2) parallel composition $P_1\parallelforbidden P_2$,
removing erroneous states $(p_1,p_2)$ from
$P_1 \otimesforbidden P_2$, where for an output
action of $p_1$, no corresponding input
is provided by $p_2$ (and vice versa).
In addition, all states $(p_1',p_2')$ from which
erroneous states are reachable are also removed (pruned)
from $P_1\parallelforbidden P_2$.

Concerning (1), we first require \emph{composability}
of $P_1$ and $P_2$ (\ie disjoint output actions).
In $P_1 \otimesforbidden P_2$, a fresh state $p_{12\Phi}$
serves as unified failure state.
The input alphabet of $P_1 \otimesforbidden P_2$ contains
all those inputs of $P_1$ and $P_2$ not being contained
in one of their output sets, whereas the output alphabet
of $P_1 \otimesforbidden P_2$ is the union
of both output sets.
The modality $\gamma$ of composed transitions
$(p_1,p_2)\transrel{\alpha}_\gamma (p_1',p_2')$
depends on the modality of the individual transitions.

\begin{definition}[IR-MIA Parallel Product]\label{def:parallel-product}
	\miaforbidden $P_1$, $P_2$ are \emph{composable} if $O_1 \cap O_2 = \emptyset$.
	The \emph{parallel product} is defined as $P_1\otimesforbidden P_2 = ((P_1\times P_2)\cup\{q_{12\Phi}\},I,O,\mustarrow,\mayarrow,p_{12\Phi})$, where $I=_\textit{def}(I_1\cup I_2)\setminus(O_1\cup O_2)$ and $O =_\textit{def} O_1\cup O_2$, and where $\mustarrow$ and $\mayarrow$ are the least relations satisfying the following conditions:
	\begin{tabbing}
		(May2/Must2)\quad \= $(p_1,p_2)\transrel{\alpha}_\gamma(p_1',p_2')$\quad \= if\quad \= \kill
		(May1/Must1) \> $(p_1,p_2)\transrel{\alpha}_\gamma (p_1',p_2)$ \> if \> $p_1\transrel{\alpha}_\gamma p_1'$ and $\alpha\notin A_2$ \\
		(May2/Must2) \> $(p_1,p_2)\transrel{\alpha}_\gamma (p_1,p_2')$ \> if \> $p_2\transrel{\alpha}_\gamma p_2'$ and $\alpha\notin A_1$ \\
		(May3/Must3) \> $(p_1,p_2)\transrel{a}_\gamma (p_1',p_2')$ \> if \> $p_1\transrel{a}_\gamma p_1'$ and $p_2\transrel{a}_\gamma p_2'$ for some $a$ \\
		(May4/Must4) \> $(p_1,p_2)\transrel{a}_\gamma p_{12\Phi}$ \> if \> $p_1\transrel{a}_\gamma p_1'$ and $p_2\not\transrel{a}_\gamma$ for some $a\in I_1\cap A_2$ \\
		(May5/Must5) \> $(p_1,p_2)\transrel{a}_\gamma p_{12\Phi}$ \> if \> $p_2\transrel{a}_\gamma p_2'$ and $p_1\not\transrel{a}_\gamma$ for some $a\in I_2\cap A_1$.
	\end{tabbing}
\end{definition}

Rules \emph{(May1/Must1)} and \emph{(May2/Must2)} define interleaving of 
transitions labeled with actions being exclusive to one of both components; 
whereas Rule \emph{(May3/Must3)} 
synchronizes transitions with common actions, and the Rules \emph{(May4/Must4)} and \emph{(May5/Must5)} 
forbid transitions of a component labeled with inputs being common to both components, but not being supported by the other component.
Concerning (2), we define $E\subseteq P_1\times P_2$
to contain \emph{illegal state pairs} $(p_1,p_2)$ in $P_1\otimesforbidden P_2$.

\begin{definition}[Illegal State Pairs]\label{def:illegal-states}
Given a parallel product $P_1\otimesforbidden P_2$, a state $(p_1,p_2)$ is a \emph{new error}
if there exists $a\in A_1\cap A_2$ such that
		\begin{itemize}
			\item $a\in O_1$, $p_1\oversetmay{a}$ and $p_2\not\oversetmust{a}$, or
			\item $a\in O_2$, $p_2\oversetmay{a}$ and $p_1\not\oversetmust{a}$, or
			\item $a\in O_1$, $p_1\oversetmay{a}$ and $p_2\oversetmay{a}p_{2\Phi}$, or
			\item $a\in O_2$, $p_2\oversetmay{a}$ and $p_1\oversetmay{a}p_{1\Phi}$.
			\smallskip
		\end{itemize}
	The relation $E\subseteq P_1\times P_2$ containing
	\emph{illegal state pairs} is the least relation such that $(p_1,p_2)\in E$ if
	\begin{itemize}
		\item $(p_1,p_2)$ is a new error, or
		\item $(p_1,p_2)\oversetmust{\omega}(p_1',p_2')$ with $\omega\in(O\cup\{\tau\})$ and $(p_1',p_2')\in E$.
	\end{itemize}
\end{definition}

If the initial state of $P_1\otimesforbidden P_2$ is illegal (\ie $(p_{01},p_{02})\in E$),
it is replaced by a fresh initial state without incoming and outgoing
transitions such that $P_1$ and $P_2$ are considered \emph{incompatible}.

\begin{definition}[IR-MIA Parallel Composition]\label{def:parallel-composition}
	The \emph{parallel composition} $P_1\parallelforbidden P_2$ of
	$P_1\otimesforbidden P_2$ is obtained by pruning illegal states as follows.
	\begin{itemize}
	\item transitions leading to a state of the form
	$(q_{1\Phi},p_2)$ or $(p_1,q_{2\Phi})$ are redirected to $q_{12\Phi}$.
	\item states $(p_1,p_2)\in E$  and all unreachable states
	(except for $q_{12\Phi}$) and all their incoming and
	outgoing transitions are removed.
	\item for states $(p_1,p_2)\notin E$
	and $(p_1,p_2)\oversetmay{i}(p_1',p_2')\in E$, $i\in I$,
	all transitions $(p_1,p_2)\oversetmay{i}(p_1'',p_2'')$ are removed.
	\end{itemize}
	If $(p_1,p_2)\in P_1\parallelforbidden P_2$, we write $p_1\parallelforbidden p_2$ and call $p_1$ and $p_2$ \emph{compatible}.
\end{definition}

For example, consider $P'=Q\parallelforbidden D$
(cf. Fig.~\ref{fig:ir-mia-spec} and Fig.~\ref{fig:comp-example}).
Here, $q_0$ of both $Q$ and $D$ have action \emph{a} as
common action thus being synchronized to become an output action in $P'$
(to allow multicast communication).
Action \emph{a} is mandatory in $P'$ as \emph{a} is mandatory in both $Q$ and $D$.
In any other case, the resulting transition modality becomes optional.
Further common actions (\ie \emph{b} and \emph{f}) are treated similarly under composition.
In contrast, transitions with
actions being exclusive to $Q$ or $D$ are preserved under composition.
As $Q\otimesforbidden D$ contains no illegal states, no
pruning is required in $P'=Q\parallelforbidden D$.
In contrast, assuming, \eg one of the inputs \emph{a}
of $Q$ being optional instead, then the initial state
of $P'$ would become illegal as $a\in O_D$, $p_D\oversetmay{a}$ and $p_Q\not\oversetmust{a}$,
and $Q$ and $D$ would be incompatible.

\begin{figure}[tp]
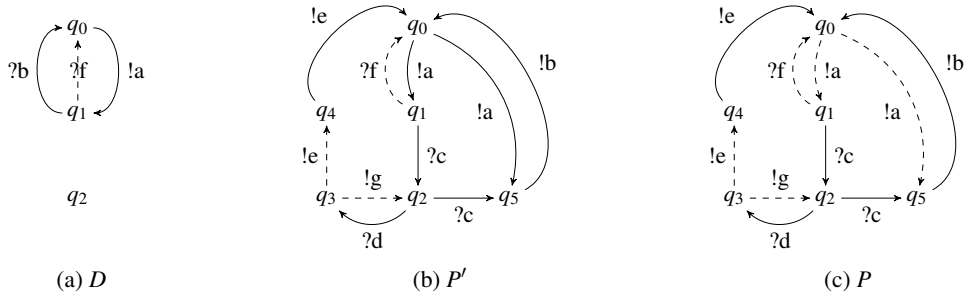

	\hfill
	\subfloat[$D$]{\label{fig:comp-example-d}\input{figures/decomp-example-divisor.tex}}
	\hfill
	\subfloat[$P'$]{\label{fig:comp-example-pprime}\input{figures/comp-example.tex}}
	\hfill
	\subfloat[$P$]{\label{fig:decomp-example}\input{figures/decomp-example-spec.tex}}
	\hfill\strut
	\caption{Example for Parallel Composition with Multicast and Quotienting (\cf Sect.~\ref{sec:decompositionality})}\label{fig:comp-example}
\end{figure}

We obtain the following compositionality
result for \textbf{modal-irioco}
with respect to parallel composition with multicast
communication.

\newcounter{compositionality-multicast-counter}
\setcounter{compositionality-multicast-counter}{\value{theorem}}
\begin{theorem}[Compositionality of \textbf{modal-irioco}]\label{theorem:compositionality-multicast}
	Let $s_1$, $s_2$, $i_1$, and $i_2$ be \miaforbidden with $i_1$ and $i_2$ being strong must-input-enabled, and $s_1$ and $s_2$ being compatible.
	Then it holds that
	$\left(i_1\mathmiocophi s_1\land i_2\mathmiocophi s_2\right)\Rightarrow i_1\parallelforbidden i_2\mathmiocophi s_1\parallelforbidden s_2$.
\end{theorem}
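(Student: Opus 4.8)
The goal is to show that parallel composition preserves \textbf{modal-irioco}: from $i_1\mathmiocophi s_1$ and $i_2\mathmiocophi s_2$ we must derive both clauses of Def.~\ref{def:mioco} for the composed systems $i_1\parallelforbidden i_2$ and $s_1\parallelforbidden s_2$. The natural strategy is to reason over suspension traces and to relate the behavior of a composite after a trace $\sigma$ to the behaviors of its two components after suitably projected traces. First I would establish a projection lemma: for $\sigma\in(I\cup O\cup\{\delta,\varphi\})^*$ there are projections $\sigma_1,\sigma_2$ onto the alphabets of the two components such that a composite state $(p_1,p_2)$ lies in $(P_1\parallelforbidden P_2)\mayafter\sigma$ essentially iff $p_1\in P_1\mayafter\sigma_1$ and $p_2\in P_2\mayafter\sigma_2$ (modulo synchronization of shared actions and the handling of $\tau$ arising from multicast). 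The key observations are that shared actions are synchronized (Rule May3/Must3), that exclusive actions interleave (May1/Must1, May2/Must2), and that the composed transition modality is the meet of the component modalities.

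\begin{proof}[Proof sketch]
Let $i=i_1\parallelforbidden i_2$ and $s=s_1\parallelforbidden s_2$. The argument proceeds by relating traces and outputs of a composite to those of its components via projection onto the respective alphabets $A_1$ and $A_2$, writing $\sigma_k$ for the projection of a trace $\sigma$ onto component $k$.

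\emph{Clause~1 (may-outputs of $i$ within $s$).} Fix $\sigma\in\maystraces(s)$ and let $\mu\in\mayout(i\mayafter\sigma)$. By the projection lemma, $\sigma_k\in\maystraces(s_k)$ and $\mu$ arises from an enabled may-output (or may-quiescence, or may-failure) of some reachable $(p_1,p_2)\in i\mayafter\sigma$, hence from the corresponding component states $p_k\in i_k\mayafter\sigma_k$. A plain output $\mu\in O_k$ must already be a may-output of $i_k$ after $\sigma_k$; by the hypothesis $i_k\mathmiocophi s_k$, clause~1 gives $\mu\in\mayout(s_k\mayafter\sigma_k)$, which lifts back to $\mu\in\mayout(s\mayafter\sigma)$. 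For $\mu=\delta$ one argues that may-quiescence of a composite requires may-quiescence of both components (no enabled may-output in either), so $\delta$ propagates through both hypotheses; for $\mu=\varphi$ one uses that a may-failure in the product stems from a may-failure in a component, handled analogously through the $\varphi$-entries of $\mayout$.

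\emph{Clause~2 (must-outputs of $s$ within $i$).} Fix $\sigma\in\maystraces(i)$ and $\mu\in\mustout(s\mayafter\sigma)$. Projecting, $\sigma_k\in\maystraces(i_k)$, and $\mu$ is a must-output of $s$ reachable after $\sigma$, hence a must-output of some component $s_k$ after $\sigma_k$ (for shared/synchronized outputs one uses that the must-modality of a composed transition is present only if both contributing component transitions are must, so the obligation devolves onto the owning component). By hypothesis $i_k\mathmiocophi s_k$, clause~2 yields $\mu\in\mustout(i_k\mayafter\sigma_k)$, and the composition rules lift this to $\mu\in\mustout(i\mayafter\sigma)$, noting that the input-enabledness of the other component (strong must-input-enabledness is assumed) guarantees the corresponding synchronizing input is available so the composed must-transition actually exists.
\end{proof}

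\textbf{Main obstacles.} The hard part is the projection lemma in the presence of (a) multicast synchronization turning matched output/input pairs into internal $\tau$ steps, and (b) the pruning of illegal states in passing from $\otimesforbidden$ to $\parallelforbidden$. For (a) I must track how $\delta$ and $\varphi$ observations on the composite correspond to the component level, since a synchronized action is hidden and thus affects quiescence; the definitions of $\delta_\gamma$ and $\varphi_\gamma$ via $\mayinit$/$\mustinit$ must be shown to factor through the product. For (b) I must argue that pruning does not remove any reachable state relevant to the conformance obligations — this is where compatibility of $s_1,s_2$ and strong must-input-enabledness of $i_1,i_2$ are essential, since input-enabled implementations never generate new errors that the specifications did not already rule out, so the pruned states on the implementation side mirror those on the specification side. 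Verifying that the modality-meet behavior of composed transitions interacts correctly with the asymmetric roles of may (clause~1) and must (clause~2) is the most delicate bookkeeping, and I expect the bulk of the formal work to live there.
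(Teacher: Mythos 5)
Your high-level plan is the same as the paper's: prove the two clauses of Def.~\ref{def:mioco} separately, reduce outputs of $i_1\parallelforbidden i_2$ after a trace to outputs of the components, apply the hypotheses $i_k\mathmiocophi s_k$, and isolate pruning of illegal states as the crux to be discharged via compatibility of $s_1,s_2$ and strong must-input-enabledness of $i_1,i_2$. But the proposal contains one factual error and one genuine gap. The error: your projection lemma is set up for the wrong operator. Under $\parallelforbidden$ (Defs.~\ref{def:parallel-product} and~\ref{def:parallel-composition}), rule (May3/Must3) keeps a synchronized action \emph{visible} --- it remains an output of the composition, since $O=O_1\cup O_2$; this is exactly what multicast means here, as further components may still receive that output. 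Synchronization yields $\tau$ only for the separate hiding operator $\hidingforbidden$ of Def.~\ref{def:parallel-hiding}, which is the subject of Theorem~\ref{theorem:compositionality-hiding}, not of this theorem. So the ``handling of $\tau$ arising from multicast'' and its effect on quiescence that you plan to track do not exist here; this actually simplifies your projection lemma, but the $\delta$/$\varphi$ analysis you sketch would have to be redone on the actual rules (where, e.g., may-quiescence of a \emph{legal} composite state does reduce to may-quiescence of both components, but only because Def.~\ref{def:illegal-states} classifies as errors, and pruning removes, states in which a shared may-output lacks a matching must-input).

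The genuine gap is that the pruning argument you defer to the ``obstacles'' paragraph is where essentially all of the paper's proof effort lies, and the one sentence you give about it points in the wrong direction for clause 1. The danger for clause 1 is pruning that occurs in $s_1\parallelforbidden s_2$ but \emph{not} in $i_1\parallelforbidden i_2$: for instance, an optional shared input of $s_1$ that $i_1$ implements as mandatory (which $\mathmiocophi$ and refinement permit) creates a new error in the specification product by Def.~\ref{def:illegal-states} (it has $p_2\oversetmay{a}$, $a\in O_2$, and $p_1\not\oversetmust{a}$), while no error arises in the implementation product because $i_1$ must-accepts every input. The specification composition then loses behavior that the implementation composition retains, directly threatening $\mayout(i_1\parallelforbidden i_2\mayafter\sigma)\subseteq\mayout(s_1\parallelforbidden s_2\mayafter\sigma)$. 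The paper closes this hole by inspecting Defs.~\ref{def:illegal-states} and~\ref{def:parallel-composition}: illegality propagates backwards only along must-output/$\tau$ transitions, so what gets removed at the boundary of the remaining legal states are \emph{input} transitions; hence every trace affected by specification-side-only pruning drops out of $\maystraces(s_1\parallelforbidden s_2)$ and is never quantified over in clause 1. Your claim that ``the pruned states on the implementation side mirror those on the specification side'' is, roughly, the fact the paper uses for clause 2 (implementation-side pruning is always matched by specification-side pruning); it does not address the clause-1 direction at all, so as it stands the proposal does not establish clause 1.
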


Theorem~\ref{theorem:compositionality-multicast}
is restricted to must-input-enabled implementations as
the input of an input/output pair has to be mandatory
(otherwise leading to an illegal state).
We further require \emph{strong} input-enabledness as
inputs in an input/output pair have to immediately react
to outputs (otherwise, again, leading to an illegal state).
Next, we show that IR-MIA parallel composition is \emph{associative}, thus facilitating multicast communication among multiple IR-MIA components being composed in arbitrary order.

\newcounter{parallel-composition-associativity-counter}
\setcounter{parallel-composition-associativity-counter}{\value{lemma}}
\begin{lemma}[Associativity of IR-MIA Parallel Composition]\label{lemma:parallel-composition-associativity}
	Let $P$, $Q$, $R$ be IR-MIA.
	It holds that $(P\parallelforbidden Q)\parallelforbidden R=P\parallelforbidden(Q\parallelforbidden R)$.
\end{lemma}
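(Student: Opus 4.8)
The plan is to prove the identity in two stages that mirror the two-step definition of $\parallelforbidden$: first establish associativity of the underlying parallel product $\otimesforbidden$ up to the canonical identification of states, and then show that the subsequent pruning of illegal states is insensitive to the order in which the binary compositions are formed. Throughout, the claimed equality is to be read modulo the state renaming $((p,q),r)\mapsto(p,(q,r))$ between $(P\otimesforbidden Q)\otimesforbidden R$ and $P\otimesforbidden(Q\otimesforbidden R)$, where all unified failure states (the intermediate $q_{12\Phi}$, $q_{23\Phi}$ and the outer ones) are identified with a single $\Phi$.

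For the product stage I would first check that the alphabets coincide. Since $O=O_1\cup O_2\cup O_3$ under either grouping, and the final input set subtracts $O_1\cup O_2\cup O_3$ regardless of how the intermediate input sets are built, we obtain $I=(I_1\cup I_2\cup I_3)\setminus(O_1\cup O_2\cup O_3)$ both ways; composability, i.e.\ pairwise output-disjointness, is itself symmetric and associative. Then I would verify that the two transition relations agree by a case analysis over the rules (May1/Must1)--(May5/Must5), classifying each action by which of the three alphabets contain it and whether it is input or output. The interleaving rules and the multicast synchronisation rule (May3/Must3) compose associatively because ``synchronise with every component sharing the action'' does not depend on the grouping, while the failure-directed rules (May4/Must4), (May5/Must5) must be checked to land in the identified $\Phi$ in both groupings. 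The modality $\gamma\in\{\Diamond,\Box\}$ of a composed step is determined solely by the participating local transitions (a must-step only when all synchronising components contribute must-steps), so it is preserved under the bijection.

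For the pruning stage I would compare the illegal-state relation $E$ across the two groupings. A reachable triple is a new error exactly when some shared output of one component cannot be matched by a mandatory input (or is explicitly refused) in a component meant to receive it; this condition is phrased purely in terms of the three local states and common actions, hence is grouping-independent, and its backward closure under $\oversetmust{\omega}$ with $\omega\in(O\cup\{\tau\})$ is likewise grouping-independent. The genuine difficulty is that $\parallelforbidden$ prunes at each binary step: forming $P\parallelforbidden Q$ first removes states and redirects refusal transitions before $R$ is considered, so I must argue that no state removed (or retained) early would be treated differently in the fully grouped composition.

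The hard part will be exactly this commutation of early pruning with the outer composition. The plan is to show that each of the three pruning operations, namely (i) redirecting transitions into $(q_{1\Phi},p_2)$/$(p_1,q_{2\Phi})$ to the unified failure state, (ii) removing $E$-states together with the states from which an error is reachable via $\oversetmust{\omega}$, and (iii) deleting optional inputs whose target is illegal, each commutes with taking a further product with $R$ and with the subsequent outer pruning. Concretely, I would prove that a reachable triple is illegal in the fully grouped composition iff it is illegal when the inner pair is composed first, using that new errors involve only common actions of two components at a time and that must-reachability of an error is both preserved and reflected by the product bijection. I expect the bookkeeping for clause (iii) to be the most delicate, since deleting an optional input transition alters the may-relation that the outer composition and its error analysis subsequently observe; I would handle it by checking that such inputs are common to the inner pair only, so their removal is invisible to, and exactly mirrored by, the other grouping.
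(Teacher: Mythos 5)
Your route is genuinely different from the paper's. The paper performs no three-way analysis of the composition rules at all: it defines a transformation $\mathcal{T}$ from MIA with universal state (Bujtor et al.) to IR-MIA, proves $\mathcal{T}(P)\parallelforbidden\mathcal{T}(Q)\cong\mathcal{T}(P\parallelallowed Q)$, and then inherits associativity of $\parallelforbidden$ from the known associativity of $\parallelallowed$. Your plan is instead a self-contained direct proof, and you correctly locate the crux in the commutation of the per-step pruning with the outer composition. Before that, one smaller imprecision in your stage one: the raw product $\otimesforbidden$ is not associative even up to your proposed identification, because pair states such as $(q_{12\Phi},p_3)$ retain outgoing transitions by rule (May2/Must2) for actions exclusive to the third component, while the single identified failure state must have none; you would need to fold the redirect step of Def.~\ref{def:parallel-composition} into stage one (i.e., compare ``product followed by redirection'' rather than the raw products) before any bijection can exist.

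The genuine gap is in your treatment of clause (iii), and it sits exactly at the point you flag as most delicate. Your claim that inputs deleted by clause (iii) ``are common to the inner pair only, so their removal is invisible to, and exactly mirrored by, the other grouping'' is false, and the commutation fails there. Take $i\in I_P\cap I_Q\cap I_R$ and $a\in O_P\cap I_Q$ with $a\notin A_R$; let $P$ be $p_0\oversetmust{i}p_1\oversetmay{a}p_2$, let $Q$ be $q_0\oversetmust{i}q_1$ with $q_1$ having no $a$-transition, and let $R$ be $r_0\oversetmust{i}r_1$. In $P\parallelforbidden Q$ the pair reached after $i$ is a new error ($a\in O_P$, $p_1\oversetmay{a}$, $q_1\not\oversetmust{a}$), so clause (iii) deletes all $i$-transitions of the initial pair, leaving $i$ unspecified there. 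When this pruned composition is put in parallel with $R$, rule (May5/Must5) fires --- $r_0$ can do $i$, the left component cannot, and $i\in I_R\cap A_{PQ}$ --- so $(P\parallelforbidden Q)\parallelforbidden R$ acquires a refusal transition labeled $i$ into the failure state at its initial state. In $P\parallelforbidden(Q\parallelforbidden R)$, the inner composition $Q\parallelforbidden R$ is error-free, the same error is detected only at the outer level, clause (iii) then deletes the $i$-transitions of the initial state, and no rule re-introduces them: the initial state ends with no $i$-transition at all. A refused input and an unspecified input are distinct in IR-MIA (they are treated differently both by refinement and by \miocophi), so your stage-two invariant (``illegal/removed in one grouping iff in the other'') cannot be established by the argument you sketch. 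Any direct proof must resolve this interaction between clause (iii) and the refusal-creating rules (May4/Must4), (May5/Must5) --- which is precisely the bookkeeping the paper's transformation argument avoids confronting head-on, at the price of resting on the transfer theorem and the external MIA result rather than on the IR-MIA rules themselves.
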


In addition, we show that compositionality of \textbf{modal-irioco} also holds if we combine multicast parallel composition with \emph{explicit hiding of outputs}, if specification $s$ has no $\tau$-steps.
For this, we first define \emph{parallel composition with hiding}.

\begin{definition}[IR-MIA Parallel Product and Composition with Hiding]\label{def:parallel-hiding}
	Two \miaforbidden $P_1$, $P_2$ are \textup{hiding composable (h-composable)} if $O_1 \cap O_2 = \emptyset$ and $I_1\cap I_2=\emptyset$.
	For such \miaforbidden we define the \textup{parallel product} $P_1\otimesforbiddenhiding P_2 = ((P_1\times P_2)\cup\{q_{12\Phi}\},I,O,\mustarrow,\mayarrow,q_{12\Phi})$, where $I=_\textit{def}(I_1\cup I_2)\setminus(O_1\cup O_2)$ and $O =_\textit{def} O_1\cup O_2$, and where $\mustarrow$ and $\mayarrow$ are the least relations satisfying the following conditions:
	\begin{tabbing}
		(May2/Must2)\quad \= $(p_1,p_2)\transrel{\alpha}_\gamma(p_1',p_2')$\quad \= if\quad \= \kill
		(May1/Must1) \> $(p_1,p_2)\transrel{\alpha}_\gamma (p_1',p_2)$ \> if \> $p_1\transrel{\alpha}_\gamma p_1'$ and $\alpha\notin A_2$ \\
		(May2/Must2) \> $(p_1,p_2)\transrel{\alpha}_\gamma (p_1,p_2')$ \> if \> $p_2\transrel{\alpha}_\gamma p_2'$ and $\alpha\notin A_1$ \\
		(May3/Must3) \> $(p_1,p_2)\transrel{\tau}_\gamma (p_1',p_2')$ \> if \> $p_1\transrel{a}_\gamma p_1'$ and $p_2\transrel{a}_\gamma p_2'$ for some $a$.
	\end{tabbing}
	From this parallel product with hiding, we obtain the \textup{parallel composition with hiding} $P_1\hidingforbidden P_2$ by the same pruning procedure as in Def.~\ref{def:parallel-composition}.
\end{definition}

We obtain the following compositionality result for \textbf{modal-irioco} with respect to parallel composition with hiding.

\newcounter{compositionality-hiding-counter}
\setcounter{compositionality-hiding-counter}{\value{theorem}}
\begin{theorem}[Compositionality of \miocophi Regarding Parallel Composition with Hiding]\label{theorem:compositionality-hiding}
	Let $s_1$, $s_2$, $i_1$, and $i_2$ be strongly must-input-enabled \miaforbidden.
	Then $\left(i_1\mathmiocophi s_1\land i_2\mathmiocophi s_2\right)\Rightarrow i_1\hidingforbidden i_2\mathmiocophi s_1\hidingforbidden s_2$ if $s_1$ and $s_2$ are compatible, $\forall q\in Q_{s1}:\forall i\in I_{s1}\cap O_{s2}:q\not\oversetmay{i}q_{s1\Phi}$, and $\forall q\in Q_{s2}:\forall i\in I_{s2}\cap O_{s1}:q\not\oversetmay{i}q_{s2\Phi}$.
\end{theorem}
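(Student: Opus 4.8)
The plan is to reduce Theorem~\ref{theorem:compositionality-hiding} to the already-established multicast compositionality result (Theorem~\ref{theorem:compositionality-multicast}) by factoring parallel composition with hiding into the ordinary multicast composition followed by hiding of the synchronized outputs. The key observation is that the only structural difference between $\otimesforbidden$ and $\otimesforbiddenhiding$ lies in Rule \emph{(May3/Must3)}: under hiding, a synchronized common action $a$ is relabeled to $\tau$, whereas under multicast it is kept as the (output) action $a$. Accordingly, I would first introduce (or invoke) an explicit hiding operator $\hiding$ that, given $i_1\parallelforbidden i_2$, turns exactly those synchronized outputs into $\tau$-steps, and argue that $i_1\hidingforbidden i_2$ coincides with $\hiding(i_1\parallelforbidden i_2)$ for the relevant actions (and symmetrically for the specifications). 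The h-composability requirement $I_1\cap I_2=\emptyset$ guarantees that synchronization only happens on genuine output/input pairs, so hiding is well-defined and does not interfere with the inputs that remain externally controllable.

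First I would prove, as an auxiliary step, that \textbf{modal-irioco} is preserved under hiding of synchronized outputs, i.e. that $i\mathmiocophi s$ implies $\hiding(i)\mathmiocophi \hiding(s)$ provided $s$ has no $\tau$-steps (the hypothesis that matches the side condition). The reason the no-$\tau$ restriction on $s$ is needed is that hiding introduces fresh $\tau$-transitions, and the suspension-trace machinery (\maystraces, \mayafter, \mayout, \mustout of Def.~\ref{def:modal-init-forbidden}) treats $\tau$ via weak transitions $\Transrel{\cdot}_\gamma$; the translation between suspension traces before and after hiding is clean only when the specification does not already carry silent moves that could merge with the newly hidden ones. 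Concretely, I would show that every $\sigma\in\maystraces(\hiding(s))$ lifts to a trace of $s$ obtained by reinserting the hidden action, and that the $\mayout$/$\mustout$ sets correspond under this lifting, so both clauses of Def.~\ref{def:mioco} transfer. The extra side conditions $\forall q\in Q_{s1}:\forall i\in I_{s1}\cap O_{s2}:q\not\oversetmay{i}q_{s1\Phi}$ (and symmetrically) ensure that no \emph{forbidden} input in one specification is a synchronized output of the other; this is exactly what prevents the hidden synchronization from routing into the failure state $q_\Phi$ in a way that would break the refusal bookkeeping of clause~5 of Def.~\ref{def:ir-mia} and the \mayout accounting of $\varphi$.

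Putting the pieces together, I would argue: by Theorem~\ref{theorem:compositionality-multicast} we obtain $i_1\parallelforbidden i_2\mathmiocophi s_1\parallelforbidden s_2$ from the hypotheses (strong must-input-enabledness and compatibility of $s_1,s_2$); applying the hiding-preservation lemma to this conclusion yields $\hiding(i_1\parallelforbidden i_2)\mathmiocophi\hiding(s_1\parallelforbidden s_2)$; finally, identifying $\hiding(i_1\parallelforbidden i_2)$ with $i_1\hidingforbidden i_2$ and $\hiding(s_1\parallelforbidden s_2)$ with $s_1\hidingforbidden s_2$ via Def.~\ref{def:parallel-hiding} gives the desired statement. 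I expect the main obstacle to be the hiding-preservation lemma itself, specifically the trace-correspondence argument: because hidden actions become $\tau$ and thus silent, one must verify that weak must-transitions and may-quiescence ($\delta_\gamma$) are neither spuriously created nor destroyed by the hiding step, and that the failure predicates $\varphi_\gamma$ remain consistent. This is where the side conditions forbidding $q_\Phi$-targeted synchronized inputs and the no-$\tau$-on-$s$ assumption do the real work, and where a careful induction on suspension traces — distinguishing whether the next action in $\sigma$ is an input, a surviving output, a $\delta$, or a $\varphi$ observation — will be required to keep the $\mayout$ and $\mustout$ set inclusions aligned across the hiding operation.
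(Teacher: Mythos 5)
Your overall factorisation coincides with the paper's: both proofs reduce $\hidingforbidden$ to multicast composition followed by explicit hiding (the paper proves $P\hidingforbidden Q=(P\parallelforbidden Q)\mathhiding S$ for $S=A_P\cap A_Q$), then apply Theorem~\ref{theorem:compositionality-multicast}, then a lemma that hiding preserves \miocophi{}. The gap is in that lemma: your version, ``$i\mathmiocophi s$ implies $(i\mathhiding L)\mathmiocophi(s\mathhiding L)$ provided $s$ has no $\tau$-steps,'' is false. Counterexample: let $L=\{h\}$ with outputs $h,b,c$ and input $x$; let $s$ be the $\tau$-free chain $s_0\oversetmust{!h}s_1\oversetmust{?x}s_2\oversetmust{!b}s_3$ (all transitions mandatory), and let $i$ consist of $i_0\oversetmust{!h}i_1\oversetmust{?x}i_2\oversetmust{!b}i_4$ together with the extra branch $i_0\oversetmust{?x}i_3\oversetmust{!c}i_5$ (with $?x$-loops added where needed for input-enabledness of $i$). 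Then $i\mathmiocophi s$ holds, because the extra branch begins with an input that is unspecified at $s_0$, so $x\notin\maystraces(s)$ and clause~1 of Def.~\ref{def:mioco} never inspects it, while clause~2 is vacuous there since $s\mayafter_\Diamond x=\emptyset$. After hiding $h$, however, the fresh $\tau$-step yields $s_0\Transrel{x}_\Diamond s_2$, so $x\in\maystraces(s\mathhiding L)$, and $\mayout\left((i\mathhiding L)\mayafter x\right)=\{b,c\}\not\subseteq\{b\}=\mayout\left((s\mathhiding L)\mayafter x\right)$. Hiding a specified output can expose exactly the unspecified-input freedom that \miocophi{} granted the implementation; $\tau$-freeness of $s$ does nothing to prevent this.

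What makes the hiding step sound is \emph{input-enabledness of the specification}, not $\tau$-freeness: the paper's auxiliary theorem states that hiding preserves \miocophi{} when $i$ is weakly must-input-enabled and $s$ is weakly \emph{may-input-enabled} (input-enabledness leaves no unspecified inputs for the new $\tau$'s to expose). This is also where the hypotheses your plan never uses do their work: the theorem assumes $s_1,s_2$ themselves are strongly must-input-enabled, and the side conditions $\forall q\in Q_{s1}:\forall a\in I_{s1}\cap O_{s2}:q\not\oversetmay{a}q_{s1\Phi}$ (and symmetrically) are used, via a dedicated lemma, to show that $s_1\parallelforbidden s_2$ is again must-input-enabled --- they rule out synchronizations into the failure state and the ensuing pruning, which are the only ways composition can destroy input-enabledness. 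Only then can the hiding-preservation theorem be applied to the composed pair. So your reading of the side conditions as ``refusal bookkeeping'' for the hidden actions misattributes their role, and your plan, which discards the specifications' input-enabledness entirely, cannot be repaired without reinstating exactly this propagation-of-input-enabledness step.
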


Similar to parallel composition with multicast, parallel composition with hiding is also associative with the restriction that $P$, $Q$ and $R$ do not synchronize on the same actions.

\newcounter{parallel-composition-hiding-associativity-counter}
\setcounter{parallel-composition-hiding-associativity-counter}{\value{lemma}}
\begin{lemma}[Associativity of IR-MIA Parallel Composition]\label{lemma:parallel-composition-hiding-associativity}
	Let $P$, $Q$, $R$ be IR-MIA.
	It holds that $(P\parallelforbidden Q)\parallelforbidden R=P\parallelforbidden(Q\parallelforbidden R)$ if pairwise intersection of $I_P\cap O_Q$ and $I_Q\cap O_P$ with $I_Q\cap O_R$ and $I_R\cap O_Q$ results in $\emptyset$.
\end{lemma}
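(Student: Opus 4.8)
The plan is to establish associativity up to isomorphism in the same two stages in which the operator itself is built, mirroring the argument for the multicast case (Lemma~\ref{lemma:parallel-composition-associativity}): first at the level of the parallel products $\otimesforbiddenhiding$ of Def.~\ref{def:parallel-hiding}, and then showing that the pruning step of Def.~\ref{def:parallel-composition} respects this isomorphism. Recall that the hiding product differs from the multicast product only in the synchronizing rule (May3/Must3), where a shared action $a$ is consumed into a $\tau$-step instead of remaining observable; this is precisely the source of the required alphabet side condition.

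First I would fix the canonical state bijection $h:((p,q),r)\mapsto(p,(q,r))$, extended by mapping the unified failure state of one grouping to that of the other. Under h-composability the output sets are pairwise disjoint, so both groupings carry the same output alphabet $O_P\cup O_Q\cup O_R$ and the same input alphabet $(I_P\cup I_Q\cup I_R)\setminus(O_P\cup O_Q\cup O_R)$; hence $h$ is alphabet-preserving and it remains to check that it transports transitions.

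The core step is to show, for each modality $\gamma\in\{\Diamond,\Box\}$, that $h$ is a transition isomorphism on the products. For the interleaving rules (May1/Must1) and (May2/Must2) this is a routine case distinction on which component moves. The delicate case is (May3/Must3): a synchronization between two components yields a $\tau$-step and thereby hides the action from the third component. I expect this to be the main obstacle. The hypothesis that $I_P\cap O_Q$ and $I_Q\cap O_P$ are disjoint from $I_Q\cap O_R$ and $I_R\cap O_Q$ guarantees that no action synchronized between $P$ and $Q$ is also an action on which $Q$ and $R$ would synchronize, thereby excluding any genuine three-way overlap. I would use this to argue that every handshake is strictly two-party, so a $\tau$-transition produced in $(P\hidingforbidden Q)\hidingforbidden R$ has a unique $h$-corresponding $\tau$-transition in $P\hidingforbidden(Q\hidingforbidden R)$, and conversely. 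Without the side condition an action could be hidden by the inner product of one grouping before reaching the partner with which the other grouping lets it synchronize, making the two sides disagree on the presence of a $\tau$-step; ruling this out is exactly what the restriction buys.

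Finally I would verify that pruning commutes with $h$. Since the product transition relations correspond under $h$ and the alphabets agree, each clause of the new-error condition in Def.~\ref{def:illegal-states} holds at a state iff it holds at its image, and error propagation along must-$\omega$ transitions is transported by the isomorphism; hence the illegal-state relations coincide. The redirection of transitions into the unified failure state, the deletion of illegal and unreachable states, and the removal of dangling optional-input transitions all depend only on data preserved by $h$, so the pruned automata are isomorphic. Identifying isomorphic IR-MIA then yields the claimed equality, completing the proof.
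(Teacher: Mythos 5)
Your route is genuinely different from the paper's. The paper disposes of this lemma in one line: it invokes the associativity of multicast composition (Lemma~\ref{lemma:parallel-composition-associativity}), which is itself proved not by a direct construction but by the transformation $\mathcal{T}$ to the MIA of Bujtor \etal together with an appeal to their associativity theorem (Theorem~\ref{theorem:multicast-composition-correctness}), and then observes that the present lemma merely adds the restriction that $P$, $Q$, $R$ never synchronize on the same actions. (Note also that the lemma as printed states the equality for $\parallelforbidden$ rather than $\hidingforbidden$, and the paper's one-line proof reads most naturally against that literal statement; you prove the intended hiding version, and there is no direct product-plus-pruning template in the paper for you to ``mirror''.) Two things in your sketch are exactly right: the alphabet computation, and the diagnosis of what the side condition buys --- without it a handshake could be consumed into a $\tau$ by the inner product of one grouping although the other grouping performs that handshake in its outer product; together with h-composability the condition makes the three pairwise synchronization sets disjoint, so every handshake is two-party and uniquely attributed.

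The genuine gap is in your pruning stage. $(P\hidingforbidden Q)\hidingforbidden R$ is not ``form $(P\otimesforbiddenhiding Q)\otimesforbiddenhiding R$, then prune once'': the construction alternates, i.e., it is prune--compose--prune, and the outer product is built from the \emph{already pruned} inner composition. Consequently the two outer illegal-state relations live on different carriers: a $P$--$Q$ incompatibility is detected and removed at the inner stage of the left grouping, but on the right it can only surface at the outer stage, inside $P\otimesforbiddenhiding(Q\hidingforbidden R)$, and symmetrically for $Q$--$R$ errors. So your assertion that ``each clause of the new-error condition of Def.~\ref{def:illegal-states} holds at a state iff it holds at its image, hence the illegal-state relations coincide'' is not yet well-posed: $h$ restricted to the states on which the two outer prunings are evaluated is not even a bijection until one proves a staged-pruning exchange lemma, namely that pruning the inner product, composing with the third component, and pruning again yields (up to $h$) the same automaton as forming the full nested product and removing every state from which some pairwise error is reachable. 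Establishing that exchange --- that backward error propagation along must-output/$\tau$ paths, the cutting of may-input transitions into illegal states, and the redirection into the unified failure state produce the same final result regardless of the stage at which each error is detected --- is the real content of a direct proof, and it is precisely what the paper sidesteps by reducing to Bujtor \etal. Two minor nits besides: the hiding product differs from the multicast one not only in (May3/Must3), it also drops (May4/Must4) and (May5/Must5); and h-composability additionally requires disjoint input alphabets, which your argument uses implicitly.
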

%
\section{Decompositionality}\label{sec:decompositionality}

Compositionality of \textbf{modal-irioco}
allows for decomposing I/O conformance testing
of systems consisting of several interacting components.
In particular, given two components
$c_1$, $c_2$ being supposed to implement corresponding specifications
$s_1$, $s_2$, then Theorem~\ref{theorem:compositionality-multicast}
ensures that if $c_1 \mathmiocophi s_1$ and $c_2 \mathmiocophi s_2$ holds, then
$c_1 \parallelforbidden c_2 \mathmiocophi s_1 \parallelforbidden s_2$
is guaranteed without the need for (re-)testing after composition.
However, in order to benefit from this property,
a mechanism is required to decompose specifications
$s = s_1 \parallelforbidden s_2$ and respective implementations
$i=c_1 \parallelforbidden c_2$, accordingly.
Interface theories therefore provide quotient operators $\sslash$
serving as the inverse to parallel composition
(\ie if $c_1 \parallel c_2 = c$ then $c\sslash c_1 = c_2$), where
$c_2$ is often referred to as \emph{unknown component}~\cite{Villa2011} or \emph{testing context}~\cite{Noroozi2013}.
We therefore adopt the quotient operator defined for MIA with universal state~\cite{Bujtor2015a}
to IR-MIA.
Similar to parallel composition, the quotient
operator is defined in two steps.
\begin{enumerate}
	\item The \emph{pseudo-quotient} $P\oslash D$
is constructed as appropriate communication partner (if exists)
for a given divisor $D$ with respect to the overall specification $P$.
	\item The \emph{quotient} $P\sslashforbidden D$ is derived
from $P\oslash D$, again, by pruning erroneous states.
\end{enumerate}
For this, we require $P$ and $D$ to be $\tau$-free and $D$ to be {\em may-deterministic}
(\ie $d\oversetmay{a} d'$ and $d\oversetmay{a} d''$ implies $d'=d''$).
In contrast to \cite{Bujtor2015a}, we restrict
our considerations to IR-MIA with at least one
state and one may-transition.
A pair $P$ and $D$ satisfying
these restrictions is called a {\em quotient pair}.

\begin{definition}[IR-MIA Pseudo-Quotient]\label{def:pseudo-quotient}
	Let $(P,I_P,O_P,\mustarrow,\mayarrow,p_\Phi)$ and $(D,I_D,O_D,\mustarrow,\mayarrow,d_\Phi)$ be a \miaforbidden quotient pair with $A_D\subseteq A_P$ and $O_D\subseteq O_P$.
	We set $I=_\textit{def}I_P\cup O_D$ and $O=_\textit{def}O_P\setminus O_D$.
	$P\oslash D=_\textit{def}(P\times D,I,O,\mustarrow,\mayarrow,(p_\Phi,d_\Phi))$, where the transition relations are defined by the rules:
	\begin{tabbing}
		(QMay7/QMust7)\; \= $(p,d)\oversetgamma{a}(p_\Phi,d_\Phi)$\; \= if\; \= \kill
		(QMay1/QMust1) \> $(p,d)\oversetgamma{a}(p',d)$ \> if \> $p\oversetgamma{a}p'\neq p_\Phi$ and $a\notin A_D$ \\
		(QMay2) \> $(p,d)\oversetmay{a}(p',d')$ \> if \> $p\oversetmay{a}p'\neq p_\Phi$ and $d\oversetmust{a}d'\neq d_\Phi$ \\
		(QMay3) \> $(p,d)\oversetmay{a}(p',d')$ \> if \> $p\oversetmay{a}p'\neq p_\Phi$, $d\oversetmay{a}d'\neq d_\Phi$ and $a\notin O_P\cap I_D$ \\
		(QMust2) \> $(p,d)\oversetmust{a}(p',d')$ \> if \> $p\oversetmust{a}p'\neq p_\Phi$ and $d\oversetmust{a}d'\neq d_\Phi$ \\
		(QMust3) \> $(p,d)\oversetmust{a}(p',d')$ \> if \> $p\oversetmay{a}p'\neq p_\Phi$, $d\oversetmay{a}d'\neq d_\Phi$ and $a\in O_D$ \\
		(QMay4/QMust4) \> $(p,d)\oversetgamma{a}(p_\Phi,d_\Phi)$ \> if \> $p\oversetgamma{a}p_\Phi$ and $d\not\oversetmust{a}d_\Phi$.
	\end{tabbing}
\end{definition}

The Rules \emph{(QMay1/QMust1)} to \emph{(QMust3)} require $p\neq p_\Phi$, as the special case $p = p_\Phi$ is handled by rule \emph{(QMay4/QMust4)}.
Rule \emph{(QMay1/QMust1)} concerns transitions with uncommon actions.
Rule \emph{(QMay2)} requires a mandatory transition with action in $D$ as composition requires input transitions labeled with common actions to be mandatory (the additional requirement of Rule \emph{(QMay3)} is stated for the same reason).
Rule \emph{(QMust3)} only requires transitions to be optional, because if $a\in O_D$ holds, then the resulting transition accepts as input a common action (which must be mandatory for the composition).

The quotient $P\sslashforbidden D$ is derived
from pseudo-quotient $P\oslash D$ by recursively pruning
all so-called \emph{impossible states} $(p,d)$ (\ie states
leading to erroneous parallel composition).

\begin{definition}[IR-MIA Quotient]\label{def:quotient}
	The set $G\subseteq P\times D$ of \emph{impossible states}
	of pseudo-quotient $P\oslash D$ is defined as the least set satisfying the rules:
	\begin{tabbing}
		(G4)\quad \= $p\oversetmust{a}p'\neq p_\Phi$ and $d\not\oversetmust{a}d_\Phi$ and $a\in A_D$\quad \= implies\quad \= \kill
		(G1) \> $p\oversetmust{a}p'\neq p_\Phi$ and $d\not\oversetmust{a}$ and $a\in A_D$ \> implies \> $(p,d)\in G$ \\
		(G2) \> $p\oversetmust{a} p_\Phi$ and $d\oversetmay{a}$ and $a\in O_D$ \> implies \> $(p,d)\in G$ \\
		(G3) \> $(p,d)\oversetmust{a}r$ and $r\in G$ \> implies \> $(p,d)\in G$.
	\end{tabbing}
	The \emph{quotient} $P\sslashforbidden D$ is obtained from $P\oslash D$
	by deleting all states $(p,d)\in G$ (and respective transitions).
	If $(p,d)\in P\sslashforbidden D$,
	then we write $p\sslashforbidden d$, and
	quotient $P\sslashforbidden D$ is defined.
\end{definition}

Rule \emph{(G1)} ensures that for a transition labeled with a common action, there is a corresponding transition in the divisor (otherwise, the state is \emph{impossible} and therefore removed).
Rule \emph{(G2)} ensures that a forbidden action of the specification is also forbidden in the divisor (otherwise, the state is considered \emph{impossible}).
Finally, Rule \emph{(G3)} (recursively) removes all states from which \emph{impossible} states are reachable.

For example, consider the quotient $Q=P\sslashforbidden D$
(\cf Fig.~\ref{fig:ir-mia-spec}, Fig.~\ref{fig:comp-example-d}, and Fig.~\ref{fig:decomp-example}).
A common action becomes input action
in $Q$ if it is an input action in both $P$
and $D$ (\eg \textit{f}), and likewise for output actions.
If a common action is output action of $P$ and input action of $D$,
then it becomes output of $Q$ (\eg \textit{b}).
In contrast, a common action must not be
input action of $P$ and output action of $D$ as composing
outputs with inputs always yields outputs.
Actions being exclusive to $P$ are treated
similar to parallel composition, whereas $D$ must not have
exclusive actions (\cf Def.~\ref{def:pseudo-quotient}).

\begin{figure}[tp]
	\centering
	\input{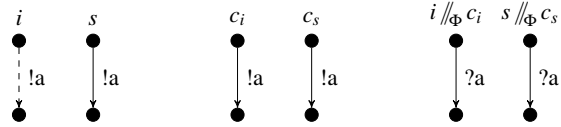}
	\caption{Example for the necessity of mandatory outputs for $i'\mathmiocophi s'\Rightarrow i\mathmiocophi s$ of Theorem~\ref{theorem:decompositionality}.}
	\label{fig:decompositionality}
\end{figure}

For \emph{decomposability} to hold for \textbf{modal-irioco}
(\ie $i\sslashforbidden c_i\mathmiocophi s\sslashforbidden c_s\land c_i\mathmiocophi c_s\Rightarrow i\mathmiocophi s$ ),
we further require $i$ to only have mandatory outputs
as illustrated in Fig.~\ref{fig:decompositionality}: here,
$i\mathmiocophi s$ does not hold, although
$c_i\mathmiocophi c_s$ and $i\sslashforbidden c_i\mathmiocophi s\sslashforbidden c_s$ holds.
This is due to the fact that optional outputs
combined with mandatory outputs become mandatory inputs in the quotient
(as parallel composition requires inputs of an input/output pair to be mandatory).
The following result ensures
that the quotient operator on IR-MIA indeed
serves (under the aforementioned restrictions)
as the inverse to parallel composition
with respect to \textbf{modal-irioco}.

\newcounter{decompositionality-counter}
\setcounter{decompositionality-counter}{\value{theorem}}
\begin{theorem}[Decompositionality of \textbf{modal-irioco}]\label{theorem:decompositionality}
	Let $i$, $s$, $c_i$, and $c_s$ be \miaforbidden with
	$i$ and $c_i$ being weak must-input-enabled and all output behaviors of $i$ being mandatory.
	Then $i\mathmiocophi s$ if
	$i\sslashforbidden c_i\mathmiocophi s\sslashforbidden c_s$ and $c_i\mathmiocophi c_s$.
\end{theorem}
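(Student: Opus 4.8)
The plan is to read Theorem~\ref{theorem:decompositionality} as the converse of the compositionality result (Theorem~\ref{theorem:compositionality-multicast}): since the quotient is built as the inverse of parallel composition, the natural route is to recover $i$ and $s$ (up to \miocophi) as the parallel compositions of their quotients with the respective contexts, and then push the two hypotheses $i\sslashforbidden c_i\mathmiocophi s\sslashforbidden c_s$ and $c_i\mathmiocophi c_s$ through compositionality. Concretely, I would aim to establish the chain
\[ i \mathmiocophi (i\sslashforbidden c_i)\parallelforbidden c_i \mathmiocophi (s\sslashforbidden c_s)\parallelforbidden c_s \mathmiocophi s, \]
where the middle step is Theorem~\ref{theorem:compositionality-multicast} applied to the two hypotheses, and the two outer steps are the quotient-inverse lemmas discussed next.

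The first key lemma is that recomposing the implementation reproduces it: under the assumption that all outputs of $i$ are mandatory (and $i$, $c_i$ are weak must-input-enabled), $(i\sslashforbidden c_i)\parallelforbidden c_i$ is \miocophi-equivalent to $i$, so in particular $i \mathmiocophi (i\sslashforbidden c_i)\parallelforbidden c_i$. The restriction on outputs is exactly what Fig.~\ref{fig:decompositionality} shows to be necessary: an optional output of $i$ would be turned into a \emph{mandatory input} by rules \emph{(QMay3)}/\emph{(QMust3)} of the pseudo-quotient and would survive recomposition as spurious mandatory behavior. I would prove this by exhibiting the relation pairing $p$ with $((p,d),d)$ and checking, transition by transition, that the product rules \emph{(May1)}--\emph{(May5)} undo the pseudo-quotient rules, so that may- and must-suspension traces together with their \mayout/\mustout sets coincide on both sides. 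The second lemma is the spec-side counterpart $(s\sslashforbidden c_s)\parallelforbidden c_s \mathmiocophi s$; here only conformance, not equivalence, is available, but this is precisely the direction the chain needs. It follows from the characterizing refinement property of the quotient, $(s\sslashforbidden c_s)\parallelforbidden c_s \miarefforbidden s$ (the adjunction adopted from~\cite{Bujtor2015a}), once one checks that the recomposition adds no underspecified-input may-traces of $s$: the phenomenon that makes \miarefforbidden and \miocophi incomparable in Fig.~\ref{fig:mioco-miaref-preservation} cannot arise here, since the recomposition only realizes behaviour on which $c_s$ cooperates.

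Assembling the chain requires transitivity of \miocophi, which by Theorem~\ref{theorem:mioco-preorder} holds on weak may-input-enabled IR-MIA, so I would additionally verify that the two recompositions are weak may-input-enabled (inherited from weak must-input-enabledness of $c_i$ through the product rules). The main obstacle is precisely these two recomposition lemmas together with reconciling the input-enabledness bookkeeping: Theorem~\ref{theorem:compositionality-multicast} is stated for \emph{strong} must-input-enabled implementations, whereas only \emph{weak} must-input-enabledness is assumed here, and the endpoint $s$ carries no input-enabledness assumption at all. I expect the cleanest way around this is to avoid invoking the preorder globally and instead unfold the two conditions of Def.~\ref{def:mioco} directly: prove condition~1 for $i,s$ by decomposing each $\sigma\in\maystraces(s)$ into synchronized component traces and bounding \mayout via condition~1 of both hypotheses, and prove condition~2 analogously from condition~2 of both hypotheses, using the mandatory-output restriction to guarantee that must-outputs of $s$ remain must-outputs after recomposition into $i$. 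Tracking the modality of synchronized transitions and the propagation of the failure state $q_\Phi$ (and the derived predicates $\delta_\gamma$, $\varphi_\gamma$) through both $\sslashforbidden$ and $\parallelforbidden$ is the technically delicate part.
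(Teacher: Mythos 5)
Your proposal is correct and follows the same skeleton as the paper's proof: recompose $i'=(i\sslashforbidden c_i)\parallelforbidden c_i$ and $s'=(s\sslashforbidden c_s)\parallelforbidden c_s$, obtain $i'\mathmiocophi s'$ via Theorem~\ref{theorem:compositionality-multicast}, and transfer the result back to $i$ and $s$ using the mandatory-output restriction. Where you differ is in how the outer links are established. The paper never proves your two recomposition lemmas at the \miocophi{} level; instead, the bulk of its proof shows that the IR-MIA quotient commutes with the transformation $\mathcal{T}$ from MIA with universal state to IR-MIA, i.e. $\mathcal{T}(P)\sslashforbidden\mathcal{T}(D)\cong\mathcal{T}(P\sslashallowed D)$ (a rule-by-rule comparison of the pseudo-quotients and of the pruned state sets), so that the quotient properties of Bujtor \etal~\cite{Bujtor2015a} can be imported to yield the \emph{modal refinements} $i'\miarefforbidden i$ and $s'\miarefforbidden s$; it then argues that weak must-input-enabledness fixes the input behaviour, and that since all outputs of $i$ are mandatory, refinement leaves no room for output differences between $i$ and $i'$ either, so $i'\mathmiocophi s'$ yields $i\mathmiocophi s$ directly, with no appeal to transitivity of \miocophi{}. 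Your route is self-contained (no detour through the universal-state model), but it must prove the recomposition lemmas from scratch and pay the transitivity bookkeeping you rightly flag: Theorem~\ref{theorem:mioco-preorder} needs the middle elements of your chain to be weak may-input-enabled, and $s'$ inherits no input-enabledness from the hypotheses, so your fallback of unfolding Def.~\ref{def:mioco} directly is not optional but the necessary move --- and it is in substance what the paper's final implication argument does. In return, your write-up is more careful than the paper's own on two points it glosses over: the mismatch between the strong must-input-enabledness and compatibility preconditions of Theorem~\ref{theorem:compositionality-multicast} and the merely weak assumptions available here, and the refinement-versus-\miocophi{} incomparability on the specification side, which you explicitly argue cannot arise for the recomposition.
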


Based on this result, \textbf{modal-irioco} supports
\emph{synthesis of testing environments} for
\emph{testing through contexts}~\cite{Noroozi2013,Daca2014}, as well as a solution to the \emph{unknown-component} problem~\cite{Villa2011}.
%
\section{Case Study}\label{sec:casestudy}

In this section, we present a small case with real-world examples for \miocophi (including negative testing capabilities), \miaforbidden refinement, parallel composition with multicast and hiding, and quotienting.

First, Figs.~\ref{fig:case-study-mioco-impl} and~\ref{fig:case-study-mioco-impl} give an example for \miocophi with a simple vending machine.
Here, the specification $s$ accepts \emph{2\euro{}} as mandatory input and \emph{1\euro{}} as optional input.
When \emph{2\euro{}} are entered, \emph{change} is returned.
Afterwards, the user may choose between \emph{coffee} and \emph{tea}, or \emph{cups} may be refilled.
If the user chooses \emph{tea}, then either a \emph{cup} of tea (mandatory) or an \emph{error} message (optional) is returned.
If the user chooses \emph{coffee}, then either a \emph{cup} of coffee or an \emph{error} message is returned (both optional).
The implementation $i$ is similar to $s$ with the differences being a missing \emph{cup} output after \emph{coffee} and forbidden inputs of \emph{1\euro{}} at the initial state and after entering \emph{2\euro{}}.
For this example, it holds that $i\mathmiocophi s$:
The output of a \emph{cup} after \emph{coffee} may be removed because the output is optional.
Additionally, the input of \emph{1\euro{}} at the initial state may be forbidden because the input is optional as well.
Furthermore, the input of \emph{1\euro{}} after \emph{2\euro{}} may be forbidden because the input is unspecified in $s$.
By forbidding the additional input of \emph{1\euro{}} after \emph{2\euro{}}, we show the negative testing capabilities of \miocophi as now, no variant $i'\miarefforbidden i$ of the implementation is allowed to perform critical unspecified behavior.
Otherwise, a variant could, \eg return an unlimited amount of tea after entering \emph{2\euro{}} followed by \emph{1\euro{}}.
By utulizing the failure state, this is not possible anymore.

\begin{figure}[tp]
	\hfill
	\subfloat[Refinement $i'\miarefforbidden i$]{\label{fig:case-study-mioco-ref}\input{figures/case-study-mioco-ref.tex}}
	\hfill
	\subfloat[Implementation $i$]{\label{fig:case-study-mioco-impl}\input{figures/case-study-mioco-impl.tex}}
	\hfill
	\subfloat[Specification $s$]{\label{fig:case-study-mioco-spec}\input{figures/case-study-mioco-spec.tex}}
	\hfill\strut
	\caption{Example for \miocophi and \miaforbidden refinement, where $i\mathmiocophi s$ and $i'\miarefforbidden i$.}\label{fig:case-study-mioco}
\end{figure}

Second, Figs.~\ref{fig:case-study-mioco-ref} and~\ref{fig:case-study-mioco-impl} give an example for \miaforbidden refinement.
Here, it holds that $i'\miarefforbidden i$ because the optional output of an \emph{error} message after \emph{coffee} becomes mandatory in $i'$, and the optional \emph{error} message after \emph{tea} is removed.

Third, Fig.~\ref{fig:composition-example} gives an example (adapted from de Alfaro and Henzinger~\cite{Alfaro2005}) for the \miaforbidden parallel product and parallel composition where two automata $P$ (Fig.~\ref{fig:composition-example-p}) and $Q$ (Fig.~\ref{fig:composition-example-q}) are composed.
$P$ is capable of entering \emph{1\euro{}}, and then it waits for a \emph{cup} from the vending machine.
Note that the set of inputs of $P$ contains \emph{retry} but there is no state accepting that input.
$Q$ waits for another automaton to enter \emph{1\euro{}}.
After that, the user can choose the \emph{size} of the beverage and the beverage itself (\emph{coffee} or \emph{tea}).
If \emph{tea} is chosen, the machine returns a \emph{cup}.
If \emph{coffee} is chosen, $Q$ performs a \emph{reset} followed by a \emph{retry} (unfortunately, this vending machine is hostile towards coffee drinkers).
Therefore, the common actions of $P$ and $Q$ are \emph{1\euro}, \emph{cup}, and \emph{retry}.

\begin{figure}[tp]
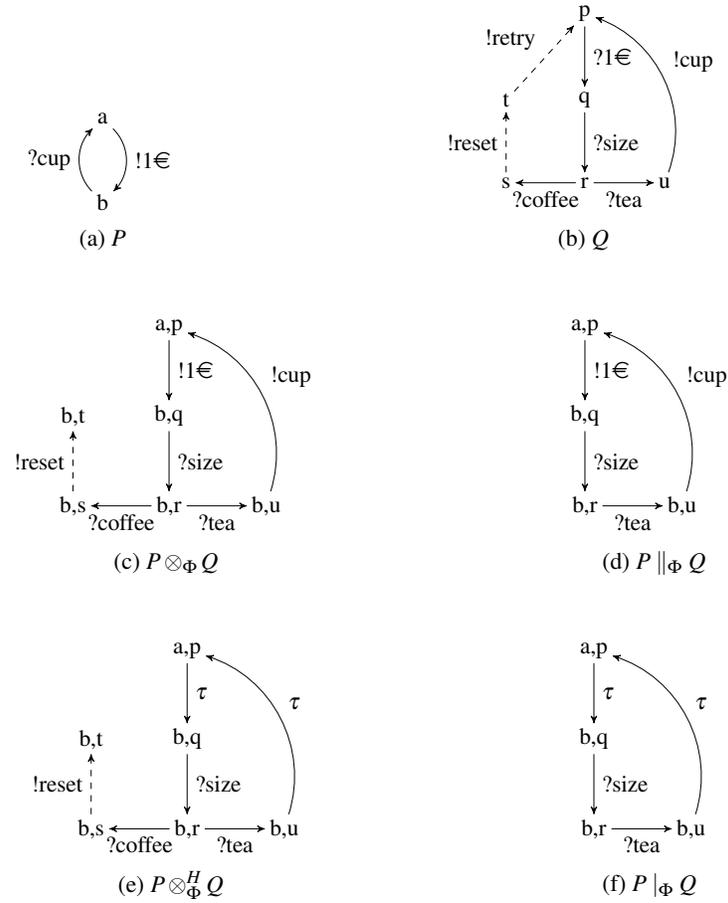

	\hfill
	\subfloat[$P$]{\label{fig:composition-example-p}\input{figures/composition-example-p.tex}}
	\hfill
	\subfloat[$Q$]{\label{fig:composition-example-q}\input{figures/composition-example-q.tex}}
	\hfill\strut

	\bigskip

	\hfill
	\subfloat[$P\otimesforbidden Q$]{\label{fig:composition-example-multicast-product}\input{figures/composition-example-multicast-product.tex}}
	\hfill
	\subfloat[$P\parallelforbidden Q$]{\label{fig:composition-example-multicast-final}\input{figures/composition-example-multicast-final.tex}}
	\hfill\strut

	\bigskip

	\hfill
	\subfloat[$P\otimesforbiddenhiding Q$]{\label{fig:composition-example-hiding-product}\input{figures/composition-example-hiding-product.tex}}
	\hfill
	\subfloat[$P\hidingforbidden Q$]{\label{fig:composition-example-hiding-final}\input{figures/composition-example-hiding-final.tex}}
	\hfill\strut

	\caption{Parallel composition of \miaforbidden with multicast and hiding, where $I_P=\{\text{cup, retry}\}$, $O_P=\{\text{1\euro}\}$, $I_Q=\{\text{coffee, size, tea}\}$, and $O_Q=\{\text{cup, reset, retry}\}$.
	The example is adapted from de Alfaro and Henzinger~\cite{Alfaro2005}.}\label{fig:composition-example}
\end{figure}

In order to obtain the parallel product with multicast $P\otimesforbidden Q$ depicted in Fig.~\ref{fig:composition-example-multicast-product}, we first combine the states \emph{a} and \emph{p}.
The only action of both states is the common action \emph{1\euro{}}, so this action is performed synchronously.
\emph{1\euro{}} remains an output action in the parallel product because there may be other automata also receiving that action (\eg a component counting money).
Afterwards, state \emph{b} of $P$ has to wait for $Q$ because the only action of \emph{b} (\emph{?cup}) is a common action.
After that, $Q$ performs \emph{?size}, \emph{?coffee}, \emph{?tea}, and \emph{!reset} independently because the states \emph{q}, \emph{r}, and \emph{s} do not have any outgoing transitions with common actions of $P$ and $Q$.
The state \emph{t} cannot perform any actions because \emph{retry} is a common action of $P$ and $Q$ but \emph{b} is not able to receive that action.
However, \emph{b} and \emph{u} can both perform \emph{cup}, and then the parallel product has a transition with $a,p$ (the initial state) as its target because in both $P$ and $Q$ the action leads to the initial state.

To obtain the parallel composition with multicast $P\parallelforbidden Q$ (\cf Fig.~\ref{fig:composition-example-multicast-final}), we need to find and remove all illegal states of $P\otimesforbidden Q$.
Initially, there is one illegal state in the parallel product (\cf Fig.~\ref{fig:composition-example-multicast-product}).
The component \emph{t} of the state \emph{(b,t)} is able to perform the output \emph{retry} but \emph{b} cannot perform that action although it is in the set of actions of $P$.
Additionally, there are no actions for \emph{b} or \emph{t} that can be performed without the other component.
Therefore, \emph{(b,t)} of the parallel product is an illegal state.
There are no other initial illegal states so next we look for states which may reach the illegal state autonomously, \ie through output and internal actions.
This leads us to \emph{(b,s)} being able to reach \emph{(b,t)} through \emph{!reset}.
Therefore, \emph{(b,s)} is also added to the set of illegal states.
After that, there are no additional illegal states left.
The last step consists of deleting all illegal states (\ie \emph{(b,s)} and \emph{(b,t)}) and all their outgoing and incoming transitions (\ie \emph{!reset} and \emph{?coffee}).
In that way, we generated the parallel composition $P\parallelforbidden Q$.
The parallel product and parallel composition with hiding (\cf Figs.~\ref{fig:composition-example-hiding-product} and~\ref{fig:composition-example-hiding-final}) are obtained similar to the parallel composition with hiding.
The only difference are common actions becoming internal actions $\tau$ instead of output actions.

Finally, Fig.~\ref{fig:case-study-decomposition} gives an example for quotienting.
Here, specification $P$ and divisor $D$ are similar to the \miaforbidden depicted in Figs.~\ref{fig:composition-example-multicast-final} and~\ref{fig:composition-example-p} with the difference that \emph{1\euro} is an optional action in both automata.
The quotient $Q=P\sslashforbidden D$ is obtained by performing the inverse of parallel composition with multicast, \ie the quotient of an output in both specification and divisor is an input action, the quotient of an output of the specification and an input of the divisor is an output action, and the quotient of an input in both specification and divisor is an input.
Therefore, we first build the quotient of the states \emph{(a,p)} and \emph{a}.
Here, \emph{1\euro{}} becomes a mandatory input because for parallel composition, the input of a common actions must always be mandatory.
Afterwards, \emph{size} and \emph{tea} are copied to $Q$ because both actions are not common actions.
In the last step, the common action \emph{cup} becomes an output in $Q$ because \emph{cup} is an output in $P$ and an input in $D$.
In that way, we generated the quotient $Q=P\sslashforbidden D$.

\begin{figure}[tp]
	\hfill
	\subfloat[Specification $P$]{\label{fig:case-study-decomposition-p}\input{figures/case-study-decomposition-p.tex}}
	\hfill
	\subfloat[Divisor $D$]{\label{fig:case-study-decomposition-d}\input{figures/case-study-decomposition-d.tex}}
	\hfill
	\subfloat[Quotient $Q=P\sslashforbidden D$]{\label{fig:case-study-decomposition}\input{figures/case-study-decomposition-q.tex}}
	\hfill\strut
	\caption{}\label{}
\end{figure}

\section{Conjunction and Disjunction}\label{sec:conjunction-disjunction}
The MIA theory includes operators 
for the conjunction and disjunction of two given MIA, where
the conjunction corresponds to the \emph{infimum}, and
the disjunction corresponds to the \emph{suprenum} of those
MIA with respect to the partial refinement preorder relation $\sqsubseteq$.
In this section, we investigate both operators
in the context of IR-MIA and \miocophi{}.
\subsection{Conjunction}\label{subsec:conjunction}
The binary \emph{conjunction} of two MIA, $P$ and $Q$, constructs 
a MIA $P\land Q$ comprising all variants shared by $P$ and $Q$.
An optional transition in $P$ may be refined in variants $P'\sqsubseteq P$ 
to either remain optional, or to become mandatory or forbidden behavior, 
whereas a mandatory transition in $P$ must remain mandatory in $P'$.
Hence, under conjunction, transitions being optional (mandatory) in both
$P$ and $Q$ remain optional (mandatory) in $P\land Q$.
In contrast, a transition being optional in $P$, but mandatory
in $Q$ (or, vice versa), becomes mandatory in $P\land Q$.
Beyond this intuitive construction, 
additional rules must be adapted to the
original MIA conjunction in order to be applicable to IR-MIA.

First, in the original MIA theory~\cite{Bujtor2015a}, the notion of \emph{disjunctive transitions}
is introduced, defining an extension to the expressiveness of MIA refinement
being crucial for both the construction of conjunction and disjunction.
Syntactically, a disjunctive transition is a must-transition with a single source state, but 
with a \emph{set} of possible target states.
Semantically, a disjunctive transition defines an \emph{inclusive or}
among the different target states (\ie{} \emph{at least one} of the 
may-transitions underlying a disjunctive transition must be preserved under MIA refinement).
We may easily extend our definition of IR-MIA, accordingly, to also
provide disjunctive transitions as follows.
\begin{definition}[IR-MIA with Disjunctive Transitions]\label{def:mia-disjunctive}
	An \emph{IR-MIA with disjunctive transitions} is an IR-MIA according to 
  Def.~\ref{def:ir-mia}, where $\mustarrow\subseteq ((Q\setminus\{q_\Phi\}) \times I_Q \times \mathcal{P}(Q)) \cup ((Q\setminus\{q_\Phi\}) \times \mathcal{P}(O_Q\cup\{\tau\}) \times (Q\setminus\{q_\Phi\}))$.
\end{definition}
Figure~\ref{fig:disjunctive-transitions-mioco-i} shows a sample IR-MIA $i$ with 
a disjunctive transition.
\begin{figure}[tp]
	\hfill
	\subfloat[$i$]{\label{fig:disjunctive-transitions-mioco-i}\input{figures/disjunctive-transitions-mioco-i.tex}}
	\hfill
	\subfloat[$i'\miarefforbidden i$]{\label{fig:disjunctive-transitions-mioco-ip}\input{figures/disjunctive-transitions-mioco-ip.tex}}
	\hfill
	\subfloat[$i''\miarefforbidden i$]{\label{fig:disjunctive-transitions-mioco-ipp}\input{figures/disjunctive-transitions-mioco-ipp.tex}}
	\hfill
	\subfloat[$s$]{\label{fig:disjunctive-transitions-mioco-s}\input{figures/disjunctive-transitions-mioco-s.tex}}
	\hfill\strut
	\caption{Example for disjunctive transitions and the incompatability with \miocophi{}.}\label{fig:disjunctive-transitions-mioco}
\end{figure}
Here, the initial state accepts input \emph{i} and has two 
possible target states for this input, where each variant 
must include at least one of these two transitions.
Hence, the IR-MIA depicted in Figures~\ref{fig:disjunctive-transitions-mioco-i}, 
\ref{fig:disjunctive-transitions-mioco-ip}, and 
\ref{fig:disjunctive-transitions-mioco-ipp} are all
valid refinements of $s$.
The adapted refinement relation on IR-MIA with
disjunctive transitions may be defined as follows.
\begin{definition}[IR-MIA Refinement with Disjunctive Transitions]\label{def:mia-ref-disjunctive}
	Let $P$, $Q$ be \miaforbidden{} with $I_P=I_Q$ and $O_P=O_Q$.
	The \emph{IR-MIA refinement relation with disjunctive transitions} $\mathcal{R}\subseteq P\times Q$ is defined according to Def.~\ref{def:mia-refinement-forbidden} where clauses~\ref{refine-must-i} and~\ref{refine-must-o} are replaced by
	\begin{enumerate}
		\setcounter{enumi}{\value{mia-ref-must-i}}
		\item $q\oversetmust{i}Q'\setminus\{q_\Phi\}$ implies $\exists P'.p\oversetmust{i}\Oversetmust{\epsilon}P'\setminus\{p_\Phi\}$ and $\forall p'\in P'\exists q'\in Q'.(p',q')\in\mathcal{R}$, and
		\setcounter{enumi}{\value{mia-ref-must-o}}
		\item $q\oversetmust{\omega}Q'$ implies $\exists P'.p\Oversetmust{\hat\omega}P'$ and $\forall p'\in P'\exists q'\in Q'.(p',q')\in\mathcal{R}$.
	\end{enumerate}
\end{definition}
In particular, disjunctive transitions are required during IR-MIA conjunction whenever
one IR-MIA contains a non-deterministic choice over some action and the other IR-MIA
contains a must-transition with the same action.
Figure~\ref{fig:conjunction-disjunctive-trans} shows an example for a conjunction $P\land Q$ 
necessarily resulting in a disjunctive transition.
\begin{figure}[tp]
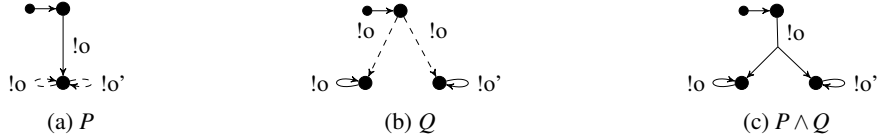

	\hfill
	\subfloat[$P$]{\label{fig:conjunction-disjunctive-trans-p}\input{figures/conjunction-disjunctive-trans-p.tex}}
	\hfill
	\subfloat[$Q$]{\label{fig:conjunction-disjunctive-trans-q}\input{figures/conjunction-disjunctive-trans-q.tex}}
	\hfill
	\subfloat[$P\land Q$]{\label{fig:conjunction-disjunctive-trans-p-and-q}\input{figures/conjunction-disjunctive-trans-p-and-q.tex}}
	\hfill\strut
	\caption{Example for a disjunctive transition being generated by applying conjunction (adapted from~\cite{Bujtor2014}).}\label{fig:conjunction-disjunctive-trans}
\end{figure}
$P$ has mandatory output \emph{o}, followed by optional outputs \emph{o} and \emph{o'}, whereas
$Q$ has a non-deterministic choice between two optional transitions both with output \emph{o}, followed 
either by the mandatory output \emph{o}, or \emph{o'}.
Hence, $P\land Q$ must include a disjunctive transition as, otherwise, $P\land Q$ 
cannot choose between providing either \emph{o} or \emph{o'}, or both of them.
Using two optional non-deterministic outputs \emph{o} instead
would permit a variant without any output \emph{o}, whereas 
using two mandatory non-deterministic output transitions 
would not permit any variant with only one possible output after \emph{o}.
In order to apply \miocophi{} to IR-MIA with disjunctive transitions,
we have to adapt $\mustafter$ as follows.
\begin{definition}[\miocod{}]
Let $Q$ be an IR-MIA with disjunctive transitions and $p\in Q$.
We define $p\mustafterd\sigma:=\{p'\mid p\Oversetmust{\sigma}P',p'\in P'\}$.
By \miocod{}, we define \textbf{modal-irioco} according to Def.~\ref{def:mioco} 
where $\mustafter$ is replaced by $\mustafterd$.
\end{definition}
Unfortunately, disjunctive transitions obstruct the preservation of \miocophi{} under 
refinement (\cf{}~Theorem~\ref{theorem:mioco-refinement-compatability}).
\begin{lemma}\label{lemma:mioco-disjunctive-transitions}
Let $i$ and $s$ be IR-MIA with disjunctive transitions.
Then it holds that $i\mathmiocod s\not\Rightarrow \forall i'\miarefforbidden i:\exists s'\miarefforbidden s:i'\mathmiocod s'$.
\end{lemma}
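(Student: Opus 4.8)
The statement is a non-implication, so it suffices to exhibit a single counterexample witnessing that preservation under refinement (the analogue of Theorem~\ref{theorem:mioco-refinement-compatability}) breaks down once disjunctive transitions are admitted. I would use precisely the IR-MIA already drawn in Figure~\ref{fig:disjunctive-transitions-mioco}: take $i$ as in Figure~\ref{fig:disjunctive-transitions-mioco-i}, whose initial state has a single \emph{disjunctive} must-transition on input $?i$ to two target states $b_l$ (continuing with mandatory output $!o$) and $b_r$ (continuing with mandatory output $!o'$); and take $s$ as in Figure~\ref{fig:disjunctive-transitions-mioco-s}, whose initial state instead carries \emph{two separate}, non-deterministic must-transitions on $?i$ to the analogous states. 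The refinement that breaks the property is $i'\miarefforbidden i$ from Figure~\ref{fig:disjunctive-transitions-mioco-ip}, obtained by resolving the disjunctive transition in favour of the $b_l$-branch alone.

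First I would verify $i\mathmiocod s$. For Condition~1 of Definition~\ref{def:mioco} the only nontrivial suspension trace of $s$ is $\sigma={?i}$, after which both $i$ and $s$ reach exactly the two states enabling $!o$ and $!o'$, so $\mayout$ coincides on both sides; the remaining traces $\epsilon$, ${?i}\cdot!o$, ${?i}\cdot!o'$ only expose quiescence. For Condition~2 the crucial point is that the disjunctive must-transition of $i$ contributes, via $\mustafterd$, \emph{both} branch states to $i\mustafterd{?i}$, so that the mandatory outputs after $?i$ (computed through $\mustafterd$ as prescribed by $\mathmiocod$) equal $\{o,o'\}$ on both sides. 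Hence $i\mathmiocod s$ holds, and the legality of $i'\miarefforbidden i$ is immediate, since a disjunctive must-transition only requires \emph{at least one} of its branches to survive under $\miarefforbidden$.

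The core of the argument is to show that \emph{no} $s'\miarefforbidden s$ satisfies $i'\mathmiocod s'$. Because the two $?i$-transitions of $s$ are genuine, separate must-transitions, refinement clause~\ref{refine-must-i} forces every $s'$ to preserve \emph{each} of them by a matching must-$?i$-transition into a state related to $b_l$ and into a state related to $b_r$, respectively; clause~\ref{refine-must-o} then forces those states to retain the mandatory outputs $!o$ and $!o'$. Consequently, after $?i$ every refinement $s'$ still offers $o'$ as a mandatory output, whereas $i'$ has discarded the $b_r$-branch entirely, so $i'\mustafterd{?i}$ enables only $o$. Thus the mandatory output $o'$ of $s'$ is never mandatory in $i'$, violating Condition~2 of $\mathmiocod$ for every candidate $s'$, which establishes the claimed non-implication.

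I expect the delicate step to be exactly this asymmetry between the two forms of non-determinism: a disjunctive must-transition may be thinned to a single branch under $\miarefforbidden$, whereas a collection of ordinary non-deterministic must-transitions must all be retained. Making the universal quantifier ``for all $s'\miarefforbidden s$'' rigorous, i.e.\ arguing directly from clauses~\ref{refine-must-i} and~\ref{refine-must-o} that the $o'$-branch cannot be eliminated in any $s'$, is where the proof must be careful; the verification of $i\mathmiocod s$ is routine by comparison. (If one insists on input-enabledness for $\mathmiocod$ to be defined, it can be restored by demonic completion on the sink states, which does not affect the trace $?i$ on which the argument turns.)
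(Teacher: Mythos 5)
Your proof is correct and takes essentially the same route as the paper: the paper's own justification of this lemma is exactly the counterexample of Figure~\ref{fig:disjunctive-transitions-mioco}, with $i$, $i'$ and $s$ used just as you use them. If anything, you are more careful than the paper, which only observes that $i'\mathmiocophi s$ fails, whereas you argue from clauses~\ref{refine-must-i} and~\ref{refine-must-o} of IR-MIA refinement that \emph{no} refinement $s'\miarefforbidden s$ can discard the mandatory $o'$-branch after $?i$ --- which is what the universally quantified statement actually requires.
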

In particular, due to the purely trace-based nature of conformance testing in general and \miocophi{} in particular, 
it is not possible to distinguish occurrences of actions related to 
must-transitions being part of disjunctive transitions from actions
related to singleton must-transitions.
However, the former may be removed under modal refinement, whereas the latter 
must be preserved.
An example for illustrating this issue is provided in Figure~\ref{fig:disjunctive-transitions-mioco}.
Let $i$, $i'$, $i''$, and $s$ be \miaforbidden{} with disjunctive transitions.
Here, it holds that $i\mathmiocophi s$.
However, neither $i'\mathmiocophi s$ nor $i''\mathmiocophi s$ holds.
As a consequence, \miocophi{} and disjunctive transitions are incompatible.
Therefore, we restrict IR-MIA and conjunction on IR-MIA,
as compared to the original MIA theory~\cite{Bujtor2015a},
in order to avoid any occurrence of disjunctive transitions in the following.

A further necessary adaptation of MIA conjunction to IR-MIA
results from the possibility of underspecification 
(\ie{} an unspecified input may be implemented arbitrarily).
To this end, we introduce a special \emph{demonic state} $p_d$, 
serving as target state for previously unspecified inputs.
The demonic state has no outgoing input transition (\ie{} all inputs are 
unspecified) and optional self-transitions for every possible output.
\begin{definition}[Demonic State]\label{def:demonic state}
	A state $p_d$ of a \miaforbidden{} 
  is a \emph{demonic state} 
  if $\forall o\in O:p_d\oversetmay{o}p_d$ and $\forall a\in I\cup\{\tau\}:p_d\not\oversetmay{a}$.
\end{definition}
Consider the IR-MIA depicted in Figure~\ref{fig:demonic-state-example}
with optional output \emph{o} and optional input \emph{i}.
\begin{figure}[tp]
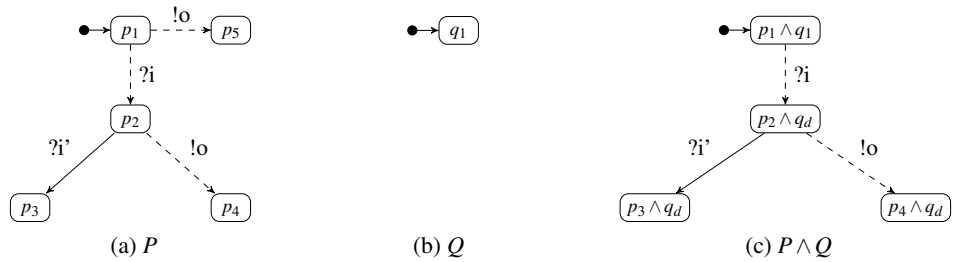

	\hfill
	\subfloat[$P$]{\label{fig:demonic-state-example-p}\input{figures/demonic-state-example-p.tex}}
	\hfill
	\subfloat[$Q$]{\label{fig:demonic-state-example-q}\input{figures/demonic-state-example-q.tex}}
	\hfill
	\subfloat[$P\land Q$]{\label{fig:demonic-state-example-p-and-q}\input{figures/demonic-state-example-p-and-q.tex}}
	\hfill\strut
	\caption{Example for the demonic state in conjunction.}\label{fig:demonic-state-example}
\end{figure}
After input \emph{i}, $P$ accepts the mandatory input \emph{i'} and has the optional output \emph{o}, whereas
$Q$ has no outgoing transitions.
Hence, the conjunction of an optional input \emph{i} in $P$ and an unspecified input \emph{i} 
in $Q$ should result in an optional input \emph{i} in $P\land Q$ 
as unspecified inputs may be implemented arbitrarily.
This is achieved by setting the target of \emph{i} to $p_2$ and the demonic state $q_d$ in $P\land Q$.
The state $p_2\land q_d$ has outgoing transitions for \emph{i'} and \emph{o} as these 
actions are specified in $P$ and in $Q$ and arbitrary subsequent 
behavior is allowed after an unspecified input action.
This example shows that $q_d$ must have optional outgoing transitions for 
every possible output.
We are now able to define conjunction on IR-MIA in two consecutive steps: 
(1) the conjunctive product $P_1\&P_2$ on two \miaforbidden{}, $P_1$ and $P_2$, followed by
(2) the conjunction $P_1\land P_2$ by removing erroneous state pairs $(p_1,p_2)$ from $P_1\&P_2$.
Concerning step (1), we first require $P_1$ and $P_2$ to have similar alphabets.
In $P_1 \& P_2$, a fresh state $p_{12\Phi}$ is introduced to serve as 
unique failure state.
The modality $\gamma$ of composed transitions 
$(p_1,p_2)\transrel{\alpha}_\gamma (p_1',p_2')$ depends on the modality of the individual transitions.
\begin{definition}[IR-MIA Conjunctive Product]\label{def:conjunctive-product}
	Consider two \miaforbidden $(P_1,I,O,\mustarrowsub{P1},\mayarrowsub{P1},p_{1\Phi})$ and $(P,I,O,\mustarrowsub{P2},\mayarrowsub{P2},p_{2\Phi})$ with common alphabets.
	The \emph{conjunctive product} is defined as $P_1\&P_2=_\textit{def}((P_1\cup\{p_{1d}\})\times (P_2\cup\{p_{2d}\})\},I,O,\mustarrow,\mayarrow,p_{12\Phi})$ with $\mustarrow,\mayarrow$ being the smallest relations derived by the following operational rules:
	\begin{tabbing}
		(OMust2)\quad \= $(p_1,p_2)\oversetmust{\omega}(p_1',p_2')$\quad \= if\quad \= \kill
		(OMust1) \> $(p_1,p_2)\oversetmust{\omega}(p_1',p_2')$ \> if \> $p_1\oversetmust{\omega}p_1'$ and $p_2\Oversetmay{\hat\omega}$ \\
		(OMust2) \> $(p_1,p_2)\oversetmust{\omega}(p_1',p_2')$ \> if \> $p_1\Oversetmay{\hat\omega}$ and $p_2\oversetmust{\omega}p_2'$ \\
		(IMust1) \> $(p_1,p_2)\oversetmust{i}(p_1',p_2')$ \> if \> $p_1\oversetmust{i}p_1'\neq p_{1\Phi}$ and $p_2\oversetmay{i}\Oversetmay{\epsilon}p_2'\neq p_{2\Phi}$ \\
		(IMust2) \> $(p_1,p_2)\oversetmust{i}(p_1',p_2')$ \> if \> $p_1\oversetmay{i}\Oversetmay{\epsilon}p_1'\neq p_{1\Phi}$ and $p_2\oversetmust{i}p_2'\neq p_{2\Phi}$ \\
		(OMay) \> $(p_1,p_2)\oversetmay{\omega}(p_1',p_2')$ \> if \> $p_1\Oversetmay{\hat\omega}p_1'$ and $p_2\Oversetmay{\hat\omega}p_2'$ \\
		(IMay) \> $(p_1,p_2)\oversetmay{i}(p_1',p_2')$ \> if \> $p_1\oversetmay{i}\Oversetmay{\epsilon}p_1'\neq p_{1\Phi}$ and $p_2\oversetmay{i}\Oversetmay{\epsilon}p_2'\neq p_{2\Phi}$ \\
		(DMay1) \> $(p_1,p_2)\oversetgamma{i}(p_1',p_{2d})$ \> if \> $p_1\oversetgamma{i}p_1'\neq p_{1\Phi}$ and $p_2\not\oversetmay{i}$ \\
		(DMay2) \> $(p_1,p_2)\oversetgamma{i}(p_{1d},p_2')$ \> if \> $p_1\not\oversetmay{i}$ and $p_2\oversetgamma{i}p_2'\neq p_{2\Phi}$ \\
		(FMust) \> $(p_1,p_2)\oversetmust{i}p_{12\Phi}$ \> if \> $p_1\oversetmust{i}p_{1\Phi}$ or $p_2\oversetmust{i}p_{2\Phi}$ \\
		(FMay1) \> $(p_1,p_2)\oversetmay{i}p_{12\Phi}$ \> if \> $p_1\oversetmay{i}$ and $p_2\oversetmay{i}p_{2\Phi}$ \kill
		(FMay2) \> $(p_1,p_2)\oversetmay{i}p_{12\Phi}$ \> if \> $p_1\oversetmay{i}p_{1\Phi}$ and $p_2\oversetmay{i}$ \kill
        (FMay) \> $(p_1,p_2)\oversetmay{i}p_{12\Phi}$ \> if \> $p_1\oversetmay{i}p_{1\Phi}$ or $p_2\oversetmay{i} p_{2\Phi}$ \\
	\end{tabbing}
\end{definition}
Rules~\emph{(OMust1)}, \emph{(OMust2)}, \emph{(IMust1)}, and~\emph{(IMust2)} 
are concerned with actions being mandatory in one IR-MIA and not being forbidden in the other IR-MIA thus
resulting in must-transitions in the conjunction.
Rules~\emph{(OMay)} and~\emph{(IMay)} introduce may-transitions for actions being allowed in both IR-MIA.
Rules~\emph{(DMay1)} and~\emph{(DMay2)} generate may- and must-transitions 
for inputs being either optional or mandatory in one IR-MIA, and unspecified in the other IR-MIA.
Rules~\emph{(FMust)} and~\emph{(FMay)} apply in all cases where an input is forbidden 
in one IR-MIA and either forbidden, or optional in the other IR-MIA.
Forbidden inputs require additional rules in order to target the new 
failure state $p_{12\Phi}$ instead of $(p_{1\Phi},p_{2\Phi})$.
Note, that $\tau$-steps being present in only one IR-MIA are covered by the rules~\emph{(OMust1)} 
and~\emph{(OMust2)} as $\omega\in O\cup\{\tau\}$ and $q\Oversetgamma{\hat\omega}q'$ 
also includes $q\Oversetgamma{\epsilon}q'$ (\ie{} an empty step such that $q=q'$).

Based on the conjunctive product, we finally obtain the conjunction by (2) 
removing erroneous states $(p_1,p_2)$ (\ie{} states where $p_1$ or $p_2$ 
have mandatory behavior being forbidden in the other state as well as
states requiring a disjunctive transition in the conjunction.
In addition, all states $(p_1',p_2')$ from which  
erroneous states are reachable are also removed (pruned) from $P_1\land P_2$.
\begin{definition}[IR-MIA Conjunction]\label{def:conjunction}
	Given a conjunctive product $P_1\&P_2$, the set $F\subseteq P_1\times P_2$ of \emph{inconsistent states} is defined as the least set satisfying the following rules:
	\begin{tabbing}
		(F7)\quad \= $p_1\oversetmay{a}p_1'$, $p_1\oversetmay{a}p_1''$ with $p_1'\neq p_1''$ and $p_2\oversetmust{a}p_2'$\quad \= implies\quad \= \kill
		(F1) \> $p_1\oversetmust{o}$ and $p_2\not\Oversetmay{o}$ \> implies \> $(p_1,p_2)\in F$ \\
		(F2) \> $p_1\not\Oversetmay{o}$ and $p_2\oversetmust{o}$ \> implies \> $(p_1,p_2)\in F$ \\
		(F3) \> $p_1\oversetmust{i}p_1'\neq p_{1\Phi}$ and $p_2\oversetmust{i}p_{2\Phi}$ \> implies \> $(p_1,p_2)\in F$ \\
		(F4) \> $p_1\oversetmust{i}p_{1\Phi}$ and $p_2\oversetmust{i}p_2'\neq p_{2\Phi}$ \> implies \> $(p_1,p_2)\in F$ \\
		(F5) \> $p_1\oversetmay{a}p_1'$, $p_1\oversetmay{a}p_1''$ with $p_1'\neq p_1''$ and $p_2\oversetmust{a}p_2'$ \> implies \> $(p_1,p_2)\in F$ \\
		(F6) \> $p_2\oversetmay{a}p_2'$, $p_1\oversetmay{a}p_2''$ with $p_2'\neq p_2''$ and $p_1\oversetmust{a}p_1'$ \> implies \> $(p_1,p_2)\in F$ \\
		(F7) \> $(p_1,p_2)\oversetmust{\alpha}r$ and $r\in F$ \> implies \> $(p_1,p_2)\in F$ \\
	\end{tabbing}
	The \emph{conjunction} $P_1\land P_2$ is obtained from $P_1\&P_2$ by deleting all states $(p_1,p_2)\in F$ (and respective transitions).
	If $(p_1,p_2)\in P_1\land P_2$, then we write $p_1\land p_2$, and conjunction $p_1\land p_2$ is defined.
\end{definition}
Rules~\emph{(F1)} to~\emph{(F4)} handle cases
in which an action is mandatory in one IR-MIA and forbidden in the other IR-MIA.
Rules~\emph{(F5)} and~\emph{(F6)} handle cases which would yield
\emph{disjunctive transitions} in the original MIA, but which are not allowed
in our IR-MIA theory, as described before.
Finally, rule~\emph{(F7)} (recursively) removes all states from which
erroneous states are reachable. 
Hence, if there exists a path leading from the initial state 
to an erroneous state, the initial state itself is also removed such that the whole 
conjunction is undefined for this pair of IR-MIA.
Consider the example in Figure~\ref{fig:conjunction}.
First, the conjunctive product is built (\cf{}~Figure~\ref{fig:conjunctive-product-q-and-r}).
The optional inputs \emph{1\euro{}} and \emph{1\pounds{}} of $Q$ become 
mandatory (forbidden) as in $Q$ a refinement to optional, mandatory and 
forbidden inputs are allowed, but $R$ only allows for mandatory input \emph{1\euro{}} and forbidden input \emph{1\pounds{}}.
The inputs for \emph{coffee} and \emph{tea} are only present in one IR-MIA.
Hence, the demonic state is used in this case (\eg{} in $Q\land R$ it holds that $u=u'\land r_d$ with $r_d$ being the demonic state of $R$).
We obtain the conjunction $Q\land R$ from $Q\&R$ by pruning all erroneous states.
Here, $v$ is erroneous as the output of \emph{water} is implicitly forbidden in $Q$, but mandatory in $R$.
Therefore, state $v$ and all its incoming and outgoing transitions are removed.

\begin{figure}[tp]
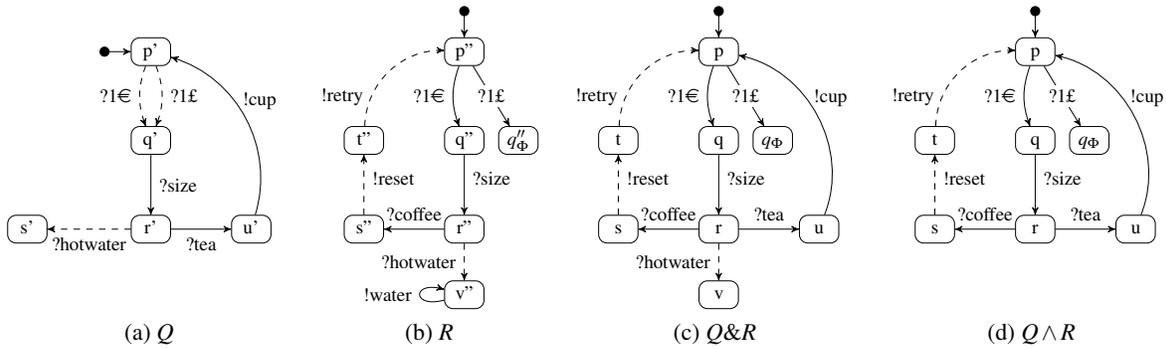

	\hfill
	\subfloat[$Q$]{\label{fig:conjunction-q}\input{figures/conjunction-q.tex}}
	\hfill
	\subfloat[$R$]{\label{fig:conjunction-r}\input{figures/conjunction-r.tex}}
	\hfill
	\subfloat[$Q\&R$]{\label{fig:conjunctive-product-q-and-r}\input{figures/conjunctive-product-q-and-r.tex}}
	\hfill
	\subfloat[$Q\land R$]{\label{fig:conjunction-q-and-r}\input{figures/conjunction-q-and-r.tex}}
	\hfill\strut
	\caption{Example for Conjunction.}\label{fig:conjunction}
\end{figure}

Due to the restrictions on compatibility, the conjunction
of two IR-MIA may be undefined.
However, if the conjunction is defined, then it always results in a proper 
IR-MIA with the expected property with respect to IR-MIA refinement.

\newcounter{and-is-and-counter}
\setcounter{and-is-and-counter}{\value{theorem}}
\begin{theorem}[$\land$ is And]\label{theorem:and-is-and}
	Let $p$ and $q$ be \miaforbidden{} with common alphabets such that $p\land q$ is defined.
	Then, (1) $(\exists r:r\miarefforbidden p\text{ and } r\miarefforbidden q)$ and (2) $r\miarefforbidden p$ and $r\miarefforbidden q$ iff $r\miarefforbidden p\land q$.
\end{theorem}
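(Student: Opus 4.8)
The statement is the standard characterisation of $\land$ as the greatest lower bound of $p$ and $q$ with respect to the refinement preorder $\miarefforbidden$, transported to IR-MIA. My plan is to obtain part~(1) from part~(2): instantiating the equivalence of~(2) with $r=p\land q$ and using reflexivity of $\miarefforbidden$ gives $p\land q\miarefforbidden p$ and $p\land q\miarefforbidden q$, so $p\land q$ is itself the required common refinement whenever it is defined. The whole burden therefore lies in the equivalence in~(2), which I would split into its two directions, each handled by exhibiting a concrete witnessing relation and checking the clauses of Def.~\ref{def:mia-refinement-forbidden}.

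For the direction $r\miarefforbidden p\land q\Rightarrow(r\miarefforbidden p\wedge r\miarefforbidden q)$, I would first prove the two projection facts $p\land q\miarefforbidden p$ and $p\land q\miarefforbidden q$ directly. As a witness for the first I would take $\mathcal{R}_p=\{((p',q'),p')\mid (p',q')\in P_1\land P_2\}$, augmented by $(p_{12\Phi},p_\Phi)$ for the failure state and, wherever a demonic state $p_{2d}$ occurs, by the pairs $((p',p_{2d}),p')$. The must-clauses~\ref{refine-must-i} and~\ref{refine-must-o} follow from rules \emph{(OMust1)}/\emph{(IMust1)}, since every mandatory transition of $p'$ not forbidden on the $q'$-side is kept as a must-transition of the product and survives pruning because the source pair is not in $F$; the may-clauses follow because, by construction in Def.~\ref{def:conjunctive-product}, each may-transition of the product projects onto a may-transition of $p'$; and the failure/demonic cases are immediate, as $q_\Phi$ can only be matched by $p_\Phi$ (clause~\ref{refine-failure}) and a demonic state carries no input obligations. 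The symmetric relation handles $p\land q\miarefforbidden q$, and composing $r$'s witness with these projections (composition of IR-MIA refinement relations being again such a relation) yields the claim.

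For the converse $(r\miarefforbidden p\wedge r\miarefforbidden q)\Rightarrow r\miarefforbidden p\land q$, I would fix witnesses $\mathcal{R}_1\subseteq r\times p$ and $\mathcal{R}_2\subseteq r\times q$ and propose the product relation $\mathcal{R}=\{(r',(p',q'))\mid (r',p')\in\mathcal{R}_1,\ (r',q')\in\mathcal{R}_2\}$, extended by $(r_\Phi,p_{12\Phi})$ and by pairs involving a demonic state for inputs unspecified on exactly one side. The crux is a consistency lemma: whenever $(r',p')\in\mathcal{R}_1$ and $(r',q')\in\mathcal{R}_2$, the pair $(p',q')$ is not pruned, i.e.\ $(p',q')\notin F$. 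Rules \emph{(F1)}--\emph{(F4)} cannot fire, because a single $r'$ cannot simultaneously realise a mandatory behaviour forced by one side and a behaviour forbidden (leading to the failure state) by the other; this is exactly where clauses~\ref{refine-failure} and~\ref{refine-must-o} together with the input clauses of refinement are used. Granting the lemma, each transition obligation of $(p',q')$ is discharged by combining, through the rules of Def.~\ref{def:conjunctive-product}, the obligation that $r'$ already meets for $p'$ via $\mathcal{R}_1$ with the one it meets for $q'$ via $\mathcal{R}_2$.

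The main obstacle is the disjunctive-transition pruning rules \emph{(F5)}/\emph{(F6)}: a priori a common refinement $r'$ may resolve a non-deterministic choice on one side against a must-transition on the other, so that $(p',q')$ would be placed into $F$ by \emph{(F5)}/\emph{(F6)} even though $r'$ refines both sides, since IR-MIA forbids the disjunctive transition that standard MIA would use here. This is precisely where the hypothesis that $p\land q$ is \emph{defined} must do the work. I would argue that any pair reaching $F$ through \emph{(F5)}/\emph{(F6)} is reachable in the product only via may-transitions --- otherwise rule \emph{(F7)} would propagate the inconsistency back along must-transitions to the initial pair, contradicting definedness --- and that the refinement obligations carried on the path to such a pair are correspondingly optional, so that $\mathcal{R}$ never needs a pruned pair as the image of a reachable $r'$-state. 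Making this reachability-and-modality argument precise, and reconciling it with the possibility that $r$ promotes such optional behaviour to mandatory, together with the numerous routine case distinctions forced by the failure and demonic states, is the part I expect to require the most care.
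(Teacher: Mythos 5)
Your skeleton is exactly the paper's: part~(1) is obtained from part~(2) via reflexivity, the left-to-right direction of~(2) is witnessed by projecting onto the components (the paper builds $\mathcal{R}_p=\{(r_1,p_1)\mid (r_1,(p_1,q_1))\in\mathcal{R}\}\cup\{(r_\Phi,p_\Phi)\}$ directly from the given relation $\mathcal{R}$ instead of proving $p\land q\miarefforbidden p$ and composing, but that difference is cosmetic, modulo transitivity of \miarefforbidden which the paper never establishes), and the converse uses the same product relation $\{(r_1,(p_1,q_1))\mid (r_1,p_1)\in\mathcal{R}_p,(r_1,q_1)\in\mathcal{R}_q\}\cup\{(r_\Phi,(p\land q)_\Phi)\}$. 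Your observation that \emph{(F1)}--\emph{(F4)} cannot fire at a pair related to a common refinement state $r'$ is also correct (for \emph{(F3)}/\emph{(F4)} one needs properties~4 and~5 of Def.~\ref{def:ir-mia}, which force all may-$i$-transitions of $r'$ into $r_\Phi$).

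The genuine gap is your handling of \emph{(F5)}/\emph{(F6)}, and your proposed repair does not work: obligations attached to \emph{optional} behaviour do not disappear, because once $r$ implements that optional behaviour, clauses~\ref{refine-may-i-consistency}--\ref{refine-may-o} of Def.~\ref{def:mia-refinement-forbidden} demand a matching may-transition in $p\land q$ --- which is precisely what pruning deleted. Concretely, take $p$ with $p_0\oversetmay{!a}p_1$, $p_1\oversetmay{!o}p_2$, $p_1\oversetmay{!o}p_3$ ($p_2\neq p_3$, say $p_2\oversetmust{!x}\cdot$ and $p_3\oversetmust{!y}\cdot$), and $q$ with $q_0\oversetmay{!a}q_1$, $q_1\oversetmust{!o}q_2$, $q_2\oversetmay{!x}\cdot$, $q_2\oversetmay{!y}\cdot$. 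Then $(p_1,q_1)\in F$ by \emph{(F5)}, but $(p_0,q_0)\notin F$ since the only step into $(p_1,q_1)$ is a may-transition and \emph{(F7)}, as formally stated, propagates only along must-transitions; hence $p\land q$ is defined and is the single state $(p_0,q_0)$ with no outgoing transitions. Yet $r$ with $r_0\oversetmust{!a}r_1\oversetmust{!o}r_2\oversetmust{!x}\cdot$ satisfies $r\miarefforbidden p$ and $r\miarefforbidden q$ but not $r\miarefforbidden p\land q$, so under the literal reading of \emph{(F7)} the right-to-left direction of~(2) is simply false and no reachability-and-modality argument can rescue it. The statement only survives if \emph{(F7)} is read as propagating along \emph{all} transitions (as the paper's prose ``all states from which erroneous states are reachable'' suggests), in which case definedness excludes every reachable inconsistent pair and your product relation goes through with no need for the \emph{(F5)}/\emph{(F6)} case of your consistency lemma at all. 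It is worth noting that the paper's own proof is silent on exactly this point: it justifies non-prunedness only across must-steps (via \emph{(F7)}) and in the may-cases merely asserts that the \emph{(OMay)}/\emph{(IMay)} product transitions exist in $p\land q$ without checking that their targets survive pruning --- so you located the real crux more explicitly than the paper does, but your resolution of it is not correct.
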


Furthermore, IR-MIA conjunction is associative.

\newcounter{conjunction-associativity-counter}
\setcounter{conjunction-associativity-counter}{\value{lemma}}
\begin{lemma}[Associativity of IR-MIA Conjunction]\label{lemma:conjunction-associativity}
	Let $P$, $Q$, $R$ be IR-MIA.
	Then, (1) $P\land (Q\land R)$ is defined iff $(P\land Q)\land R$ is defined, and (2) if $P\land (Q\land R)$ is defined, then $S\miarefforbidden P\land (Q\land R)$ iff $S\miarefforbidden (P\land Q)\land R$.
\end{lemma}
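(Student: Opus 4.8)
The overall plan is to prove part~(1) by a direct structural comparison of the two nested conjunctive products, and then to obtain part~(2) almost for free from the universal property in Theorem~\ref{theorem:and-is-and}. It is tempting to prove part~(1) semantically, via ``$p\land q$ is defined iff $p$ and $q$ admit a common refinement''; however, only the forward direction of this equivalence is available (Theorem~\ref{theorem:and-is-and}(1)), and the converse in fact \emph{fails} for IR-MIA: by rules~\emph{(F5)}/\emph{(F6)} of Def.~\ref{def:conjunction} a product state is pruned whenever a genuine may-branching of one operand meets a must-transition of the other, precisely the situation that would require a disjunctive transition, even though such a state may well possess common refinements. Hence definedness is a structural property of the pruning, and I would attack it structurally rather than through common refinements.

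For part~(1) I would exhibit a correspondence $\theta$ between the state spaces of $(P\&Q)\&R$ and $P\&(Q\&R)$ identifying $((p,q),r)$ with $(p,(q,r))$, and matching the demonic states (Def.~\ref{def:demonic state}) and the unique failure states introduced at each level. The first step is to show $\theta$ is an isomorphism of the underlying transition structures: unfolding rules~\emph{(OMust1)}/\emph{(OMust2)} (and their input counterparts \emph{(IMust1)}/\emph{(IMust2)}, \emph{(IMay)}, \emph{(OMay)}) shows that, at a combined state, an $\omega$-transition is mandatory exactly when $\omega$ is mandatory in at least one of the three components and weakly allowed in the others, and optional exactly when it is allowed in all three; both conditions are symmetric in $p,q,r$ and therefore insensitive to the grouping. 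The analogous symmetric reading must be checked for the input rules, including redirection to demonic components via \emph{(DMay1)}/\emph{(DMay2)} and to the failure state via \emph{(FMust)}/\emph{(FMay)}. The second step is to show the inconsistent sets correspond under $\theta$: the base rules~\emph{(F1)}--\emph{(F6)} are each a symmetric predicate over the three components (a mandatory action clashing with a forbidden or disallowed one, or a may-branching facing a must), and the closure rule~\emph{(F7)} propagates backwards along must-transitions, which $\theta$ preserves by the first step. Consequently the initial state is pruned on one side iff it is pruned on the other, so $(P\land Q)\land R$ is defined iff $P\land(Q\land R)$ is.

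The main obstacle is the bookkeeping for demonic and failure states. In $(P\&Q)\&R$ the demonic partner introduced for the composite operand $P\&Q$ must be shown to behave exactly as the pair of individual demonic states $p_d,q_d$ would inside $P\&(Q\&R)$, so that ``input $i$ unspecified in $P\&Q$'' aligns with ``$i$ unspecified in $P$ and in $Q$''; verifying that this aligns correctly through \emph{(DMay1)}/\emph{(DMay2)} across both groupings, together with the special handling of forbidden inputs that collapses several failure states into one, is the delicate part. The disjunctive-avoidance rules~\emph{(F5)}/\emph{(F6)} also need care, as one must confirm that grouping neither creates nor conceals a spurious may-branching within a composite component.

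Finally, part~(2) follows directly. Assuming $P\land(Q\land R)$ is defined, part~(1) guarantees that $(P\land Q)\land R$, $P\land Q$, and $Q\land R$ are all defined, so Theorem~\ref{theorem:and-is-and}(2) may be applied at every level. Two applications on the left give that $S\miarefforbidden P\land(Q\land R)$ iff $S\miarefforbidden P$, $S\miarefforbidden Q$, and $S\miarefforbidden R$; two applications on the right give that $S\miarefforbidden(P\land Q)\land R$ iff the same three conditions hold. Hence the two conjunctions have identical sets of refinements, which is exactly the claim that $S\miarefforbidden P\land(Q\land R)$ iff $S\miarefforbidden(P\land Q)\land R$.
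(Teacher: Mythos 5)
Your criticism of the semantic route is correct, and it cuts deeper than you may realize: that route is precisely the paper's own proof. The paper establishes part~(1) as a chain of equivalences, ``$P\land(Q\land R)$ is defined iff $\exists S.\,S\miarefforbidden P$ and $S\miarefforbidden Q\land R$ iff \dots{} iff $(P\land Q)\land R$ is defined'', which needs \emph{both} directions of ``$p\land q$ is defined iff $p$ and $q$ have a common refinement''. Only the direction ``defined $\Rightarrow$ common refinement exists'' is actually proved for Theorem~\ref{theorem:and-is-and} (take $p\land q$ itself and use reflexivity), and the converse fails for exactly the reason you give: for $P,Q$ of Fig.~\ref{fig:conjunction-disjunctive-trans} a common refinement exists, yet rule~\emph{(F6)} of Def.~\ref{def:conjunction} makes the initial pair inconsistent, so $P\land Q$ is undefined. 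So you have identified a genuine flaw in the paper's argument, not merely a stylistic difference.

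Unfortunately, your structural replacement fails as well, and at the precise step you flag as ``delicate'': one \emph{cannot} confirm that grouping neither creates nor conceals may-branching inside a composite component, because part~(1) of the lemma is false as stated. The flaw in your plan is treating $(P\land Q)\land R$ as if it were the nested \emph{product} $(P\& Q)\& R$; it is not --- the outer product is built on the \emph{pruned} inner conjunction, and pruning does not commute with regrouping, since deleting an inconsistent state can remove one branch of a may-nondeterminism and thereby disarm rules~\emph{(F5)}/\emph{(F6)} at the outer level. Concretely, let $I=\emptyset$, $O=\{o,o'\}$, let $P,Q$ be as in Fig.~\ref{fig:conjunction-disjunctive-trans} (so $p_0\oversetmust{o}b$ with loops $b\oversetmay{o}b$, $b\oversetmay{o'}b$; and $q_0\oversetmay{o}b_l$, $q_0\oversetmay{o}b_r$ with $b_l\oversetmust{o}b_l$, $b_r\oversetmust{o'}b_r$), and let $R$ consist of $r_0\oversetmay{o}r_1$ with $r_1\oversetmust{o}r_1$. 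Then $P\land Q$ is undefined (rule~\emph{(F6)} fires at $(p_0,q_0)$), hence $(P\land Q)\land R$ is undefined. However, in $Q\& R$ the only inconsistent state is $(b_r,r_1)$ (rule~\emph{(F1)}), and rule~\emph{(F7)} does not propagate it backwards along the incoming \emph{may}-transition, so $Q\land R$ is defined and its initial state has a \emph{single} may-$o$ successor $(b_l,r_1)$; consequently no rule of Def.~\ref{def:conjunction} applies in $P\&(Q\land R)$, and $P\land(Q\land R)$ is defined. The two groupings thus disagree on definedness, so the isomorphism $\theta$ you postulate cannot exist, and neither your argument nor the paper's can be patched without first weakening the statement --- for instance, restricting it to may-deterministic IR-MIA, where rules~\emph{(F5)}/\emph{(F6)} never fire.
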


In addition, we conclude 
the following compositionality result for \textbf{modal-irioco} with respect to conjunction.

\newcounter{conjunction-counter}
\setcounter{conjunction-counter}{\value{theorem}}
\begin{theorem}[Compositionality of Conjunction of \textbf{modal-irioco}]\label{theorem:conjunction}
	Let $s$, $s'$ and $i$ be \miaforbidden.
	Then it holds that $\left(i\mathmiocophi s\land i\mathmiocophi s'\right)\Rightarrow i\mathmiocophi s\land s'$.
\end{theorem}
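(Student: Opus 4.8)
The plan is to unfold Definition~\ref{def:mioco} for the conjunction $s\land s'$ into its two defining inclusions and to establish each separately, exploiting both the operational construction of the conjunctive product (Def.~\ref{def:conjunctive-product}) and the infimum characterization of $\land$ furnished by Theorem~\ref{theorem:and-is-and}. The technical backbone will be a structural correspondence lemma stating that, for every $\sigma$ that does not drive the conjunction into a demonic state, a state $(p_1,p_2)$ lies in $(s\land s')\mayafter\sigma$ exactly when $p_1\in s\mayafter\sigma$ and $p_2\in s'\mayafter\sigma$ and survives the pruning of Def.~\ref{def:conjunction}, with the transition modalities at $(p_1,p_2)$ being those forced by the rules \emph{(OMay)}, \emph{(IMay)}, \emph{(OMust1)}, and \emph{(OMust2)}. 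In particular this yields $\maystraces(s\land s')\subseteq\maystraces(s)\cap\maystraces(s')$ on the non-demonic fragment, which is what lets me feed a conjunction trace into the two hypotheses $i\mathmiocophi s$ and $i\mathmiocophi s'$.

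For Property~1 of $i\mathmiocophi s\land s'$ I would fix $\sigma\in\maystraces(s\land s')$ and distinguish two cases. If $\sigma$ reaches a demonic state (some common input is specified in only one of $s,s'$), then $(s\land s')\mayafter\sigma$ contains a state whose optional self-loops admit every output and quiescence, so $\mayout((s\land s')\mayafter\sigma)=O\cup\{\delta\}$ and the inclusion is immediate. Otherwise $\sigma\in\maystraces(s)\cap\maystraces(s')$, and the two hypotheses give $\mayout(i\mayafter\sigma)\subseteq\mayout(s\mayafter\sigma)\cap\mayout(s'\mayafter\sigma)$; it then remains to show that this intersection is contained in $\mayout((s\land s')\mayafter\sigma)$, which follows because rule \emph{(OMay)} makes an output optional in the conjunction precisely when it is (weakly) optional in both components, with an analogous matching for $\delta$ and $\varphi$ once the witnessing pair is shown to survive pruning.

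For Property~2 I would fix $\sigma\in\maystraces(i)$ and prove $\mustout((s\land s')\mayafter\sigma)\subseteq\mustout(i\mayafter\sigma)$. For a genuine output $o$ that is mandatory in some reachable conjunction state, rules \emph{(OMust1)}/\emph{(OMust2)} force $o$ to be mandatory in the corresponding $s$- or $s'$-state, so $o\in\mustout(s\mayafter\sigma)\cup\mustout(s'\mayafter\sigma)$, and Property~2 of the two hypotheses places $o$ in $\mustout(i\mayafter\sigma)$. The subtle cases are the must-quiescence marker $\delta$ and the must-failure marker $\varphi$: a product state $(p_1,p_2)$ can be must-quiescent even though neither projection is, because \emph{(OMay)} only retains outputs common to both components. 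To cover this I expect to additionally invoke Property~1 of \emph{both} hypotheses, using that $\mayout(i\mayafter\sigma)$ is squeezed into $\mayout(s\mayafter\sigma)\cap\mayout(s'\mayafter\sigma)$ and reasoning over the entire $\mayafter$-sets (rather than single states) to exhibit a must-quiescent, resp.\ must-failing, residual state of $i$.

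The main obstacle is exactly this quiescence/failure bookkeeping under non-determinism and the demonic state: because the conjunction is a genuine product, the modal output behaviour of a conjunction state is \emph{not} the pointwise image of any single component state, so the required must-inclusions must be argued at the level of whole $\mayafter$-sets and by combining Property~1 and Property~2 of the two hypotheses simultaneously. I would therefore invest most effort in the structural correspondence lemma and in a careful case analysis of $\delta_\Box$ and $\varphi_\Box$ at product states, taking care of weak transitions, interior $\tau$-steps, and the pruning of inconsistent states that is performed when passing from $s\&s'$ to $s\land s'$.
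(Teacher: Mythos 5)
Your proof skeleton coincides with the paper's: unfold Def.~\ref{def:mioco} for $s\land s'$, split Property~1 according to whether $\sigma$ remains specified in both components or runs into underspecification, and for Property~2 push mandatory outputs of the conjunction back into the components. However, two of your steps do not hold as written. First, the demonic case of Property~1 is justified incorrectly: a demonic state never occurs in $s\land s'$ as a state of its own, but only as one component of a pair $(p',q_d)$ created by rule \emph{(DMay1)} or \emph{(DMay2)} of Def.~\ref{def:conjunctive-product}, and by rules \emph{(OMay)}, \emph{(OMust1)} and \emph{(DMay1)} such a pair behaves exactly like $p'$ --- its may-outputs are $\mayout(p')$, not $O\cup\{\delta\}$. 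Hence the inclusion is not ``immediate''; you still need the hypothesis $i\mathmiocophi s$ for the component in which $\sigma$ stays specified. This is exactly how the paper argues this case (it shows $\mayout(s\land s'\mayafter\sigma)=\mayout(s\mayafter\sigma)$ when $\sigma\in\maystraces(s)\setminus\maystraces(s')$ and then invokes $i\mathmiocophi s$), so this gap is fixable.

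The second issue is the one you flag yourself as ``the main obstacle,'' the $\delta$-case of Property~2, and here your proposed repair cannot succeed. You are right that a pair can be must-quiescent although neither projection is (rule \emph{(OMay)} keeps only outputs common to both components), but squeezing $\mayout(i\mayafter\sigma)$ into $\mayout(s\mayafter\sigma)\cap\mayout(s'\mayafter\sigma)$ does not force $i$ to possess a must-quiescent residual state, because that intersection may be nonempty. Concretely, take $I=\emptyset$, $O=\{c,a,b\}$, let $s$ be $s_0\oversetmay{c}s_1\oversetmay{a}s_2$, let $s'$ have the non-deterministic branches $s'_0\oversetmay{c}s'_A\oversetmay{a}s'_{A1}$ and $s'_0\oversetmay{c}s'_B\oversetmay{b}s'_{B1}$, and let $i=s$. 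Then $i\mathmiocophi s$ (reflexivity) and $i\mathmiocophi s'$ (we have $\mayout(i\mayafter c)=\{a,\delta\}\subseteq\{a,b,\delta\}$, and every set $\mustout(s'\mayafter\sigma)$ with $\sigma\in\maystraces(i)$ is $\emptyset$ or $\{\delta\}$ with $i$ matching it), and no state pair is inconsistent, since every rule \emph{(F1)}--\emph{(F7)} of Def.~\ref{def:conjunction} needs a must-transition and there are none. Yet $(s_1,s'_B)\in(s\land s')\mayafter c$ has no common output and is therefore must-quiescent, so $\delta\in\mustout((s\land s')\mayafter c)$, whereas $i\mayafter c=\{i_1\}$ has may-output $a$, so $\delta\notin\mustout(i\mayafter c)$: Property~2 of the conclusion fails. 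So this case cannot be closed by any bookkeeping over $\mayafter$-sets. Note that the paper's own proof is silent on exactly this point: its claim $\mustout(s\land s'\mayafter\sigma)\subseteq\mustout(s\mayafter\sigma)\cup\mustout(s'\mayafter\sigma)$ is sound for proper outputs and for $\varphi$, but false for $\delta$ on the example above. In short, you correctly located the genuine weak point of the theorem, but your plan cannot discharge it, because the statement as given (with $\delta$ counted in $\mustout$) does not hold there.
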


\subsection{Disjunction}\label{subsec:disjunction}
Besides conjunction, the original MIA theory also provides an operator for disjunction.
The binary \emph{disjunction} of two MIA, $P$ and $Q$, constructs a MIA $P\lor Q$
integrating all variants permitted by $P$ \emph{or} $Q$ (\ie{} $P\sqsubseteq P\lor Q$ and $Q\sqsubseteq P\lor Q$).
To this end, a fresh initial state is introduced in $P\lor Q$, from which
a non-deterministic choice either leads to the original initial state of $P$ or $Q$.
Intuitively, for this construction to yield a correct result, the initial choice must be
defined as disjunctive transition as described in the previous section.
\begin{definition}[Disjunction for IR-MIA with Disjunctive Transitions]\label{def:disjunction}
	Consider two \miaforbidden $(P_1,I,O,\linebreak\mustarrowsub{P1},\mayarrowsub{P1},p_{1\Phi})$ and $(P,I,O,\mustarrowsub{P2},\mayarrowsub{P2},p_{2\Phi})$ with common alphabets and initial states $p_{10}$ and $p_{20}$.
	Assuming $P_1\cap P_2=\emptyset$, the \emph{disjunction} is defined as $(P_1\cup P_2,I,O,\mustarrow,\mayarrow,p_\Phi)$, where $\mustarrow$ and $\mayarrow$ are the least sets satisfying the conditions $\mustarrowsub{P1}\subseteq\mustarrow$, $\mayarrowsub{P1}\subseteq\mayarrow$, $\mustarrowsub{P2}\subseteq\mustarrow$, $\mayarrowsub{P2}\subseteq\mayarrow$, and the following rules:
	\begin{tabbing}
		(IMay2)\; \= $p_{10}\lor p_{20}\oversetmay{\tau}p_{10}$, $p_{10}\lor p_{20}\oversetmay{\tau}p_{20}$\; \= if\; \= \kill
		(Must) \> $p_{10}\lor p_{20}\oversetmust{\tau}\{p_{10},q_{10}\}$ \> \> \\
		(May) \> $p_{10}\lor p_{20}\oversetmay{\tau}p_{10}$, $p_{10}\lor p_{20}\oversetmay{\tau}p_{20}$ \> \> \\
		(IMust) \> $p_{10}\lor p_{20}\oversetmust{i}P_1'\cup P_2'$ \> if \> $p_{10}\oversetmust{i}P_1'$ and $p_{20}\oversetmust{i}P_2'$ \\
		(IMay1) \> $p_{10}\lor p_{20}\oversetmay{i}p_1'$ \> if \> $p_{10}\oversetmay{i}p_1'$ \\
		(IMay2) \> $p_{10}\lor p_{20}\oversetmay{i}p_2'$ \> if \> $p_{20}\oversetmay{i}p_2'$ \\
	\end{tabbing}
	Furthermore, for each input may-transition to $p_{1\Phi}$ or $p_{2\Phi}$, the target is replaced by $p_{1\Phi}\lor p_{2\Phi}$.
\end{definition}
Rules~\emph{(Must)} and~\emph{(May)} introduce $\tau$-steps for the non-deterministic choice 
following the new initial state.
Rules~\emph{(IMust)}, \emph{(IMay1)} and~\emph{(IMay2)} define the input behaviors of the 
original initial states also for the new initial state, as parallel composition 
does not allow for $\tau$-steps preceding to inputs of 
common actions (\cf{}~Sect.~\ref{sec:compositionality}).

As already demonstrated in the previous section, disjunctive transitions 
obstruct compatibility of \miocophi{} and refinement (\cf~Lemma~\ref{lemma:mioco-disjunctive-transitions}).
However, in contrast to conjunction where disjunctive transitions are only
used for handling corner cases, their role in constructing the disjunction is essential.
Hence, we do not consider disjunction of IR-MIA in the \miocophi{} testing theory in the following.
%
\section{Related Work}\label{sec:related}

We discuss related work on modal conformance relations, 
testing equivalences, alternative formulations of, and extensions to I/O conformance testing 
and composition/decomposition results in I/O conformance testing.

Various interfaces theories have been presented
defining \emph{modal conformance relations} by means
of different kinds of modal refinement relations~\cite{Raclet2011}.
Amongst others, Bauer \etal use interface automata for
compositional reasoning~\cite{Bauer2010}, whereas Alur
\etal characterize modal conformance as alternating
simulation relation on interface automata~\cite{Alur1998}, and
Larsen \etal have shown that 
both views on modal conformance coincide~\cite{Larsen2007}.
Based on our own previous work on modal I/O conformance testing~\cite{Lochau2014,Luthmann2015b},
we present, to the best of our knowledge,
the first comprehensive testing theory by means of a 
modal I/O conformance relation.
More recently, Bujtor \etal proposed testing 
relations on modal transition systems~\cite{Bujtor2015b}
based on (existing) test-suites, rather than being
specification-based as our approach.

In contrast to I/O conformance relations, 
\emph{testing equivalences} constitute a special class 
of (observational) equivalence relations~\cite{Nicola1987,Rensink2007}.
One major difference to \textbf{ioco}-like theories is that actions are 
usually undirected, thus no distinction 
between (input) refusals and (output) quiescence is made as 
in our approach~\cite{Phillips1987,Bourdonov2006}.

Concerning \emph{alterations of and extensions
to I/O conformance testing}, 
Veanes \etal and Gregorio-Rodr{\'i}guez \etal
propose to reformulate
I/O conformance from suspension-trace inclusion 
to an alternating simulation to obtain a more 
fine-grained conformance notion 
constituting a preorder~\cite{Veanes2012,Gregorio2013}.
However, these approaches neither
distinguish optional from mandatory behaviors, nor 
underspecified from forbidden inputs as in our approach.
Heerink and Tretmans extended \ioco
by introducing so-called channels (\ie subsets of I/O labels) 
for weakening the requirement of input-enabledness 
of implementations under test in order to also support refusal testing~\cite{Heerink1997}.
However, their notion of input refusals refers to a global property
rather than being specific to particular states 
and they also do not distinguish mandatory from optional behaviors.
Beohar and Mousavi extend \ioco by replacing IOLTS
with so-called {\em Featured Transition Systems} 
(FTS) and thereby enhance \ioco to express fine-grained behavioral variability as apparent in
software product lines~\cite{Beohar2014}.
As in our approach, FTS allow the environment to explicitly
influence the presence or absence of
particular transitions, whereas compositionality
properties are not considered.

Concerning \emph{(de-)compositionality in I/O conformance testing},
van der Bijl \etal present a compositional version
of \ioco with respect to synchronous parallel composition
on IOTS~\cite{Bijl2004}, whereas Noroozi \etal
consider asynchronously interacting components~\cite{Noroozi2011}.
To overcome the inherent limitations of compositional
I/O conformance testing, Daca \etal introduce alternative criteria for obtaining compositional 
specifications~\cite{Daca2014}.
Concerning decomposition in I/O conformance testing,
Noroozi \etal describe a framework for decomposition of \ioco
testing similar to our setting.
However, all these related approaches neither
distinguish mandatory from optional behaviors, nor support
input refusals as in our approach.

Finally, operators specifically tailored to 
on modal interface specifications, like conjunction and disjunction, have already
been investigated before~\cite{Raclet2009}, but not in the context
of (modal) input/output conformance testing as done in our work.

\section{Conclusion}\label{sec:conclusion}

We proposed a novel foundation for modal I/O-conformance testing theory
based on a modified version of Modal Interface Automata with Input Refusals
and show correctness and (de-)compositionality properties of the
corresponding modal I/O conformance relation called \textbf{modal-irioco}.
As a future work, we are interested in properties of \textbf{modal-irioco}
regarding compositionality with respect to
further operators on IR-MIA, such as interface conjunction~\cite{Luettgen2014} and asynchronous parallel composition~\cite{Noroozi2011}.
Furthermore, we aim at generating test suites exploiting the capabilities of \textbf{modal-irioco}, \ie test cases distinguishing optional from mandatory behaviors, as well as recognizing refused inputs.

\bibliographystyle{eptcs}
\bibliography{content/sources}

\newpage

\begin{appendix}

\section{Overview on Composition Operators}

\begin{table}[h]
\centering
\caption{Overview on Properties of the different Composition Operators on IR-MIA.}\label{table:restrictions-overview}
\begin{tabular}{C{.09\textwidth}|L{.43\textwidth}|L{.36\textwidth}}
	\toprule

	\multicolumn{1}{c}{Operator} & \multicolumn{1}{c}{Defined if} & \multicolumn{1}{c}{Compositional if} \\

	\midrule

	$P\miarefforbidden Q$ & (no restrictions) & implementation is weak must-input-enabled \\\midrule

	$P\land Q$ & 
  (1) actions being mandatory in $P$ are not forbidden in $Q$ (and vice versa),
  (2) no disjunctive transitions occur & $P\land Q$ is defined \\\midrule

	$P\lor Q$ & (never due to disjunctive transitions) & (not applicable) \\\midrule

	$P\parallelforbidden Q$ & $P$ and $Q$ are compatible & implementations are strong must-input-enabled \\\midrule

	$P\hidingforbidden Q$ & $P$ and $Q$ are compatible & specifications and implementations are strong must-input-enabled \\\midrule

	$P\sslashforbidden Q$ & 
  (1) for every transition in $P$ labeled with a shared action, there exist a corresponding transition in $Q$,
  (2) forbidden actions in $P$ are forbidden in $Q$. 
  & 
 (1) all outputs of the composed system are mandatory,
 (2) implementations are weak must-input-enabled
    \\

	\bottomrule
\end{tabular}
\end{table}

\section{Proofs}\label{sec:proofs}

\newcounter{theoremnumbering}

\subsection{Proof of Lemma~\ref{lemma:may-input-enabledness-preservation}}\label{subsec:input-enabledness-preservation-proof}

\setcounter{theoremnumbering}{\value{lemma}}
\setcounter{lemma}{\value{may-input-enabledness-preservation-counter}}

\begin{lemma}
If \miaforbidden $i$ is strong may-input-enabled
then $i'\miarefforbidden i$ is strong may-input-enabled.
\end{lemma}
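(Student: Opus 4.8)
The plan is to work directly with a witnessing refinement relation and to exploit the fact that strong may-input-enabledness of $i$ leaves no input action unspecified in any non-failure state, which is exactly what neutralises the one dangerous case. Let $\mathcal{R}$ be an IR-MIA refinement relation with $(i',i)\in\mathcal{R}$, as guaranteed by $i'\miarefforbidden i$. First I would establish a local property: for every pair $(p,q)\in\mathcal{R}$ with $p\neq p_\Phi$, the state $p$ is strong may-input-enabled. Indeed, Clause~\ref{refine-failure} of Def.~\ref{def:mia-refinement-forbidden} gives $q\neq q_\Phi$, so strong may-input-enabledness of $i$ (Def.~\ref{def:mia-input-enabledness}) yields $q\oversetmay{a}q'$ for every $a\in I$; Clause~\ref{refine-may-i}, instantiated at $a$, then forces $p\oversetmay{a}\Oversetmay{\epsilon}p''$, whose leading single step $p\oversetmay{a}$ is precisely what strong may-input-enabledness of $p$ demands.

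It remains to argue that this local property reaches every relevant state of $i'$, i.e. that every reachable non-failure state of $i'$ occurs as a left component of some pair in $\mathcal{R}$. I would prove this by showing that the set $D=\{p\neq p_\Phi\mid \exists q.(p,q)\in\mathcal{R}\}$ is closed under the may-transitions of $i'$. Consider $(p,q)\in\mathcal{R}$ and a transition $p\oversetmay{\alpha}p'$ with $p'\neq p_\Phi$. If $\alpha\in O\cup\{\tau\}$, Clause~\ref{refine-may-o} supplies $q'$ with $(p',q')\in\mathcal{R}$, so $p'\in D$. If $\alpha=a\in I$, then $q\neq q_\Phi$ together with strong may-input-enabledness of $i$ gives $q\oversetmay{a}$; hence both premises of Clause~\ref{refine-may-i-consistency} hold and it supplies $q'$ with $(p',q')\in\mathcal{R}$, so again $p'\in D$. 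Since $(i',i)\in\mathcal{R}$ places the initial state $i'$ into $D$, every state reachable from $i'$ via may-transitions lies in $D$; the only way to leave $D$ is a step into $p_\Phi$, which stops there as $p_\Phi$ has no outgoing transitions by Def.~\ref{def:ir-mia}. Combining the two steps, every reachable non-failure state of $i'$ is strong may-input-enabled.

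The main obstacle to watch for is exactly the case of inputs that are unspecified in the specification $i$ but freshly introduced under refinement: for such inputs the refinement clauses relate the reached state to nothing, so it could in principle fail to be input-enabled. The crux of the argument is the observation that this case cannot occur here, because strong may-input-enabledness of $i$ means no non-failure state of $i$ has an unspecified input, so the premise $q\oversetmay{a}$ of Clause~\ref{refine-may-i-consistency} is always satisfied and the closure step never breaks down. A minor point I would settle explicitly is that the claim is about reachable states: either I phrase the conclusion in terms of states reachable from $i'$, or I note that unreachable states are immaterial to input-enabledness in the intended semantics.
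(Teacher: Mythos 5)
Your proof is correct, and it rests on the same ingredients as the paper's own proof (Clauses~\ref{refine-failure}, \ref{refine-may-i-consistency}, \ref{refine-may-i} and~\ref{refine-may-o} of Def.~\ref{def:mia-refinement-forbidden}), but it is organized along a genuinely different and tighter route. The paper argues negatively: it enumerates the ways enabledness could be lost (an input transition removed, or a transition retargeted to a non-enabled state) and dismisses each case informally, asserting in particular that a retargeted transition must point to a state whose behavior is ``equivalent'' to the old target --- which overstates what refinement guarantees and leaves the closure argument implicit. You argue positively, via an invariant: (i) every state of $i'$ related by $\mathcal{R}$ to a non-failure state of $i$ is itself strong may-input-enabled, by Clause~\ref{refine-failure} plus Clause~\ref{refine-may-i}; and (ii) the domain of $\mathcal{R}$ is closed under the reachable may-steps of $i'$, by Clause~\ref{refine-may-o} for outputs and $\tau$, and by Clause~\ref{refine-may-i-consistency} for inputs, whose premise $q\oversetmay{a}$ is exactly what strong may-input-enabledness of $i$ supplies. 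Making that last point explicit is the real gain of your version: inputs freshly introduced under refinement are the only place where $\mathcal{R}$ gives no constraint on the target, and they cannot occur here precisely because $i$ leaves no input unspecified --- a case the paper's proof never names. Your closing caveat is also apt: Def.~\ref{def:mia-input-enabledness} quantifies over all non-failure states, so an unreachable, non-enabled state of $i'$ would technically falsify the lemma as stated; both your argument and the paper's implicitly read input-enabledness over the reachable part (consistent with the paper identifying an automaton with its initial state), and stating that restriction explicitly, as you propose, is the right repair.
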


\setcounter{lemma}{\value{theoremnumbering}}

\begin{proof}
	We prove that \miaforbidden refinement preserves strong may-input-enabledness.
	Assume two \miaforbidden, $P$ and $Q$, with $P\miarefforbidden Q$ and $Q$ being strong may-input-enabled.
	There are two possible reasons why strong may-input-enabledness may be lost under refinement:
	(1) An input transition may be removed, and (2) the target of a transition may be changed to a new state not being may-input-enabled.
	However, under \miaforbidden refinement, both cases are not possible.
	\begin{enumerate}
		\item According to property~5 of Def.~\ref{def:mia-refinement-forbidden}, it must hold that $q\oversetmay{i}q'$ implies $\exists p'.p\oversetmay{i}\Oversetmay{\epsilon}p'$ and $(p',q')\in\mathcal{R}$.
		But, it is impossible to remove input transitions under \miaforbidden refinement.
		\item Now we have to look at the possibility of changing the target of an input transition to a new state not being strongly may-input-enabled.
		Properties~3 and~6 ensure that output transitions must be preserved if they are mandatory or may be removed if they are optional.
		However, \miaforbidden refinement only allows to change the target to a new state, if the behavior of that new state is equivalent
		to the old (may-input-enabled) target state, thus also being strong may-input-enabled.
		Otherwise $(p',q')\in\mathcal{R}$ would be violated.
		The same holds for input transitions, but with one exception: input transitions may change their target to the failure state
		under \miaforbidden refinement.
		By definition, the failure state does not have any
		(input) behavior, but this does not obstruct input-enabledness
		as the failure state is excluded from this requirement.
	\end{enumerate}
	Under \miaforbidden refinement, it is impossible to remove input transitions or change
	the target of a transition to a new state not being strong may-input-enabled.
	Therefore, strong may-input-enabledness is always preserved under IR-MIA refinement.
\end{proof}

\subsection{Proof of Theorem~\ref{theorem:mioco-refinement-compatability}}\label{subsec:mioco-ref-compat-proof}

\setcounter{theoremnumbering}{\value{theorem}}
\setcounter{theorem}{\value{mioco-refinement-compatability-counter}}

\begin{theorem}
	Let $i,s$ be \miaforbidden, $i$ being weak may-input-enabled
	and $i\mathmiocophi s$.
	Then for each $i'\miarefforbidden i$ there exists
	$s'\miarefforbidden s$ such that $i'\mathmiocophi s'$ holds.
\end{theorem}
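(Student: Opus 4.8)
The plan is to synthesize $s'$ as a modal refinement of $s$ whose optional behaviour is resolved in lockstep with the way $i'$ resolves the optional behaviour of $i$. A first instinct is to try $s' = s$, and indeed Condition~1 of Def.~\ref{def:mioco} survives this choice: for $\sigma\in\maystraces(s)$ modal refinement bounds may-behaviour from above, so $\mayout(i'\mayafter\sigma)\subseteq\mayout(i\mayafter\sigma)$ by $i'\miarefforbidden i$ (clause~\ref{refine-may-o} of Def.~\ref{def:mia-refinement-forbidden}), and $\mayout(i\mayafter\sigma)\subseteq\mayout(s\mayafter\sigma)$ by Condition~1 of $i\mathmiocophi s$. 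The choice $s'=s$ fails, however, for Condition~2: a state of $s$ carrying a mandatory output may be \emph{may}-reachable after $\sigma$ only through optional transitions, and $i'$ is free to prune exactly those optional transitions in $i$, so that the corresponding mandatory output disappears from $\mustout(i'\mayafter\sigma)$ while it persists in $\mustout(s\mayafter\sigma)$. The remedy is to prune the same optional paths in $s$; since modal refinement may freely delete optional transitions, the result $s'$ is still a refinement of $s$.

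Concretely, I would fix a refinement relation $\mathcal{R}$ witnessing $i'\miarefforbidden i$ and define $s'$ from $s$ by retaining every must-transition of $s$ (as forced by clauses~\ref{refine-must-i} and~\ref{refine-must-o}) and, for each optional transition of $s$, keeping, strengthening to mandatory, or deleting it --- and, for optional inputs, possibly redirecting it to $q_\Phi$ --- according to how $i'$ treats the matching optional behaviour of $i$ along the common traces. Because modal-irioco relates $i$ and $s$ only trace-wise and not by a simulation, this matching has to be phrased over \maystraces rather than by composing $\mathcal{R}$ with a relation between $i$ and $s$; the key bookkeeping is that an optional output of $s$ is removed in $s'$ precisely when, after every trace through it, $i'$ no longer offers the associated mandatory output, and likewise for quiescence ($\delta$) and refusal ($\varphi$) obligations. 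That $s'\miarefforbidden s$ then holds by the natural relation pairing each state of $s'$ with the state of $s$ it was carved from, checking the six clauses of Def.~\ref{def:mia-refinement-forbidden}: only optional behaviour is ever weakened, removed, or turned into a refusal, and mandatory behaviour is untouched. Well-definedness of $i'\mathmiocophi s'$ is guaranteed because $i$ weak may-input-enabled forces $i'$ to be so as well, by (the weak analogue of) Lemma~\ref{lemma:may-input-enabledness-preservation}.

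With $s'$ in hand, Condition~1 follows as above once one observes $\maystraces(s')\subseteq\maystraces(s)$ and that no may-output of $s$ needed for the inclusion was among the pruned paths (those carried no $i'$ may-output either). Condition~2 is then discharged by construction: for $\sigma\in\maystraces(i')$, any $o\in\mustout(s'\mayafter\sigma)$ survives pruning exactly when the witnessing mandatory behaviour is may-reachable in $s'$, which by the matching step forces $o\in\mustout(i'\mayafter\sigma)$; traces $\sigma$ that $i'$ performs through freshly added (previously unspecified) inputs fall outside $\maystraces(s')$, so $s'\mayafter\sigma=\emptyset$ and the inclusion is vacuous. The main obstacle is the construction of $s'$ itself: since $i\mathmiocophi s$ provides only trace inclusions rather than a state relation, the resolution of $s$'s optional behaviour must be justified trace-by-trace and simultaneously reconciled with the state-based refinement clauses, while correctly tracking the interaction of may-reachability with the two derived observations $\delta$ and $\varphi$ under both pruning of optional outputs and addition of unspecified inputs.
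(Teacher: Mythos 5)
You take a genuinely different route from the paper: the paper's proof constructs a single \emph{unifying specification} $s_u$ --- essentially $s$ itself, augmented with $\tau$-transitions from states having only optional outputs to a fresh must-quiescent state --- and claims $i'\mathmiocophi s_u$ for \emph{all} $i'\miarefforbidden i$ simultaneously, whereas you build a separate $s'$ for each $i'$ by resolving the optional behaviour of $s$ in lockstep with $i'$. That per-refinement strategy is legitimate in principle (the theorem only asks for existence of some $s'$ for each $i'$), and your diagnosis of why $s'=s$ fails --- mandatory outputs of $s$ sitting below optional transitions can vanish from $\mustout(i'\mayafter\sigma)$ when $i'$ prunes the matching optional branches of $i$ --- identifies the real crux, which the paper's construction, incidentally, never addresses at all.

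However, the step meant to discharge this difficulty is a genuine gap, and it cannot be repaired. Your key bookkeeping claim, that the optional paths of $s$ you prune ``carried no $i'$ may-output either'', is false in general: \miocophi{} compares only unions of output sets over after-sets, so it cannot see \emph{which} nondeterministic branch of $i$ carries \emph{which} obligation of $s$. Concretely, take $I=\emptyset$, let $s$ be $s_0\oversetmay{!a}s_1$, $s_1\oversetmust{!b}s_2$, $s_1\oversetmust{!c}s_3$, and let $i$ be $p_0\oversetmay{!a}p_1$, $p_0\oversetmay{!a}p_2$, $p_1\oversetmust{!b}p_3$, $p_2\oversetmust{!c}p_4$. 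Then $i\mathmiocophi s$ holds, since $\mustout(i\mayafter !a)=\{!b\}\cup\{!c\}$ covers $\mustout(s\mayafter !a)=\{!b,!c\}$ (all other traces are routine), and the IR-MIA $i'$ given by $p_0'\oversetmay{!a}p_1'$, $p_1'\oversetmust{!b}p_3'$ satisfies $i'\miarefforbidden i$ by pruning one optional branch. But now \emph{no} $s'\miarefforbidden s$ satisfies $i'\mathmiocophi s'$: any retained $!a$-transition of $s'$ must lead to a refinement of $s_1$, which by clause~\ref{refine-must-o} of Def.~\ref{def:mia-refinement-forbidden} keeps weak must-transitions for both $!b$ and $!c$, so $!c\in\mustout(s'\mayafter !a)\nsubseteq\mustout(i'\mayafter !a)=\{!b\}$; and if $s'$ drops $!a$, then $!a\in\mayout(i'\mayafter\epsilon)\nsubseteq\mayout(s'\mayafter\epsilon)$ (moreover $\delta\in\mustout(s'\mayafter\epsilon)$ while $\mustout(i'\mayafter\epsilon)=\emptyset$). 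So both horns of your ``keep, strengthen, or delete'' case split fail, and the trace-by-trace matching you defer to is not merely awkward bookkeeping --- it is impossible here; the obstacle you flag at the end as ``the main obstacle'' is fatal rather than technical. For what it is worth, the same example defeats the paper's own argument: the unifying specification $s_u$ only adds transitions and thus keeps the $!a$-branch to $s_1$, so $i'\mathmiocophi s_u$ fails identically, and the added fresh must-quiescent states even inject $\delta$ into $\mustout(s_u\mayafter\sigma)$, which implementations need not match.
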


\setcounter{theorem}{\value{theoremnumbering}}

\begin{proof}
	We construct a \emph{unifying specification} $s_u$ serving as $s'$
	for all $i'\miarefforbidden i$, by initially setting $s_u=s$.
	As $i$ is may-input-enabled, inputs in $i'$ are either
	may-failure, must-failure, or implemented as mandatory behavior.
	Hence, we do not have to modify $s_u$ as optional inputs of
	$i$ are either also optional in $s_u$ or unspecified
	thus allowing input behaviors to be may-failure, must-failure, optional as well as mandatory.
	However, it is possible that $\delta\in\mayout(i'\mayafter\sigma)$
	although $\delta\notin\mayout(s\mayafter\sigma)$ if there are states
	having only optional outputs.
	In this case, we add a $\tau$-transition to every state in $s_u$, having only
	optional output behavior leading to a fresh must-quiescent
	state without any output transitions, such that $\delta\in\mayout(s_u\mayafter\sigma)$.
	From Def.~\ref{def:mia-refinement-forbidden}, it follows that $s_u\miarefforbidden s$.
	Thus, $\forall i'\miarefforbidden i: i'\mathmiocophi s_u$ holds and therefore the claim holds.
\end{proof}

\subsection{Proof of Theorem~\ref{theorem:soundness-completeness-mioco}}\label{subsec:soundness-completeness-proof}

\setcounter{theoremnumbering}{\value{theorem}}
\setcounter{theorem}{\value{soundness-completeness-mioco-counter}}

\begin{theorem}[\textbf{modal-irioco} is correct]
	Let $i,s$ be \miaforbidden, $i$ being weak may-input-enabled.
	\begin{enumerate}
		\item If $i\mathmiocophi s$, then for all $i'\miarefforbidden i$,
		there exists $s'\miarefforbidden s$ such that $i'\mathioco s'$.
		\item If there exists $i'\miarefforbidden i$
		such that $i'\mathioco s'$ does not hold for any $s'\miarefforbidden s$,
		then $i\mathmiocophi s$ does not hold.
	\end{enumerate}
\end{theorem}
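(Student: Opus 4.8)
The plan is to reduce both parts to the already-established Theorem~\ref{theorem:mioco-refinement-compatability}, after first pinning down the precise relationship between \textbf{ioco} on \miaforbidden (Def.~\ref{def:ioco-mia}) and \textbf{modal-irioco} (Def.~\ref{def:mioco}). The key observation is that Def.~\ref{def:ioco-mia} is \emph{verbatim} Clause~1 of Def.~\ref{def:mioco}: for any \miaforbidden $x$ and $y$, the statement $x\mathioco y$ asserts exactly $\forall\sigma\in\maystraces(y):\mayout(x\mayafter\sigma)\subseteq\mayout(y\mayafter\sigma)$, which is the first conjunct of $x\mathmiocophi y$. Consequently $x\mathmiocophi y$ implies $x\mathioco y$ unconditionally, since \textbf{modal-irioco} merely imposes the additional Clause~2.

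First I would prove Part~1 (soundness). Assuming $i\mathmiocophi s$ with $i$ weak may-input-enabled, I fix an arbitrary $i'\miarefforbidden i$. By Theorem~\ref{theorem:mioco-refinement-compatability} there exists $s'\miarefforbidden s$ with $i'\mathmiocophi s'$, and applying the observation above with $x=i'$, $y=s'$ yields $i'\mathioco s'$, which is exactly the claim. Thus Part~1 is a direct corollary, using the unifying specification constructed in the proof of Theorem~\ref{theorem:mioco-refinement-compatability} to supply the witness $s'$.

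Next I would obtain Part~2 as the logical contrapositive of Part~1, so that no fresh construction is required. Writing $\Phi(i,s)$ for the universally quantified statement $\forall i'\miarefforbidden i\colon\exists s'\miarefforbidden s\colon i'\mathioco s'$, Part~1 reads $i\mathmiocophi s\Rightarrow\Phi(i,s)$. The hypothesis of Part~2 --- that some $i'\miarefforbidden i$ fails $i'\mathioco s'$ for \emph{every} $s'\miarefforbidden s$ --- is precisely $\neg\Phi(i,s)$, since negating the outer $\forall$ and inner $\exists$ flips them to $\exists$ and $\forall$. Hence contraposition of Part~1 gives $\neg\Phi(i,s)\Rightarrow\neg(i\mathmiocophi s)$, which is the conclusion of Part~2.

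I do not expect a genuine obstacle, as all the difficulty has been discharged by Theorem~\ref{theorem:mioco-refinement-compatability}; the only point requiring care is the quantifier bookkeeping in the passage from Part~1 to Part~2, where one must recognise that the existential hypothesis of Part~2 is literally the negation of the universally quantified conclusion of Part~1, so a single implication serves both. It is worth remarking that the converse of Part~1 fails (cf.\ Fig.~\ref{fig:soundness-converse-counter}), because \textbf{ioco} disregards Clause~2 of Def.~\ref{def:mioco} and therefore cannot enforce mandatory outputs of $s$; this is exactly why completeness is stated in the rejection form of Part~2 rather than as a two-sided equivalence.
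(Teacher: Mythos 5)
Your proof is correct. For Part~1 you take essentially the paper's route: the paper likewise discharges soundness via the unifying specification $s_u$ from the proof of Theorem~\ref{theorem:mioco-refinement-compatability}, together with the observation that Def.~\ref{def:ioco-mia} is verbatim Clause~1 of Def.~\ref{def:mioco}; the only cosmetic difference is that you invoke Theorem~\ref{theorem:mioco-refinement-compatability} as a black box (so your witness $s'$ may a priori vary with $i'$), whereas the paper names the uniform witness $s_u$ explicitly. Part~2 is where you genuinely diverge. The paper does \emph{not} argue by contraposition: it re-enters the construction of $s_u$, instantiates the hypothesis at $s'=s_u$, extracts a trace $\sigma\in\maystraces(s_u)$ with $\mayout(i'\mayafter\sigma)\nsubseteq\mayout(s_u\mayafter\sigma)$, and then transfers this violation from the refinement $i'$ up to $i$ itself to conclude $\mayout(i\mayafter\sigma)\nsubseteq\mayout(s\mayafter\sigma)$ --- a step that implicitly relies on may-behaviors only shrinking under \miaforbidden refinement. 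Your observation that the hypothesis of Part~2 is literally $\neg\Phi(i,s)$, the negation of Part~1's universally quantified conclusion, renders that entire direct argument unnecessary: under the same standing assumptions ($i$ weak may-input-enabled), Part~2 is the contrapositive of Part~1 and holds by pure logic. Your shortcut buys brevity and sidesteps the delicate (and in the paper only half-spelled-out) transfer step from $i'$ to $i$; what the paper's longer route buys is diagnostic content, namely a concrete failing trace and offending output witnessing exactly where Clause~1 of \miocophi breaks, but this is logically redundant given Part~1.
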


\setcounter{theorem}{\value{theoremnumbering}}

\begin{proof}
	We prove both parts separately.
	\begin{itemize}
		\item We make use of the \emph{unifying specification} $s_u$ from the proof of Theorem~\ref{theorem:mioco-refinement-compatability} (\cf Sect.~\ref{subsec:mioco-ref-compat-proof}).
			Therefore, it holds that $\forall i'\miarefforbidden i: i'\mathioco s_u$.

		\item For this part, we rely on the \emph{unifying specification} $s_u$.
			Let $i'\miarefforbidden i$ be a \miaforbidden such that
			$i'\mathioco s_u$ does not hold, \ie there exists a trace
			$\sigma\in\maystraces(s_u)$ such that $\mayout(i'\mayafter\sigma)\nsubseteq\mayout(s_u\mayafter\sigma)$.
			Thus, we have $\mayout(i\mayafter\sigma)\neq\emptyset$.
			From the construction of $s_u$, it follows
			that there is an $\omega\in\mayout(i\mayafter\sigma)\setminus\mayout(s_u\mayafter\sigma)$.
			But, then $i\mathmiocophi s$ does not hold since $\mayout(i\mayafter\sigma)\nsubseteq\mayout(s\mayafter\sigma)$.
	\end{itemize}
	Hence, \miocophi is sound and complete.
\end{proof}

\subsection{Proof of Theorem~\ref{theorem:mioco-preorder}}\label{subsec:mioco-preorder-proof}

\setcounter{theoremnumbering}{\value{theorem}}
\setcounter{theorem}{\value{mioco-preorder-counter}}

\begin{theorem}
	\miocophi is a preorder on the set of weak may-input-enabled \miaforbidden.
\end{theorem}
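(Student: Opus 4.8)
The plan is to establish the two defining properties of a preorder separately. Reflexivity is immediate: for any weak may-input-enabled \miaforbidden{} $i$, both clauses of Def.~\ref{def:mioco} collapse to the trivially valid self-inclusions $\mayout(i\mayafter\sigma)\subseteq\mayout(i\mayafter\sigma)$ and $\mustout(i\mayafter\sigma)\subseteq\mustout(i\mayafter\sigma)$, so $i\mathmiocophi i$.

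The heart of the argument is transitivity: assuming $i\mathmiocophi j$ and $j\mathmiocophi k$ with $i,j,k$ all weak may-input-enabled, I want $i\mathmiocophi k$. The essential auxiliary result is a suspension-trace inclusion lemma, mirroring the classical argument that renders \ioco{} transitive on input-enabled systems (cf. Def.~\ref{def:ioco-mia}): \emph{if $s$ is weak may-input-enabled and Clause~1 of $i\mathmiocophi s$ holds, then $\maystraces(i)\subseteq\maystraces(s)$}. I would prove this by induction on the length of $\sigma\in\maystraces(i)$. The empty trace lies in $\maystraces(s)$ because the initial state is not the failure state. For $\sigma=\sigma' a$ with $\sigma'\in\maystraces(s)$ by the induction hypothesis, I distinguish the type of $a$: if $a\in I$, then weak may-input-enabledness of $s$ yields $\sigma' a\in\maystraces(s)$ directly; if $a\in O\cup\{\delta,\varphi\}$, then $\sigma' a\in\maystraces(i)$ gives $a\in\mayout(i\mayafter\sigma')$, and Clause~1 (applicable since $\sigma'\in\maystraces(s)$) transports this to $a\in\mayout(s\mayafter\sigma')$, whence $\sigma' a\in\maystraces(s)$.

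With the lemma in hand, both clauses of $i\mathmiocophi k$ follow by chaining. For Clause~1, take $\sigma\in\maystraces(k)$ and $\mu\in\mayout(i\mayafter\sigma)$; then $\sigma\mu\in\maystraces(i)\subseteq\maystraces(j)$, where the inclusion is the lemma applied to $i\mathmiocophi j$ with $j$ weak may-input-enabled, so $\mu\in\mayout(j\mayafter\sigma)$, and Clause~1 of $j\mathmiocophi k$ gives $\mu\in\mayout(k\mayafter\sigma)$. For Clause~2, take $\sigma\in\maystraces(i)$; by the same lemma $\sigma\in\maystraces(j)$, so Clause~2 of $j\mathmiocophi k$ yields $\mustout(k\mayafter\sigma)\subseteq\mustout(j\mayafter\sigma)$, while Clause~2 of $i\mathmiocophi j$ yields $\mustout(j\mayafter\sigma)\subseteq\mustout(i\mayafter\sigma)$; composing the two gives $\mustout(k\mayafter\sigma)\subseteq\mustout(i\mayafter\sigma)$.

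I expect the main obstacle to be the careful treatment of the observation symbols $\delta$ and $\varphi$ and of the \emph{weak} may-transition relation inside the lemma's induction. One must verify that quiescence and failure observations propagate through Clause~1 exactly as ordinary outputs do (they are included in $\mayout$ by Def.~\ref{def:modal-init-forbidden}), that the exclusion of the failure state from input-enabledness (Def.~\ref{def:mia-input-enabledness}) does not break the input step of the induction, and that the $\tau$-closure implicit in $\mayafter$ is compatible with reading off single-step output observations. Once the lemma is secured, the remaining chaining steps are routine set inclusions.
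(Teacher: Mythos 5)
Your proposal is correct and takes essentially the same route as the paper: the paper's own proof also reduces transitivity to the trace inclusion $\maystraces(p)\subseteq\maystraces(q)$ between the implementation and the middle system, derived exactly as in your lemma---Clause~1 transports extensions by $O\cup\{\delta,\varphi\}$ and weak may-input-enabledness of the middle system covers input extensions---except that the paper argues by decomposing a hypothetical offending trace at the first action leaving $\maystraces(q)$ instead of by your induction on trace length, and then chains the inclusions just as you do. Your closing remark about the failure state is well placed: since $q_\Phi$ is exempt from input-enabledness, the input step needs the observation that $\mayout(s\mayafter\sigma')=\{\varphi\}$ forces $i\mayafter\sigma'\subseteq\{p_\Phi\}$ via Clause~1, a detail the paper's proof glosses over entirely.
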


\setcounter{theorem}{\value{theoremnumbering}}

\begin{proof}
	Let $p, q, r$ be \miaforbidden such that $p$ and $q$ are weak may-input-enabled and $p\mathmiocophi q$ and $q\mathmiocophi r$.
	It holds by Def.~\ref{def:mioco} that $p\mathmiocophi p$, \ie $\mathmiocophi$ is reflexive.
	It remains to be shown that $p\mathmiocophi r$, \ie (a) for all $\sigma\in\maystraces(r)$, $\mayout(p \mayafter \sigma) \subseteq \mayout(r \mayafter \sigma)$ and
	(b) for all $\maystraces(p)$,\linebreak$\mustout(r \mayafter \sigma) \subseteq \mustout(p \mayafter \sigma)$.
	Let $\sigma\in\maystraces(r)$. If $\sigma\in\maystraces(q)$, then (a) and (b) follow from transitivity of $\subseteq$.

	The case of $\sigma\notin\maystraces(q)$ remains.

	Suppose (a) fails for a $\sigma\in\maystraces(r)\setminus\maystraces(q)$, \ie such a $\sigma$ exists.
	Trace $\sigma$ decomposes into $\sigma_1 \cdot a \cdot \sigma_2$ where $\sigma_1\in\maystraces(q)$ but $\sigma_1\cdot a \notin\maystraces(q)$.
	Since $\mayout(p \mayafter \sigma_1) \subseteq \mayout(q \mayafter \sigma_1)$, $a\notin O\cup\{\delta,\varphi\}$.
	Otherwise, $a\in I$ contradicts weak may-input-enabledness of $q$.
	Thus, $\sigma\in\maystraces(q)$.

	Case (b) remains for $\sigma\in\maystraces(p)\setminus\maystraces(q)$.
	We show that such a $\sigma$ again contradicts the assumptions of the theorem.
	As $\sigma\notin\maystraces(q)$, $\sigma$ decomposes into a prefix $\sigma_1\in\maystraces(q)$ and a postfix $a\cdot\sigma_2$ such that $\sigma_1\cdot a\notin\maystraces(q)$.
	Since $\sigma_1\cdot a\in\maystraces(p)$ and $p\mathmiocophi q$, $a\notin O\cup\{\delta,\varphi\}$.
	Hence $a\in I$, but as stated above, this contradicts the assumption that $q$ is weak input-enabled.

	From reflexivity and transitivity of $\mathmiocophi$ it follows that $\mathmiocophi$
	is indeed a preorder on weak may-input-enabled \miaforbidden.
\end{proof}

\subsection{Proof of Theorem~\ref{theorem:mioco-demonic-completion}}\label{subsec:mioco-completion-proof}

\setcounter{theoremnumbering}{\value{theorem}}
\setcounter{theorem}{\value{mioco-demonic-completion-counter}}

\begin{theorem}
	Let $i$, $s$ be \miaforbidden with $i$ being weak must-input-enabled.
	Then $i\mathmiocophi \Xi(s)$ if $i\mathmiocophi s$.
\end{theorem}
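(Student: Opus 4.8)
The plan is to check the two clauses of Def~\ref{def:mioco} directly for the pair $(i,\Xi(s))$, relying on the fact that the demonic completion only appends the fresh states $q_\chi,q_\Omega$ together with transitions among them and fresh must/may input transitions $q\oversetmust{i}q_\chi$ leading into them. Since $i$ is weak must-input-enabled, it is weak may-input-enabled, so $i\mathmiocophi\Xi(s)$ is well-defined. First I would isolate three structural facts. (i) Restricting $\Xi(s)$ to the original carrier $Q$ reproduces exactly the transition relations of $s$, so every path of $\Xi(s)$ staying in $Q$ is a path of $s$; (ii) $\{q_\chi,q_\Omega\}$ is a closed sink, i.e.\ no transition leads from $q_\chi$ or $q_\Omega$ back into $Q$; and (iii) from $q_\chi$ every symbol in $I\cup O\cup\{\delta,\varphi\}$ can be performed returning to $q_\chi$, because $q_\chi\oversetmay{i}q_\chi$, $q_\chi\Oversetmay{o}q_\chi$ (via $q_\chi\oversetmust{\tau}q_\Omega\oversetmay{o}q_\chi$), and both $\delta_\Diamond(q_\chi)$ and $\varphi_\Diamond(q_\chi)$ hold. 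From (i)--(ii) I get $(\Xi(s)\mayafter\sigma)\cap Q=s\mayafter\sigma$ for every $\sigma$, and for $q\in Q$ the local sets $\mayout(q)$ and $\mustout(q)$ coincide in $s$ and $\Xi(s)$, as the completion adds at original states neither output nor silent transitions nor incoming optional inputs, leaving the $\delta_\gamma$ and $\varphi_\gamma$ predicates untouched.

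For Clause~1 I fix $\sigma\in\maystraces(\Xi(s))$ and split cases. If $\sigma\notin\maystraces(i)$ the left-hand side is empty. If $\sigma\in\maystraces(s)$, the hypothesis yields $\mayout(i\mayafter\sigma)\subseteq\mayout(s\mayafter\sigma)$, and since $s\mayafter\sigma\subseteq\Xi(s)\mayafter\sigma$ with unchanged local $\mayout$, monotonicity closes the case. In the remaining case $\sigma\in\maystraces(\Xi(s))\setminus\maystraces(s)$, fact (i) forces every realizing path to leave $Q$, hence to enter $q_\chi$ at some prefix; by (iii) that path can replay the suffix while staying at $q_\chi$, so $q_\chi\in\Xi(s)\mayafter\sigma$ and (via $q_\chi\oversetmust{\tau}q_\Omega$) also $q_\Omega\in\Xi(s)\mayafter\sigma$. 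A short output computation then gives $\mayout(\Xi(s)\mayafter\sigma)\supseteq O\cup\{\delta,\varphi\}$ (all outputs through $q_\Omega\oversetmay{o}q_\chi$, and $\delta,\varphi$ from the may-quiescent, may-failure state $q_\chi$), which contains $\mayout(i\mayafter\sigma)$ trivially.

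For Clause~2 I fix $\sigma\in\maystraces(i)$ and decompose $\Xi(s)\mayafter\sigma$ into its original and fresh parts. The decisive computation is that the fresh states carry no mandatory outputs: neither $q_\chi$ nor $q_\Omega$ has a must-output transition, neither is must-failure, and (using that some $o\in O$ lies in $\mayinit(q_\chi)$ and $\mayinit(q_\Omega)$ via $q_\Omega\oversetmay{o}q_\chi$) neither is must-quiescent, so $\mustout(q_\chi)=\mustout(q_\Omega)=\emptyset$. Hence $\mustout(\Xi(s)\mayafter\sigma)=\mustout((\Xi(s)\mayafter\sigma)\cap Q)\subseteq\mustout(s\mayafter\sigma)$ by $(\Xi(s)\mayafter\sigma)\cap Q\subseteq s\mayafter\sigma$ and unchanged local $\mustout$; Clause~2 of the hypothesis then gives $\mustout(s\mayafter\sigma)\subseteq\mustout(i\mayafter\sigma)$ (trivially when $\sigma\notin\maystraces(s)$, since then $s\mayafter\sigma=\emptyset$).

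I expect the main obstacle to be the predicate bookkeeping for the fresh states, in particular showing that $q_\chi$ and $q_\Omega$ sit at the two required extremes: maximally permissive for may-outputs, covering $O\cup\{\delta,\varphi\}$, yet empty for must-outputs. The must-quiescence check is the delicate point, since it relies on an output being weakly reachable from $q_\chi,q_\Omega$; this is exactly where the non-degenerate assumption $O\neq\emptyset$ enters, as with $O=\emptyset$ both fresh states become must-quiescent and may spuriously contribute $\delta$. The second delicate point is establishing the sink and replay properties (ii)--(iii) precisely enough to guarantee that $q_\chi$ is reachable after any continuation once entered, including suspension traces with interleaved $\delta$ and $\varphi$ symbols.
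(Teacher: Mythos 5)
Your proof is correct and follows essentially the same route as the paper's: split on whether $\sigma$ lies in $\maystraces(s)$, and for the remaining traces show that the reached states of $\Xi(s)$ are exactly the fresh ones, where $\mayout$ is the full set $O\cup\{\delta,\varphi\}$ and $\mustout$ is empty. You are in fact more careful than the paper, which asserts without argument both the monotonicity step for $\sigma\in\maystraces(s)$ and the preservation of the quiescence/failure predicates on the original states of $Q$.

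One point deserves emphasis: the delicate point you flag about $O=\emptyset$ is a genuine gap in the paper's proof, and in that degenerate case the theorem as stated is actually false. The paper justifies $\delta\notin\mustout(\Xi(s)\mayafter\sigma)$ by citing $\mayarrow\cap\{(q_\Omega,\lambda,q_\chi)\mid\lambda\in O\}\neq\emptyset$, which fails when $O=\emptyset$. Concretely, take $I=\{a\}$, $O=\emptyset$, let $s$ have a single state $s_0$ with no transitions, and let $i$ consist of $i_0\oversetmust{a}i_\Phi$. Then $i$ is weak must-input-enabled and $i\mathmiocophi s$ holds (every suspension trace of $s$ is a sequence of $\delta$'s, and $s\mayafter\sigma=\emptyset$ for every trace of $i$ containing $a$). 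But $a\in\maystraces(i)$, $\Xi(s)\mayafter a=\{q_\chi,q_\Omega\}$, and since no outputs exist at all, both fresh states are must-quiescent, so $\mustout(\Xi(s)\mayafter a)=\{\delta\}\not\subseteq\{\varphi\}=\mustout(i\mayafter a)$, i.e., $i\mathmiocophi\Xi(s)$ fails. So the non-degeneracy assumption your proof makes explicit is genuinely needed; with it, your argument goes through, and it is the more honest version of the paper's proof.
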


\setcounter{theorem}{\value{theoremnumbering}}

\begin{proof}
	Let $i,s$ be \miaforbidden with $i$ being weak must-input-enabled.
	We prove that it holds that (1) $\forall\sigma\in\maystraces(s):\mayout(i\mayafter\sigma)\subseteq\mayout(s\mayafter\sigma)\Rightarrow\forall\sigma\in\maystraces(\Xi(s)):\mayout(i\mayafter\sigma)\subseteq\linebreak\mayout(\Xi(s)\mayafter\sigma)$ and (2) $\forall\sigma\in\maystraces(i):\mustout(s\mayafter\sigma)\subseteq\mustout(i\mayafter\sigma)\Rightarrow\forall\sigma\in\maystraces(i):\mustout(\Xi(s)\mayafter\sigma)\subseteq\mustout(i\mayafter\sigma)$.
	\begin{enumerate}
		\item Because of $i\mathmiocophi s$, the subset relation holds for all \maystraces specified by $s$, \ie $\forall\sigma\in\linebreak\maystraces(s):\mayout(i\mayafter\sigma)\subseteq\mayout(\Xi(s)\mayafter\sigma)$.
		Therefore, we have to prove the assumption for all $\maystraces(\Xi(s))\setminus\maystraces(s)$, \ie all traces not specified by $s$.
		Let $\sigma=\sigma'\cdot i\cdot\sigma''$ with $\sigma'\in\maystraces(s)$, $i\in I$ being an unspecified input such that $\sigma\notin\maystraces(s)$, and $\sigma\in\maystraces(\Xi(s))$.
		We prove that for all $\sigma$ it holds that $\mayout(\Xi(s)\mayafter\sigma)=O\cup\{\varphi,\delta\}$ because $\mayout(i\mayafter\sigma)\subseteq O\cup\{\varphi,\delta\}$ is always true.
		$O\subseteq\mayout(\Xi(s)\mayafter\sigma)$ because for every \miadc it holds by definition that $\{(q_\Omega,\lambda,q_\chi)\mid \lambda\in O\}\subseteq\mayarrow$ and $\{(q_\chi,\tau,q_\Omega)\}\subseteq\mayarrow$.
		$\varphi\in\mayout(\Xi(s)\mayafter\sigma)$ because $\{(q_\Omega,\lambda,q_\chi)\mid \lambda\in I\}\subseteq\mayarrow$ and $\mustarrow\cap\{(q_\Omega,\lambda,q_\chi)\mid \lambda\in I\}=\emptyset$, \ie there are optional input transitions in $q_\Omega$.
		$\delta\in\mayout(\Xi(s)\mayafter\sigma)$ because $\forall q\in Q:\mustarrow\cap(\{(q_\chi,\lambda,q)\mid \lambda\in O\}\cup\{(q_\Omega,\lambda,q)\mid \lambda\in O\})=\emptyset$.

		\item Because of $i\mathmiocophi s$, the subset relation holds for all \maystraces specified by $s$, \ie $\forall\sigma\in\linebreak\maystraces(i):\mustout(\Xi(s)\mayafter\sigma)\subseteq\mayout(i\mayafter\sigma)$.
		Therefore, the assumption remains to be proven for all $\sigma\in\maystraces(\Xi(s))\setminus\maystraces(s)$, \ie all traces not specified by $s$.
		Let $\sigma=\sigma'\cdot i\cdot\sigma''$ with $\sigma'\in\maystraces(s)$, $i\in I$ being an unspecified input such that $\sigma\notin\maystraces(s)$, and $\sigma\in\maystraces(\Xi(s))$.
		$\forall o\in O:o\notin\mustout(\Xi(s)\mayafter\sigma)$ because $\forall q\in Q:\mustarrow\cap(\{(q_\chi,\lambda,q)\mid \lambda\in O\}\cup\{(q_\Omega,\lambda,q)\mid \lambda\in O\})=\emptyset$.
		$\varphi\notin\mustout(\Xi(s)\mayafter\sigma)$ because $\{(q_\chi,i,q_\Phi),(q_\Omega,i,q_\Phi)\}\cap\mustarrow=\emptyset$.
		$\delta\notin\mustout(\Xi(s)\mayafter\sigma)$ because $\mayarrow\cap\{(q_\Omega,\lambda,q_\chi)\mid \lambda\in O\}\neq\emptyset$.
		Thus, for all $\sigma$ it holds that $\mustout(\Xi(s)\mayafter\sigma)=\emptyset$.
	\end{enumerate}
	Therefore, both assumptions hold and $i\mathmiocophi s\Rightarrow i\mathmiocophi \Xi(s)$.
\end{proof}

\subsection{Proof of Theorem~\ref{theorem:compositionality-multicast}}\label{subsec:comp-multicast-proof}

\setcounter{theoremnumbering}{\value{theorem}}
\setcounter{theorem}{\value{compositionality-multicast-counter}}

\begin{theorem}[Compositionality of \textbf{modal-irioco}]
	Let $s_1$, $s_2$, $i_1$, and $i_2$ be \miaforbidden with $i_1$ and $i_2$ being strong must-input-enabled, and $s_1$ and $s_2$ being compatible.
	Then it holds that
	$\left(i_1\mathmiocophi s_1\land i_2\mathmiocophi s_2\right)\Rightarrow i_1\parallelforbidden i_2\mathmiocophi s_1\parallelforbidden s_2$.
\end{theorem}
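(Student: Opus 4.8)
The plan is to prove the two clauses of Definition~\ref{def:mioco} for the composed pair $i_1\parallelforbidden i_2$ and $s_1\parallelforbidden s_2$ separately, reducing each to the corresponding clause for the component pairs via a trace-projection argument. First I would fix notation for projection: for a composed suspension trace $\sigma$ over $I\cup O\cup\{\delta,\varphi\}$ with $I=(I_1\cup I_2)\setminus(O_1\cup O_2)$ and $O=O_1\cup O_2$, write $\sigma|_1$ and $\sigma|_2$ for its restrictions to $A_1\cup\{\delta,\varphi\}$ and $A_2\cup\{\delta,\varphi\}$, where common actions are retained in both projections (they are synchronised by Rules~\emph{(May3/Must3)}) and exclusive actions only in the relevant one.

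The key auxiliary step is a decomposition lemma for may-reachability: for states surviving pruning, $(p_1,p_2)\Transrel{\sigma}_\Diamond(p_1',p_2')$ in $P_1\parallelforbidden P_2$ holds iff $p_1\Transrel{\sigma|_1}_\Diamond p_1'$ and $p_2\Transrel{\sigma|_2}_\Diamond p_2'$ in the components, so that $(p_1,p_2)\mayafter\sigma$ factors as the product of $p_1\mayafter\sigma|_1$ and $p_2\mayafter\sigma|_2$, intersected with the set of non-pruned pairs. I would prove this by induction on $\sigma$, distinguishing interleaved steps (Rules~\emph{(May1/Must1)},~\emph{(May2/Must2)}) from synchronised ones (Rules~\emph{(May3/Must3)}), and characterising the special symbols: a composed state is may-quiescent exactly when neither component can emit a mandatory output and no mandatory synchronisation fires, and $\varphi_\Diamond$ on a composed state corresponds to $\varphi_\Diamond$ on one of the projections. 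This yields, for outputs $o\in O_k$, that $o\in\mayout((p_1,p_2))$ iff the owning component offers $o$ as a may-output (and, when $o$ is common, the partner may also perform it), with the analogous statement for $\mustout$.

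For Clause~1 I would take $\sigma\in\maystraces(s_1\parallelforbidden s_2)$ and $\omega\in\mayout\bigl((i_1\parallelforbidden i_2)\mayafter\sigma\bigr)$; projecting $\sigma$ and using the decomposition, $\sigma|_k\in\maystraces(s_k)$ and $\omega$ (or its quiescence/failure witness) originates from some component output, so $i_k\mathmiocophi s_k$ (Clause~1) lifts it into $\mayout(s_k\mayafter\sigma|_k)$; recomposing gives $\omega\in\mayout\bigl((s_1\parallelforbidden s_2)\mayafter\sigma\bigr)$. Clause~2 is symmetric, using $\mustout$ and Clause~2 of the component relations, but here strong must-input-enabledness of $i_1,i_2$ is indispensable: it guarantees that every synchronised output is matched on the partner side by a \emph{mandatory} input, so that a mandatory output of $s_1\parallelforbidden s_2$ survives as a mandatory output of $i_1\parallelforbidden i_2$ rather than degrading to optional or triggering a new error.

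I expect the main obstacle to be controlling the interaction with \emph{pruning}. The decomposition lemma must be shown to respect the removal of illegal states (Def.~\ref{def:illegal-states}) in both the implementation and the specification composition: I would use compatibility of $s_1,s_2$ together with strong must-input-enabledness of $i_1,i_2$ to argue that no \emph{new error} arises in $i_1\parallelforbidden i_2$ beyond those already present in $s_1\parallelforbidden s_2$, so that every composed state reached along a specification trace genuinely factors into surviving component states. Carefully verifying that $\delta$ and $\varphi$ decompose correctly across the composition, and that the pruning of may-transitions into illegal states does not spuriously remove the witnesses needed above, is where the bulk of the technical work lies.
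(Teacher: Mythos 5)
Your proposal follows essentially the same route as the paper's own proof: both establish the two clauses of Def.~\ref{def:mioco} separately by (implicitly or explicitly) projecting composed suspension traces onto the components, invoking the component-level $\mathmiocophi$ clauses, and then arguing that pruning cannot differ harmfully between $i_1\parallelforbidden i_2$ and $s_1\parallelforbidden s_2$ --- the paper's counterpart of your error-containment claim being that implementation-side new errors can only stem from refused inputs (which clause~1, via $\varphi$, pushes into the specification), while specification-side pruning always cuts at an input transition, so the traces it removes are never quantified over in clause~1. Your plan is, if anything, more explicit than the paper's argument, which never formalizes the projection or the $\delta$/$\varphi$ decomposition; the technical work you flag as remaining (the decomposition lemma, the quiescence/refusal characterization, the pruning-boundary analysis) is exactly where the paper's proof remains informal.
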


\setcounter{theorem}{\value{theoremnumbering}}

\begin{proof}
	Let $s_1$, $s_2$, $i_1$, and $i_2$ be \miaforbidden with $i_1$ and $i_2$ being must-input-enabled, and $s_1$ and $s_2$ being compatible.
	Additionally, $i_1\mathmiocophi s_1$ and $i_2\mathmiocophi s_2$ hold.
	In order to prove $i_1\parallelforbidden i_2\mathmiocophi s_1\parallelforbidden s_2$, we prove that (1) $\forall\sigma\in\maystraces(s_1\parallelforbidden s_2):\mayout(i_1\parallelforbidden i_2\mayafter\sigma)\subseteq\mayout(s_1\parallelforbidden s_2\mayafter\sigma)$ and (2) $\forall\sigma\in\maystraces(s_1\parallelforbidden s_2):\mustout(s_1\parallelforbidden s_2\mayafter\sigma)\subseteq\mustout(i_1\parallelforbidden i_2\mayafter\sigma)$.
	\begin{enumerate}
		\item Let $\omega\in\mayout(i_1\parallelforbidden i_2\mayafter\sigma)$ such that $\omega\in\mayout(i_1\mayafter\sigma)$.
		We prove that $\omega\in\mayout(s_1\parallelforbidden s_2\mayafter\sigma)$ holds.
		$\mayout(s_1\parallelforbidden s_2\mayafter\sigma)\neq\emptyset$ because $\sigma\in\maystraces(s_1\parallelforbidden s_2)$.
		We now have to distinguish between $\omega\in O$ and $\omega=\delta$.
		If $\mayout(i_1\parallelforbidden i_2\mayafter\sigma)\subseteq O$, then $\omega\in\mayout(s_1\parallelforbidden s_2\mayafter\sigma)$ because otherwise $i_1$ would have more output behavior than $s_1$ such that $i_1\mathmiocophi s_1$ would not hold.
		If $\mayout(i_1\parallelforbidden i_2\mayafter\sigma)=\{\delta\}$, then $\omega\in\mayout(s_1\parallelforbidden s_2\mayafter\sigma)$ because otherwise $i_1$ would have less mandatory behavior than $s_1$ such that $i_1\mathmiocophi s_1$ would not hold.

		Additionally, we have to consider pruning applied in $s_1\parallelforbidden s_2$ but not in $i_1\parallelforbidden i_2$.
		In order for pruning to occur in $s_1\parallelforbidden s_2$ but not in $i_1\parallelforbidden i_2$, there must be an optional input $i\in A_1\cap A_2$ of $s_1$ becoming mandatory in $i_1$.
		Then in $s_1\parallelforbidden s_2$, pruning occurs as it is a new error (provided that $s_2$ performs a matching output).
		In $i_1\parallelforbidden i_2$, there is no pruning as $p_1\not\oversetmust{i}$ does not hold.
		However, due to the definition of illegal states and pruning (\cf Def.~\ref{def:illegal-states} and Def.~\ref{def:parallel-composition}), all states being able to reach an illegal state through outputs and all their incoming and outgoing transitions are removed.
		Therefore, the least removed action of a trace is an input.
		Hence, that state in $s_1$ which had the removed input as an outgoing transition is underspecified and traces including that input are never checked (only traces $\sigma\in\maystraces(s_1\parallelforbidden s_2)$ are checked).
		If the optional input $i$ remains optional or is removed, then the same pruning is applied in $i_1\parallelforbidden i_2$ because the state which should have the input is an illegal state.
		Therefore, $\forall\sigma\in\maystraces(s_1\parallelforbidden s_2):\mayout(i_1\parallelforbidden i_2\mayafter\sigma)\subseteq\mayout(s_1\parallelforbidden s_2\mayafter\sigma)$ is always true.

		\item Let $\omega\in\mustout(s_1\parallelforbidden s_2\mayafter\sigma)$ and $\omega\in\mustout(s_1\mayafter\sigma)$.
		$\omega\in\mustout(i_1\parallelforbidden i_2\mayafter\sigma)$ must hold because otherwise $i_1$ would have less mandatory output behavior than $s_1$ requires such that $i_1\mathmiocophi s_1$ does not hold.
		Unlike the first part of this proof, the second part does not need to consider pruning because if there is pruning in $s_1\parallelforbidden s_2$, then there is also pruning in $i_1\parallelforbidden i_2$.
		This is because pruning is only performed if the input of a common action is optional, leading to the failure state, or the input is unspecified.
		If an input is optional in $i_1\parallelforbidden i_2$, then it is optional or underspecified in $s_1\parallelforbidden s_2$ (meaning, there is pruning on both sides).
		If an input is leading to the failure state in $i_1\parallelforbidden i_2$, then it is optional or leading to failure state in $s_1\parallelforbidden s_2$ (again, pruning on both sides).
		Underspecification is only possible for $s_1$ or $s_2$ so in this case pruning only takes place in $s_1\parallelforbidden s_2$.
		Therefore, $\forall\sigma\in\maystraces(s_1\parallelforbidden s_2):\mayout(i_1\parallelforbidden i_2\mayafter\sigma)\subseteq\mayout(s_1\parallelforbidden s_2\mayafter\sigma)$ is always true.
	\end{enumerate}
\end{proof}

\subsection{Proof of Lemma~\ref{lemma:parallel-composition-associativity}}\label{subsec:associativity-proof}

To prove associativity of IR-MIA parallel composition, we first define a transformation of MIA according to Bujtor \etal~\cite{Bujtor2015a}, and we prove the transformation to be correct regarding parallel composition.
Then, associativity of IR-MIA parallel composition directly follows, because the parallel composition of MIA according to Bujtor \etal is associative.

\begin{definition}[Transformation of \miaallowed to \miaforbidden]\label{def:mia-transformation}
	The transformation function $\mathcal{T}:\miaallowed\rightarrow\miaforbidden$ is defined as $\mathcal{T}(P):=(P',I_P,O_P,\mustarrow^{P'},\mayarrow^{P'},p_\Phi)$ with:
	\begin{align*}
		P' & = (P\setminus\{u_P\}) \dot{\cup} \{p_\Phi\} \\
		\mustarrow^\Phi & = \left\{(p,i,p_\Phi)\mid p\in P,i\in I, p\not\oversetmay{i}\right\} \\
		\mustarrow^{P'} & = \mustarrow \cup \mustarrow^{\Phi} \\
		\mayarrow^{P'} & = \left( \mayarrow \cap \left( P'\times A_P^\tau\times P'\right)\right) \cup \mustarrow^\Phi
	\end{align*}
	$\dot{\cup}$ denotes the \textup{disjoint union}, \ie it holds that $(P\setminus\{u_P\})\cap\{p_\Phi\}=\emptyset$.
\end{definition}

\begin{theorem}\label{theorem:multicast-composition-correctness}
	Let $P$ and $Q$ be \miaallowed. Then it holds that $\mathcal{T}(P)\parallelforbidden\mathcal{T}(Q)\cong\mathcal{T}(P\parallelallowed Q)$.
\end{theorem}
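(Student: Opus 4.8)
The plan is to exhibit an explicit isomorphism $f$ of \miaforbidden between the two sides and to verify that it preserves and reflects both the may- and must-transition relations. First I would fix the state correspondence. The left-hand side $\mathcal{T}(P)\parallelforbidden\mathcal{T}(Q)$ is built over $P'\times Q'$ (with $P',Q'$ the carriers of $\mathcal{T}(P),\mathcal{T}(Q)$) plus the fresh unified failure state $q_{12\Phi}$; by the redirection and pruning of Def.~\ref{def:parallel-composition}, every surviving pair $(p,q)$ has $p\neq p_\Phi$ and $q\neq q_\Phi$, while all pairs containing a component failure state are collapsed into $q_{12\Phi}$. On the right-hand side, $P\parallelallowed Q$ is a \miaallowed whose product collapses every pair containing a universal component into a single universal state $u_{P\parallelallowed Q}$; applying $\mathcal{T}$ then replaces this state by a failure state $r_\Phi$ and leaves the remaining states untouched. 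I would therefore define $f$ to send each proper pair $(p,q)$ to the corresponding composite state and to send $q_{12\Phi}$ to $r_\Phi$.

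Second, I would show that $f$ is a well-defined bijection, where the crux is matching the two pruning regimes. Since $\mathcal{T}$ changes neither the output-transitions nor the reachability structure over non-special states --- it only re-labels the universal state as a failure state and adds input transitions into it --- the notion of a \emph{new error} in Def.~\ref{def:illegal-states} (an output offered by one side without a matching must-input on the other, or a matching input that the partner forbids) together with its backward closure under must-outputs coincides with the corresponding error and pruning notion underlying $\parallelallowed$ in~\cite{Bujtor2015a}. Hence exactly the same pairs are deleted on both sides, so $f$ restricts to a bijection between the surviving carriers.

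Third, and this is where the real work lies, I would verify the transition correspondence by a case analysis over the product rules \emph{(May1/Must1)}--\emph{(May5/Must5)} of Def.~\ref{def:parallel-product}. For transitions not touching a special state, $\mathcal{T}$ leaves the source transitions of $P$ and $Q$ unchanged, so \emph{(May1/Must1)}--\emph{(May3/Must3)} produce exactly the interleaved and synchronized transitions already present in $P\parallelallowed Q$ and preserved by $\mathcal{T}$. The delicate point is the forbidden transitions into $q_{12\Phi}$. On the left these arise in two ways: (a) inherited from a transition $p\oversetmust{i}p_\Phi$ (or its $Q$-symmetric form) that $\mathcal{T}$ had already introduced because $i$ was critically unspecified ($p\not\oversetmay{i}$) in the original \miaallowed; or (b) freshly created by \emph{(Must4)}/\emph{(Must5)} because $i$ is a common input offered by one component but refused by its partner. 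On the right, by contrast, every failure transition is introduced uniformly by the final $\mathcal{T}$-step, namely for exactly those composite states $r$ and inputs $i$ with $r\not\oversetmay{i}$. The heart of the argument is to prove that these descriptions coincide: input $i$ is forbidden at $(p,q)$ on the left iff $i$ is (critically) unspecified at the corresponding state of $P\parallelallowed Q$ on the right. I would establish this per input by distinguishing whether $i$ is exclusive to $P$, exclusive to $Q$, or common to both, in each case matching the may-enabledness conditions driving $\mathcal{T}$ and the product rules against the may-enabledness of $i$ in the composite, using that $\mathcal{T}$ turns ``no may-transition'' into ``must-to-failure'' while leaving ``may-to-universal'' as genuinely unspecified.

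With the correspondence of both modalities established in both directions, $f$ is an isomorphism of \miaforbidden and the claim follows. The main obstacle I anticipate is precisely the reconciliation in the previous paragraph: the two constructions introduce failure behavior at different stages --- componentwise-then-compose versus compose-then-transform --- so I must show that their net effect agrees in every case, including the subtle interplay between the forbidding rules \emph{(Must4)}/\emph{(Must5)} and the inputs that $\mathcal{T}$ had already redirected to the componentwise failure states before composition. Once this isomorphism is in place, associativity of $\parallelforbidden$ transports directly from associativity of $\parallelallowed$ in~\cite{Bujtor2015a}.
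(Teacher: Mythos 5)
Your proposal is correct and takes essentially the same route as the paper's own proof: both establish the isomorphism by a component-wise comparison in which the crucial steps are (i) that the new-error/illegal-state notions and hence the pruned sets coincide on both sides, and (ii) that the failure transitions arising componentwise from $\mathcal{T}$ together with the product rules \emph{(May4/Must4)}, \emph{(May5/Must5)} net out to exactly the must-to-failure transitions that $\mathcal{T}$ adds after composing with $\parallelallowed$. The paper phrases this as equality of each tuple component (states, alphabets, may-/must-transitions, failure state) under the identity map rather than via an explicitly named bijection $f$, but the mathematical content is the same.
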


\begin{proof}
	Let $P$ and $Q$ be \miaallowed, $A=\mathcal{T}(P)\parallelforbidden\mathcal{T}(Q)$, and $B=\mathcal{T}(P\parallelallowed Q)$.
	To prove $A\cong B$, we have to prove that $(S_A,I_A,O_A,\mustarrow^A,\mayarrow^A,a_\Phi)\cong(S_B,I_B,O_B,\mustarrow^B,\mayarrow^B,b_\Phi)$.
	\begin{itemize}
		\item When constructing the parallel product, the sets of states of both automata is equal because the rules for building the parallel product are similar (\cf Def.~4 of Bujtor \etal~\cite{Bujtor2015a} and Def.~\ref{def:parallel-product}).
		The \miaforbidden parallel product uses two additional rules to ensure inputs being implicitly forbidden for \miaallowed are explicitly forbidden in \miaforbidden.
		Next, the sets of new errors of the parallel composition with multicast are equal because \miaforbidden share the two rules of \miaallowed (\cf definitions for new errors in Def.~5 of Bujtor \etal~\cite{Bujtor2015a} and Def.~\ref{def:parallel-composition}).
		Again, \miaforbidden parallel composition needs two additional rules to ensure explicitly forbidden inputs (which are implicitly forbidden in \miaallowed) are taken into account.
		Additionally, \miaallowed parallel composition incorporates inherited errors not being defined for \miaforbidden parallel composition because they are already explicitly forbidden through \miaforbidden parallel product rules \emph{May4/Must4} and \emph{May5/Must5}.
		As a consequence, the sets of illegal states are equal because the new errors are equivalent (and the inherited errors are taken into account for \miaforbidden).
		In the last step, an equal set of illegal states is pruned in \miaallowed and \miaforbidden.
		Furthermore, the transformations $\mathcal{T}(P)$ and $\mathcal{T}(Q)$ remove the universal states from $P$ and $Q$, respectively, and add the failure state.
		After using the \miaforbidden parallel composition with multicast, exactly one failure state is left.
		When performing $P\parallelallowed Q$, exactly one universal state remains.
		The transformation then removes the universal state and adds a failure state.
		Therefore, it holds that $S_A=S_B$.

		\item The transformation does not change the set of inputs and outputs.
		Therefore, $I_A=(I_P\cup I_Q)\setminus(O_P\cup O_Q)=I_B$ and $O_A=O_P\cup O_Q=O_B$.

		\item The transformation of $P\parallelallowed Q$ adds must-transitions with the failure state as their target, \ie it adds the set $\left\{(b,i,b_\Phi)\mid b\in S_B,i\in I, b\not\oversetmay{i}\right\}$ in order to obtain $B$.
		All these transitions are also contained in $A$ because the transformation of $P$ and $Q$ adds equal transitions to the respective sets of must-transitions.
		$P\parallelforbidden Q$ then combines all failure states into one failure state.
		Additionally, the sets of illegal states of the composition are equal (as described above).
		Therefore, an equal set of must-transitions is pruned, and it holds that $\mustarrow^A=\mustarrow^B$.

		\item The transformation of $P\parallelallowed Q$ adds may-transitions with the failure state as their target, \ie it adds the set $\left\{(b,i,b_\Phi)\mid b\in S_B,i\in I, b\not\oversetmay{i}\right\}$.
		Furthermore, it removes all transitions with the universal state as their target, \ie it removes the set $\mayarrow^B\cap\left(S_B'\times A_B^\tau\times S_B'\right)$ in order to obtain $B$ (with $S_B'=(S_B\setminus\{u_B\}) \dot{\cup} \{b_\Phi\}$).
		All these transitions are also contained in $A$ because the transformation of $P$ and $Q$ adds equal transitions to their sets of may-transitions.
		$P\parallelforbidden Q$ then combines all failure states into one failure state.
		Additionally, the sets of illegal states of the composition are equal (as described above).
		Therefore, an equal set of may-transitions is pruned, and it holds that $\mayarrow^A=\mayarrow^B$.

		\item The transformation removes the universal state and adds a failure state to $P$ and $Q$, respectively.
		The \miaforbidden parallel composition with multicast combines all failure states into one failure state.
		The \miaallowed parallel composition with multicast combines all universal states into one universal state.
		The transformation then removes the universal state and adds a failure state.
		Hence, it holds that $a_\Phi=b_\Phi$.
	\end{itemize}
	Therefore, it holds that $(S_A,I_A,O_A,\mustarrow^A,\mayarrow^A,a_\Phi)\cong(S_B,I_B,O_B,\mustarrow^B,\mayarrow^B,b_\Phi)$, \ie $\mathcal{T}(P)\parallelforbidden\mathcal{T}(Q)\linebreak\cong\mathcal{T}(P\parallelallowed Q)$.
\end{proof}

Now, correctness of Lemma~\ref{lemma:parallel-composition-associativity} directly follows.

\setcounter{theoremnumbering}{\value{lemma}}
\setcounter{lemma}{\value{parallel-composition-associativity-counter}}

\begin{lemma}[Associativity of IR-MIA Parallel Composition]
	Let $P$, $Q$, $R$ be IR-MIA.
	It holds that $(P\parallelforbidden Q)\parallelforbidden R=P\parallelforbidden(Q\parallelforbidden R)$.
\end{lemma}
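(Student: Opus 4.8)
The plan is to transfer associativity from the parallel composition \parallelallowed of MIA with universal state, which is already known to be associative~\cite{Bujtor2015a}, to \parallelforbidden on IR-MIA, using the transformation $\mathcal{T}$ of Def.~\ref{def:mia-transformation} as a bridge. The only substantive fact needed is Theorem~\ref{theorem:multicast-composition-correctness}, stating that $\mathcal{T}$ commutes with parallel composition up to isomorphism, \ie $\mathcal{T}(P)\parallelforbidden\mathcal{T}(Q)\cong\mathcal{T}(P\parallelallowed Q)$ for all \miaallowed $P,Q$. Since the equality asserted by the lemma is, like all such automata identities, understood up to isomorphism ($\cong$), it suffices to chain this commutation result with associativity of \parallelallowed.

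First I would show that $\mathcal{T}$ is a bijection between \miaallowed and \miaforbidden, so that the given IR-MIA $P$, $Q$, $R$ admit preimages $\hat P$, $\hat Q$, $\hat R$ with $\mathcal{T}(\hat P)=P$, $\mathcal{T}(\hat Q)=Q$, and $\mathcal{T}(\hat R)=R$. The inverse $\mathcal{T}^{-1}$ replaces the failure state $q_\Phi$ by a universal state, deletes every explicitly forbidden must-transition $q\oversetmust{i}q_\Phi$ (restoring an uncritically unspecified input), and adds a may-transition to the universal state for every input $i$ with $q\not\oversetmay{i}$ (restoring the explicit may-to-universal form of~\cite{Bujtor2015a}). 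Property~5 of Def.~\ref{def:ir-mia} guarantees that for each state $q$ and input $i$ exactly one of the three cases -- explicitly forbidden, unspecified, or specified to a non-failure target -- occurs, so $\mathcal{T}^{-1}$ is well defined and satisfies $\mathcal{T}\circ\mathcal{T}^{-1}=\mathrm{id}$, yielding surjectivity.

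Then I would assemble the following chain, applying Theorem~\ref{theorem:multicast-composition-correctness} at each $\cong$ (twice on the right branch), associativity of \parallelallowed at the $=$, and the fact that \parallelforbidden is a congruence for $\cong$ together with transitivity of $\cong$:
\begin{align*}
(P\parallelforbidden Q)\parallelforbidden R
&\cong \mathcal{T}(\hat P\parallelallowed\hat Q)\parallelforbidden\mathcal{T}(\hat R)
\cong \mathcal{T}\bigl((\hat P\parallelallowed\hat Q)\parallelallowed\hat R\bigr) \\
&= \mathcal{T}\bigl(\hat P\parallelallowed(\hat Q\parallelallowed\hat R)\bigr)
\cong \mathcal{T}(\hat P)\parallelforbidden\bigl(\mathcal{T}(\hat Q)\parallelforbidden\mathcal{T}(\hat R)\bigr)
= P\parallelforbidden(Q\parallelforbidden R).
\end{align*}
Reading off the outermost terms gives the claim.

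The hard part will be the two bookkeeping facts that make this chain legitimate, rather than the high-level strategy. The first is surjectivity of $\mathcal{T}$: one must check that the reconstructed automaton is a genuine \miaallowed -- in particular that the universal state carries its required universal behavior and that the may/must modalities of retained transitions are preserved -- which is exactly where the case disjointness from Property~5 does the real work. The second is that \parallelforbidden respects isomorphism, \ie $A\cong A'$ implies $A\parallelforbidden B\cong A'\parallelforbidden B$; this is routine but is needed to splice intermediate isomorphisms such as $P\parallelforbidden Q\cong\mathcal{T}(\hat P\parallelallowed\hat Q)$ through the further composition with $R$. Neither step needs a new idea, so once they are in place the result indeed "directly follows."
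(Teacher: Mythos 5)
Your proposal takes essentially the same route as the paper: its proof likewise consists of the transformation $\mathcal{T}$ (Def.~\ref{def:mia-transformation}), the commutation result $\mathcal{T}(P)\parallelforbidden\mathcal{T}(Q)\cong\mathcal{T}(P\parallelallowed Q)$ (Theorem~\ref{theorem:multicast-composition-correctness}), and associativity of $\parallelallowed$ from Bujtor \etal~\cite{Bujtor2015a}, from which the claim is said to ``directly follow.'' Your write-up is in fact somewhat more careful than the paper's one-line conclusion, since you make explicit the two bookkeeping facts this transfer silently relies on, namely that every IR-MIA is (up to isomorphism) in the image of $\mathcal{T}$ and that $\parallelforbidden$ is a congruence with respect to $\cong$.
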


\setcounter{lemma}{\value{theoremnumbering}}

\begin{proof}
	Because of Theorem~\ref{theorem:multicast-composition-correctness} and associativity of parallel composition according to Bujtor \etal~\cite{Bujtor2015a}, it follows that $(P\parallelforbidden Q)\parallelforbidden R=P\parallelforbidden(Q\parallelforbidden R)$.
\end{proof}

\subsection{Proof of Theorem~\ref{theorem:compositionality-hiding}}\label{subsec:compositionality-hiding-proof}

To prove Theorem~\ref{theorem:compositionality-hiding}, we first transfer the hiding operator of MIA according to Bujtor \etal~\cite{Bujtor2014} to IR-MIA, and prove some intermediate results.
The definition of MIA hiding according to Bujtor \etal is directly transferable to \miaforbidden because hiding only affects output behavior (which is not changed by the transformation from MIA according to Bujtor \etal to \miaforbidden).

\begin{definition}[Hiding for \miaforbidden]\label{def:hiding-forbidden}
	Given a \miaforbidden $P=\left(P,I,O,\mustarrow^P,\mayarrow^P,p_\Phi\right)$ and $L\subseteq O$, then $P\mathhiding L$ is a \miaforbidden $P/L=_\textit{def}\left(P,I,O\setminus L,\mustarrow^{P/L},\mayarrow^{P/L},p_\Phi\right)$, where
	\begin{align*}
		\transitionarrow_\gamma^o & = \left\{\left(p_1,o,p_2\right)\mid p_1,p_2\in P, o\in L,p_1\transrel{o}_\gamma p_2\right\} \\
		\transitionarrow_\gamma^\tau & = \left\{\left(p_1,\tau,p_2\right)\mid p_1,p_2\in P, o\in L,p_1\transrel{o}_\gamma p_2\right\} \\
		\transitionarrow_\gamma^{P/L} & = \left(\transitionarrow_\gamma^P\setminus\transitionarrow_\gamma^o\right)\cup\transitionarrow_\gamma^\tau
	\end{align*}
\end{definition}

As described above, the hiding operation can easily be transfered to \miaforbidden because it only affects output behavior.
In fact, we define hiding in such a way that it commutes with the transformation described in Def.~\ref{def:mia-transformation}.
This is due to the transformation (\cf Def.~\ref{def:mia-transformation}) not affecting any output behavior.

\begin{corollary}
	Let $P$ be a \miaallowed with $O_P$ being the set of output actions of $P$, and a set of actions $L\subseteq O_P$.
	Then $\mathcal{T}(P\mathhiding L)\cong\mathcal{T}(P)\mathhiding L$.
\end{corollary}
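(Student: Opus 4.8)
The plan is to exploit the fact that the two operations act on disjoint parts of the structure. Hiding $\mathhiding L$ merely relabels every output transition carrying an action $o\in L$ to $\tau$, leaving the state set and all input transitions untouched; the transformation $\mathcal{T}$ instead modifies only the input behaviour, adding the transitions $\mustarrow^\Phi$ to the fresh failure state and deleting the universal state $u_P$ together with its incident transitions, and it never relabels or removes an output transition except those incident to $u_P$. Since $L\subseteq O$, the two modifications can interfere only through the deletion of $u_P$, which I treat separately below. I would therefore establish the isomorphism $\mathcal{T}(P\mathhiding L)\cong\mathcal{T}(P)\mathhiding L$ by comparing the two tuples component by component, witnessed by the identity on $P\setminus\{u_P\}$ together with the evident identification of the two freshly introduced failure states.

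First I would dispatch the easy components. Hiding does not relocate the universal state, so $u_{P\mathhiding L}=u_P$ and both constructions use the same carrier $(P\setminus\{u_P\})\,\dot{\cup}\,\{p_\Phi\}$; the input alphabet is $I$ in either order, and the output alphabet becomes $O\setminus L$ in either order. The substantive step is the equality of the transition relations, and here the key observation is that $\mustarrow^\Phi=\{(p,i,p_\Phi)\mid p\not\oversetmay{i}\}$ is determined solely by the may-input behaviour: because $i\in I$ and $L\subseteq O$, hiding leaves $p\oversetmay{i}$ unchanged, so the very same set $\mustarrow^\Phi$ is produced regardless of the order, and these input transitions are invisible to hiding. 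It thus remains to compare the images of the original relations $\mustarrow,\mayarrow$ under the two compositions, where $\mathcal{T}$ contributes the state-restriction $\cap\,(P'\times A_P^\tau\times P')$ (removing transitions incident to $u_P$) and hiding contributes the relabelling $o\mapsto\tau$ for $o\in L$. Since hiding preserves the source and target of every transition, a transition survives the $P'$-restriction if and only if its relabelled version does, so restriction and relabelling commute and the two orders yield the same must- and may-relations.

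The one point demanding care, and where I expect the real work to sit, is the interaction between the $P'$-restriction and the relabelling on transitions incident to $u_P$. For a transition $p\transrel{o}u_P$ with $o\in L$, transforming first deletes it (target $u_P\notin P'$) before hiding can act, while hiding first relabels it to $p\transrel{\tau}u_P$ whereupon the subsequent restriction deletes it — so in both orders the transition disappears; for $p\transrel{o}p'$ with $p'\neq u_P$ both orders produce $p\transrel{\tau}p'$, and analogously for a self-loop on $u_P$, whose source already excludes it in either order. Checking these cases for both modalities $\gamma\in\{\Diamond,\Box\}$, and noting that the nominal alphabet discrepancy in the two restrictions is harmless since each relation only carries labels from its own alphabet, completes the argument and yields $\mathcal{T}(P\mathhiding L)\cong\mathcal{T}(P)\mathhiding L$.
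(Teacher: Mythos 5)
Your proof is correct and takes essentially the same approach as the paper: the paper states this as an immediate corollary of the observation that hiding touches only output transitions while $\mathcal{T}$ touches only input behaviour and the universal/failure state, which is exactly the disjointness-and-commutation argument you develop. Your component-by-component verification (alphabets, carrier, $\mustarrow^\Phi$ being determined by may-input behaviour, and the case analysis for transitions incident to $u_P$) simply fills in the details the paper leaves implicit.
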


Let $\parallelallowed$ and $\hidingallowed$ denote composition with multicast and hiding according to Bujtor \etal~\cite{Bujtor2014}, respectively.
From Bujtor \etal~\cite{Bujtor2014}, it follows that $P\hidingallowed Q=(P\parallelallowed Q)/S$ with $S=A_P\cap A_Q$ holds for \miaallowed.
This means that building the parallel composition with hiding is equal to first building the parallel composition with multicast and, afterward, hide the common actions.
We can show that this also holds for \miaforbidden.
For practical purposes, this means that a tool being able to build the parallel composition with multicast only needs an extension which applies hiding (instead of creating an additional tool).

\begin{lemma}\label{lemma:multicast-hiding-coherence}
	Let $P$ and $Q$ be \miaforbidden and $S=A_P\cap A_Q$ the set of common actions of $P$ and $Q$. Then, $P\hidingforbidden Q=(P\parallelforbidden Q)/S$.
\end{lemma}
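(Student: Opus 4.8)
The plan is to prove the identity by a direct structural comparison of the two constructions $P\hidingforbidden Q$ and $(P\parallelforbidden Q)/S$, exploiting that hiding-composability ($O_1\cap O_2=\emptyset$ and $I_1\cap I_2=\emptyset$) forces every shared action $a\in S$ to be an input/output pair, so that each $a\in S$ is an \emph{output} of the composite and therefore lies in the set hidden by $/S$ according to Def.~\ref{def:hiding-forbidden}. First I would compare the two parallel \emph{products} before pruning or hiding. The interleaving rules \emph{(May1/Must1)} and \emph{(May2/Must2)} are literally identical in Def.~\ref{def:parallel-product} and Def.~\ref{def:parallel-hiding}, and they carry labels $\alpha\notin S$, so $/S$ leaves them untouched. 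The synchronization rule \emph{(May3/Must3)} of the hiding product yields $(p_1,p_2)\transrel{\tau}_\gamma(p_1',p_2')$ exactly when the multicast rule \emph{(May3/Must3)} yields $(p_1,p_2)\transrel{a}_\gamma(p_1',p_2')$ for the corresponding $a\in S$; since $a\in O=O_1\cup O_2$ is hidden, $/S$ relabels it to $\tau$, so the synchronizing transitions coincide.

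Second, I would show that pruning produces the same illegal-state relation $E$ in both cases. The new-error conditions of Def.~\ref{def:illegal-states} and the error-propagation clause depend only on the may/must availability of matching inputs and outputs in $p_1$ and $p_2$, which is unaffected by relabelling a shared output to $\tau$; moreover the propagation clause already ranges over $\omega\in(O\cup\{\tau\})$, so a chain of must-outputs into $E$ remains a chain after relabelling. Hence $E$ is identical for $P\otimesforbidden Q$ and $P\otimesforbiddenhiding Q$, and the same states and predecessors are deleted. Because hiding merely relabels transitions of \emph{surviving} states (it neither creates nor removes states and commutes with state deletion), it is immaterial whether $/S$ is applied before or after pruning, which justifies reading $(P\parallelforbidden Q)/S$ as pruning of the hidden product.

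The hard part will be reconciling the error rules \emph{(May4/Must4)} and \emph{(May5/Must5)}, which occur in the multicast product but are absent from the hiding product: under hiding-composability they fire only on shared actions and produce transitions to $p_{12\Phi}$ that, after $/S$, would become $\tau$-transitions to the failure state with no counterpart in $P\hidingforbidden Q$. I would resolve this by observing that such a transition would label an edge into $p_{12\Phi}$ with what is an \emph{output} action of the composite, contradicting the IR-MIA well-formedness constraints of Def.~\ref{def:ir-mia} (the failure state is reached only by input must-transitions); thus no such edge survives in a defined composition, and the two results agree on every reachable state. As a cross-check, and following the pattern of the preceding associativity proof, one may alternatively lift the MIA identity $P\hidingallowed Q=(P\parallelallowed Q)/S$ through the transformation $\mathcal{T}$, chaining the hiding analogue $\mathcal{T}(P)\hidingforbidden\mathcal{T}(Q)\cong\mathcal{T}(P\hidingallowed Q)$ with Theorem~\ref{theorem:multicast-composition-correctness} and the corollary $\mathcal{T}(P\mathhiding L)\cong\mathcal{T}(P)\mathhiding L$; establishing that hiding analogue is then the only additional step, and it proceeds exactly as the proof of Theorem~\ref{theorem:multicast-composition-correctness} since hiding affects output behavior only.
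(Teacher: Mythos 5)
The paper itself gives no explicit proof of this lemma: it is stated bare, justified only by the surrounding prose that Bujtor et al.'s MIA identity $P\hidingallowed Q=(P\parallelallowed Q)/S$ can be transported to \miaforbidden via the transformation $\mathcal{T}$ (i.e., Theorem~\ref{theorem:multicast-composition-correctness} plus the corollary that $\mathcal{T}$ commutes with hiding) --- which is exactly the lifting you sketch in your final "cross-check" paragraph. Your primary, direct structural argument is therefore a genuinely different and more explicit route, and its first two stages are sound: under h-composability $S=A_P\cap A_Q=(I_1\cap O_2)\cup(O_1\cap I_2)\subseteq O$, so all shared actions are hidden by $/S$; the interleaving rules coincide and carry labels outside $S$; the synchronization rules coincide up to the relabelling $a\mapsto\tau$; and the illegal-state relation $E$ of Def.~\ref{def:illegal-states} depends only on the component states and propagates over $O\cup\{\tau\}$, so pruning deletes the same states on both sides and commutes with relabelling.

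The gap is precisely at what you call the hard part. Your claim that a transition generated by (May4/Must4) or (May5/Must5) "cannot survive" because it would place an output-labelled edge into $p_{12\Phi}$, contradicting Def.~\ref{def:ir-mia}, is backwards: well-formedness is a constraint on the model class, not an operational pruning rule, and nothing in Def.~\ref{def:parallel-composition} or Def.~\ref{def:hiding-forbidden} removes such an edge once the product rules have generated it. Under the literal side condition $a\in I_1\cap A_2$ (which, for h-composable IR-MIA, means $a\in I_1\cap O_2$), the rule does fire: take $P$ with a single must-transition $p_0\oversetmust{a}p_1$, $I_P=\{a\}$, $O_P=\emptyset$, and $Q$ with a single transitionless state, $I_Q=\emptyset$, $O_Q=\{a\}$. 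Then $(p_0,q_0)$ is not a new error, $E=\emptyset$, nothing is pruned, and $(P\parallelforbidden Q)/S$ has a $\tau$-must-transition into the failure state, whereas $P\hidingforbidden Q$ has no transitions at all --- so the identity fails under that reading, and what your observation actually exposes is an inconsistency in Def.~\ref{def:parallel-product} rather than the nonexistence of the edge. To close the gap you must instead adopt the reading suggested by the paper's prose ("inputs being common to both components", i.e., $a\in I_1\cap I_2$), under which h-composability ($I_1\cap I_2=\emptyset$) renders (May4/Must4) and (May5/Must5) vacuous and your structural comparison then goes through; alternatively, retreat to the lifting argument of your last paragraph, in which case its key step $\mathcal{T}(P)\hidingforbidden\mathcal{T}(Q)\cong\mathcal{T}(P\hidingallowed Q)$ must actually be carried out rather than deferred by analogy.
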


Next, we look at hiding in the context of \miocophi.
In general, hiding does not preserve \miocophi, \ie $i\mathmiocophi s\Rightarrow(i\mathhiding L)\mathmiocophi(s\mathhiding L)$ does not hold.
However, if we require the specification to be may-input-enabled, then hiding preserves \miocophi.

\begin{theorem}\label{theorem:mioco-hiding}
	Let $i,s$ be \miaforbidden with $i$ being weakly must-input-enabled and $s$ being weakly may-input-enabled, $O$ be the set of outputs, and $L\subseteq O$.
	Then $i\mathmiocophi s\Rightarrow(i\mathhiding L)\mathmiocophi(s\mathhiding L)$.
\end{theorem}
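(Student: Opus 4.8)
The plan is to analyze directly how the hiding operator affects the two inclusion conditions of Def.~\ref{def:mioco}. Since $P\mathhiding L$ merely relabels every transition carrying some $o\in L$ to $\tau$, the visible output alphabet shrinks from $O$ to $O\setminus L$, the inputs and the failure state $q_\Phi$ are untouched, and the freshly created $\tau$-transitions are absorbed into the weak relations $\Transrel{\sigma}_\gamma$. Two easy consequences frame the argument: input-refusal behavior is unaffected by hiding, so the $\varphi$-observations of $P$ and $P\mathhiding L$ coincide; and since hiding removes no input transitions, $i\mathhiding L$ stays weak must-input-enabled and $s\mathhiding L$ stays weak may-input-enabled, so $(i\mathhiding L)\mathmiocophi(s\mathhiding L)$ is well-formed.

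First I would establish a trace-correspondence lemma: for every visible suspension trace $\sigma\in(I\cup(O\setminus L)\cup\{\delta,\varphi\})^*$ one has $p\Transrel{\sigma}_\gamma^{P\mathhiding L}p'$ iff $p\Transrel{\sigma'}_\gamma^{P}p'$ for some $\sigma'$ obtained from $\sigma$ by interleaving actions of $L$ (each of which, being hidden, is silent). This exhibits $\maystraces(P\mathhiding L)$ as a projection of $\maystraces(P)$ onto $I\cup(O\setminus L)\cup\{\delta,\varphi\}$ and relates $(P\mathhiding L)\mayafter\sigma$ to the appropriate union of sets $P\mayafter\sigma'$, the only caveat being the treatment of quiescence discussed next.

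The delicate part, and the main obstacle, is quiescence: hiding an output can turn a non-quiescent state into a quiescent one, so $\delta_\gamma^P$ and $\delta_\gamma^{P\mathhiding L}$ must be compared carefully. Unfolding the definitions through $\mustinit$ and $\mayinit$ gives $\delta_\gamma^P(p)\Rightarrow\delta_\gamma^{P\mathhiding L}(p)$ for both modalities, whereas the converse can fail precisely when a state emits only outputs in $L$; there a fresh $\delta$ enters the suspension traces of $P\mathhiding L$. Controlling these new $\delta$-observations on $i\mathhiding L$ and $s\mathhiding L$ simultaneously is exactly where the hypotheses enter: weak may-input-enabledness of $s$ guarantees, by the same mechanism as in the proof of Theorem~\ref{theorem:mioco-preorder}, that a trace can leave $\maystraces(s)$ only via an output (never an input), so each visible suspension trace of $s\mathhiding L$ lifts to one of $s$; weak must-input-enabledness of $i$ plays the dual role in the mandatory direction.

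With these preparations, condition~1 for $(i\mathhiding L, s\mathhiding L)$ follows by taking $\sigma\in\maystraces(s\mathhiding L)$ and $\omega\in\mayout((i\mathhiding L)\mayafter\sigma)$, lifting $\sigma$ to a witness $\sigma'\in\maystraces(s)$ through the correspondence lemma, applying condition~1 of $i\mathmiocophi s$ to get $\omega\in\mayout(s\mayafter\sigma')$, and projecting back, the new-quiescence case being discharged by the quiescence analysis above. Condition~2 is symmetric: one lifts $\sigma\in\maystraces(i\mathhiding L)$, invokes condition~2 of $i\mathmiocophi s$ on $\mustout$, and uses must-input-enabledness of $i$. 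Combining both conditions yields $(i\mathhiding L)\mathmiocophi(s\mathhiding L)$, as required.
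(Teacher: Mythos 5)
Your proposal is correct and shares the paper's overall decomposition---establish the two clauses of Def.~\ref{def:mioco} for the pair $(i\mathhiding L,\ s\mathhiding L)$ by relating suspension traces of the hidden systems to traces of the originals with interleaved $L$-actions, and use weak may-input-enabledness of $s$ (dually, must-input-enabledness of $i$) via the same trace-escape mechanism as in Theorem~\ref{theorem:mioco-preorder} to keep the lifted traces inside the domain where $i\mathmiocophi s$ applies---but the execution differs in an instructive way. The paper reasons by contradiction and dismisses the offending output with the informal remark that may-input-enabledness ``prescribes'' how $i$ behaves after $\sigma$; it never analyzes $\delta$ at all. You instead argue directly through an explicit projection/lifting lemma, and you isolate the one genuine subtlety: hiding can create \emph{fresh} quiescence when all weak must-outputs of a state lie in $L$, so $\maystraces(i\mathhiding L)$ is not simply the projection of $\maystraces(i)$. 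That is precisely the case the paper's sketch silently skips, so your route buys rigor exactly where the paper is thinnest. The one step still missing from your sketch is the actual discharge of that case: the one-directional preservation ($\delta_\gamma(p)$ in $P$ implies $\delta_\gamma(p)$ in $P\mathhiding L$) does not by itself handle a state of $i\mathhiding L$ that is \emph{newly} quiescent. For that, follow its maximal chain of hidden must-outputs in $i$ to a state that is already may-quiescent in $i$, note that the extended trace lies in $\maystraces(s)$ by your escape argument, apply clause~1 of $i\mathmiocophi s$ there to obtain a may-quiescent state of $s$, and observe that this quiescence survives hiding and projects onto the intended trace of $s\mathhiding L$. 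This completes the new-$\delta$ case entirely within the framework you set up, and a symmetric remark finishes clause~2.
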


\begin{proof}
	Let $i$ and $s$ be \miaforbidden with $i\mathmiocophi s$, $i$ being weakly must-input-enabled, and $s$ being weakly may-input-enabled, $O$ be the set of outputs of $i$ and $s$, and $L\subseteq O$.
	Additionally, $s'=s\mathhiding L$, $i'=i\mathhiding L$, $\sigma\in\maystraces(s)$, and $\sigma'$ be the corresponding trace $\sigma'\in\maystraces(s')$ where the hidden actions are removed from $\sigma$.
	To prove Theorem~\ref{theorem:mioco-hiding}, we prove that (1) $\forall\sigma\in\maystraces(s):(\mayout(i\mayafter\sigma)\subseteq\mayout(s\mayafter\sigma))\Rightarrow\forall\sigma'\in\maystraces(s'):(\mayout(i'\mayafter\sigma)\subseteq\mayout(s'\mayafter\sigma))$ and (2) $\forall\sigma\in\linebreak\maystraces(s):(\mustout(s\mayafter\sigma)\subseteq\mustout(i\mayafter\sigma))\Rightarrow\forall\sigma'\in\maystraces(s'):(\mustout(s'\mayafter\sigma)\subseteq\linebreak\mustout(i'\mayafter\sigma))$.
	We only look at $\sigma'$ with $\sigma'\neq\sigma$ because otherwise (1) and (2) directly follow.
	\begin{enumerate}
		\item Assume, (1) does not hold.
		Then, there exists an $\omega\neq\tau$ such that $\omega\in\mayout(i\mayafter\sigma)$, $\omega\in\mayout(s\mayafter\sigma)$, $\omega\in\mayout(i'\mayafter\sigma')$, and $\omega\notin\mayout(s'\mayafter\sigma')$.
		However, with $s$ being may-input-enabled, we impose that $i$ implements every input $i\in I$ for every state $q\in Q$ such that $q_s\oversetmay{i}q_s'\Rightarrow q_i\oversetmay{i}q_i'$, $q_s\oversetmay{i}q_s'\Rightarrow q_i\oversetmust{i}q_i'$, or $q_s\oversetmay{i}q_s'\Rightarrow q_i\oversetmust{i}q_{i\Phi}$.
		This means, we prescribe how $i$ should behave after $\sigma$.
		Therefore, $i'$ cannot have any additional output behavior after $\sigma'$ not being in $s'$ after $\sigma'$.

		\item Assume, (2) does not hold.
		Then there exists an $\omega$ such that $\omega\in\mustout(s\mayafter\sigma)$, $\omega\in\linebreak\mustout(i\mayafter\sigma)$, $\omega\in\mustout(s'\mayafter\sigma')$, and $\omega\notin\mustout(i'\mayafter\sigma')$.
		However, this is not possible because in this case, there would be more output behavior in $i$ than in $s$ such that $i\mathmiocophi s$ would not hold.
	\end{enumerate}
\end{proof}

Now, we prove that parallel composition with hiding preserves strong must-input-enabledness.
For this, we first prove preservation for parallel composition with multicast.

\begin{lemma}\label{lemma:composition-input-enabledness-preservation}
	Let $P$ and $Q$ be strongly must-input-enabled \miaforbidden.
	If it holds that $\forall p\in P:\forall i\in I_P\cap O_Q:p\not\oversetmay{i}p_\Phi$ and $\forall q\in Q:\forall i\in I_Q\cap O_P:q\not\oversetmay{i}q_\Phi$, then $P\parallelforbidden Q$ is must-input-enabled.
\end{lemma}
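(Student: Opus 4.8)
The plan is to show that every non-failure state that survives the pruning of $P\parallelforbidden Q$ has, for each composite input $i\in I=(I_P\cup I_Q)\setminus(O_P\cup O_Q)$, an outgoing must-transition. Since Def.~\ref{def:mia-input-enabledness} exempts the failure state $p_{12\Phi}$, and since the pruning of Def.~\ref{def:parallel-composition} redirects every transition into a mixed-failure state $(p_\Phi,q)$ or $(p,q_\Phi)$ to $p_{12\Phi}$ and then discards such states as unreachable, it suffices to treat reachable pairs $(p,q)$ with $p\neq p_\Phi$ and $q\neq q_\Phi$. I would first establish must-input-enabledness of the parallel product $P\otimesforbidden Q$ on these pairs, and then argue that pruning removes none of the required transitions.

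For the product stage, fix such a pair $(p,q)$ and $i\in I$. Because $i\notin O_P\cup O_Q$, exactly three cases arise: (i) $i\in I_P$ and $i\notin A_Q$, (ii) $i\in I_Q$ and $i\notin A_P$, and (iii) $i\in I_P\cap I_Q$. In case (i), strong must-input-enabledness of $P$ gives $p\oversetmust{i}p'$, so rule \emph{(Must1)} yields $(p,q)\oversetmust{i}(p',q)$; case (ii) is symmetric via \emph{(Must2)}. In case (iii), both $p\oversetmust{i}p'$ and $q\oversetmust{i}q'$ hold, so \emph{(Must3)} yields $(p,q)\oversetmust{i}(p',q')$, and \emph{(Must4)}/\emph{(Must5)} cannot fire because both components have a must-step on $i$. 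Note $p'$ or $q'$ may be a failure state (a refused common input), in which case the redirection of Def.~\ref{def:parallel-composition} sends the transition to $p_{12\Phi}$; this is still a must-transition on $i$, so input-enabledness is not lost.

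The main obstacle is the third pruning clause, which deletes \emph{all} $i$-transitions from a state $(p,q)$ as soon as one of its $i$-may-transitions leads into the set $E$ of illegal states. I would neutralise it by proving that no reachable pair $(p,q)$ with $p\neq p_\Phi$, $q\neq q_\Phi$ lies in $E$. First, such a pair is never a \emph{new error} (Def.~\ref{def:illegal-states}): the two conditions requiring $q\not\oversetmust{a}$ resp.\ $p\not\oversetmust{a}$ for a common output action $a$ are ruled out by strong must-input-enabledness of $Q$ resp.\ $P$, while the two conditions requiring $q\oversetmay{a}q_\Phi$ resp.\ $p\oversetmay{a}p_\Phi$ for $a\in I_Q\cap O_P$ resp.\ $a\in I_P\cap O_Q$ are precisely excluded by the two hypotheses of the lemma. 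Second, $E$ is the least set closed under must-output/$\tau$-predecessors; since output and $\tau$ transitions never reach a failure component (Def.~\ref{def:ir-mia}), these predecessor steps stay among non-failure pairs, so by induction no non-failure pair is added to $E$. Any remaining members of $E$ are mixed-failure states, and after the redirection step every transition formerly pointing to them points to $p_{12\Phi}\notin E$ instead.

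Consequently the third pruning clause never triggers on a reachable non-failure pair, and its target, being either a reachable non-failure pair (not in $E$, hence retained) or the retained state $p_{12\Phi}$, survives. Thus the must-transitions built in the product stage persist in $P\parallelforbidden Q$ (up to redirection of their targets to $p_{12\Phi}$), establishing strong must-input-enabledness and in particular the claimed must-input-enabledness. The one delicate point I would spell out most carefully is the interplay between target-redirection and the third clause's reference to $E$, namely that a transition into a mixed-failure error state is \emph{saved} by redirection to $p_{12\Phi}$ rather than causing the deletion of all sibling $i$-transitions.
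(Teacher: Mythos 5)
Your proof is correct and takes essentially the same route as the paper's: establish must-input-enabledness at the product level via rules \emph{(Must1)}--\emph{(Must3)} (exclusive inputs of $P$, exclusive inputs of $Q$, common inputs), then argue that pruning removes none of these transitions because strong must-input-enabledness excludes the first two new-error conditions of Def.~\ref{def:illegal-states} and the lemma's two hypotheses exclude the remaining two. You are in fact more careful than the paper, whose proof simply asserts that there are ``no new errors (and no illegal states)'' without addressing mixed-failure pairs $(p_\Phi,q)$ and $(p,q_\Phi)$ (which \emph{can} be new errors when the non-failure component has a may-output on a shared action) or the fact that redirecting their incoming transitions to $q_{12\Phi}$ must be understood as preceding the third pruning clause; your explicit handling of that interplay is exactly what is needed to make the paper's terse claim rigorous.
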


\begin{proof}
	Let $I_P$ be the set of inputs of $P$, $I_Q$ the set of inputs of $Q$, and $I=(I_P\cup I_Q)\setminus(O_P\cup O_Q)$.
	\miaforbidden parallel product rule \emph{(May1/Must1)} ensures that that all inputs $I_P\setminus(I_Q\cup O_Q)$ are accepted in every state of $P\otimesforbidden Q$.
	Rule \emph{(May2/Must2)} ensures the same for all inputs $I_Q\setminus(I_P\cup O_P)$, and rule \emph{(May3/Must3)} for all inputs $I_P\cap I_Q$.
	Additionally, no inputs are pruned because there are no new errors (and no illegal states) due to the fact that $P$ and $Q$ are strongly must-input-enabled, and there is no reachable input behavior of $I_P\cap O_Q$ or $I_Q\cap O_P$ with the failure state as its target.
\end{proof}

Previously, it has been proven that \miaforbidden parallel composition with multicast preserves strong must-input-enabledness if there are no transitions having a common action with the failure state as their target (\cf Lemma~\ref{lemma:composition-input-enabledness-preservation}).
Because of Lemma~\ref{lemma:multicast-hiding-coherence}, we can transfer that result to parallel composition with hiding.
This is possible as hiding only affects output behavior, whereas input behavior remains unchanged (\ie no input transition becomes internal behavior).

\begin{corollary}\label{corollary:hiding-input-enabledness-preservation}
	Let $P$, $Q$ be strong must-input-enabled \miaforbidden.
	If it holds that $\forall p\in P:\forall i\in I_P\cap O_Q:p\not\oversetmay{i}p_\Phi$ and $\forall q\in Q:\forall i\in I_Q\cap O_P:q\not\oversetmay{i}q_\Phi$, then $P\hidingforbidden Q$ is must-input-enabled.
\end{corollary}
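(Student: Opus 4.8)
The plan is to obtain this corollary directly from the two immediately preceding results, Lemma~\ref{lemma:multicast-hiding-coherence} and Lemma~\ref{lemma:composition-input-enabledness-preservation}, so that essentially no new transition-by-transition analysis is required. First I would invoke Lemma~\ref{lemma:multicast-hiding-coherence} to rewrite $P\hidingforbidden Q$ as $(P\parallelforbidden Q)\mathhiding S$ with $S = A_P\cap A_Q$ the set of common actions. This reduces the claim about parallel composition with hiding to a claim about applying the hiding operator to the multicast composition $P\parallelforbidden Q$.

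Second, I would apply Lemma~\ref{lemma:composition-input-enabledness-preservation} to $P\parallelforbidden Q$. The hypotheses of the corollary---namely $P$ and $Q$ being strong must-input-enabled together with $\forall p\in P:\forall i\in I_P\cap O_Q:p\not\oversetmay{i}p_\Phi$ and the symmetric condition on $Q$---are exactly those of that lemma, so we obtain that $P\parallelforbidden Q$ is (strong) must-input-enabled.

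Third, and this is the only genuinely new observation, I would argue that applying $\mathhiding S$ preserves must-input-enabledness. By Definition~\ref{def:hiding-forbidden}, hiding a set $L\subseteq O$ only relabels output transitions carrying an action $o\in L$ into $\tau$-transitions, while leaving the source and target states and, crucially, every input transition untouched; moreover it leaves the input alphabet $I$ unchanged and merely shrinks the output alphabet to $O\setminus L$. Since the synchronized common actions of $S$ are output actions of the multicast composition $P\parallelforbidden Q$, hiding them neither removes nor redirects any input transition. Hence for every state $q$ different from the failure state and every input $i\in I$, a transition $q\oversetmust{i}q'$ witnessing must-input-enabledness in $P\parallelforbidden Q$ survives verbatim in $(P\parallelforbidden Q)\mathhiding S$, which yields must-input-enabledness of $P\hidingforbidden Q$.

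The main (and really the only) obstacle is the bookkeeping in the third step: one has to confirm that $S$ indeed consists solely of output actions of the multicast composition, so that no input transition is ever converted into a silent move, and that the input alphabet is genuinely preserved by hiding. Once this is checked against Definition~\ref{def:hiding-forbidden}, the corollary follows at once, which is precisely why it is stated as a corollary of Lemma~\ref{lemma:composition-input-enabledness-preservation} rather than proved from scratch.
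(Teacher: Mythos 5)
Your proposal is correct and follows essentially the same route as the paper: rewrite $P\hidingforbidden Q$ as $(P\parallelforbidden Q)\hiding S$ via Lemma~\ref{lemma:multicast-hiding-coherence}, apply Lemma~\ref{lemma:composition-input-enabledness-preservation} to the multicast composition, and observe that hiding (Def.~\ref{def:hiding-forbidden}) only turns output transitions into $\tau$-transitions while leaving the input alphabet and all input transitions untouched. Your extra bookkeeping remark---that under h-composability the common actions in $S$ are all outputs of $P\parallelforbidden Q$, so no input transition is ever silenced---is exactly the justification the paper compresses into one sentence.
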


Now, we can prove Theorem~\ref{theorem:compositionality-hiding}.

\setcounter{theoremnumbering}{\value{theorem}}
\setcounter{theorem}{\value{compositionality-hiding-counter}}

\begin{theorem}[Compositionality of \miocophi Regarding Parallel Composition with Hiding]
	Let $s_1$, $s_2$, $i_1$, and $i_2$ be strongly must-input-enabled \miaforbidden.
	Then $\left(i_1\mathmiocophi s_1\land i_2\mathmiocophi s_2\right)\Rightarrow i_1\hidingforbidden i_2\mathmiocophi s_1\hidingforbidden s_2$ if $s_1$ and $s_2$ are compatible, $\forall q\in Q_{s1}:\forall i\in I_{s1}\cap O_{s2}:q\not\oversetmay{i}q_{s1\Phi}$, and $\forall q\in Q_{s2}:\forall i\in I_{s2}\cap O_{s1}:q\not\oversetmay{i}q_{s2\Phi}$.
\end{theorem}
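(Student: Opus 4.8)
The plan is to reduce the claim to three results already available in the paper: the multicast compositionality result (Theorem~\ref{theorem:compositionality-multicast}), the coherence identity $P\hidingforbidden Q=(P\parallelforbidden Q)/S$ of Lemma~\ref{lemma:multicast-hiding-coherence}, and the hiding-preservation result Theorem~\ref{theorem:mioco-hiding}. Since all four automata share alphabets, write $S=A_{s_1}\cap A_{s_2}=A_{i_1}\cap A_{i_2}$ for the set of common actions. Because $\hidingforbidden$ is only defined for h-composable IR-MIA (Def.~\ref{def:parallel-hiding}), we have $I_{s_1}\cap I_{s_2}=\emptyset$, so every common action is an output of one component and an input of the other; such actions synchronise to outputs in the multicast product, whence $S\subseteq O$ in both $i_1\parallelforbidden i_2$ and $s_1\parallelforbidden s_2$.

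First I would apply Theorem~\ref{theorem:compositionality-multicast}: from $i_1\mathmiocophi s_1$, $i_2\mathmiocophi s_2$, strong must-input-enabledness of $i_1,i_2$, and compatibility of $s_1,s_2$, I obtain $i_1\parallelforbidden i_2\mathmiocophi s_1\parallelforbidden s_2$. Next I would apply Theorem~\ref{theorem:mioco-hiding} with $L=S$ to this relation, yielding $(i_1\parallelforbidden i_2)/S\mathmiocophi(s_1\parallelforbidden s_2)/S$. Finally, rewriting both sides with Lemma~\ref{lemma:multicast-hiding-coherence} gives exactly $i_1\hidingforbidden i_2\mathmiocophi s_1\hidingforbidden s_2$, which is the claim. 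The whole argument is thus a short chain, and its substance lies in discharging the hypotheses of Theorem~\ref{theorem:mioco-hiding}, namely that the composed implementation is weak must-input-enabled and the composed specification is weak may-input-enabled.

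For the specification side I would invoke Lemma~\ref{lemma:composition-input-enabledness-preservation}: strong must-input-enabledness of $s_1,s_2$ together with the two hypotheses $\forall q\in Q_{s1}:\forall a\in I_{s1}\cap O_{s2}:q\not\oversetmay{a}q_{s1\Phi}$ and its symmetric counterpart yields that $s_1\parallelforbidden s_2$ is must-input-enabled, hence also may-input-enabled. For the implementation side I would again use Lemma~\ref{lemma:composition-input-enabledness-preservation} (equivalently Corollary~\ref{corollary:hiding-input-enabledness-preservation}), but this requires the analogous failure-state conditions for $i_1$ and $i_2$, which are \emph{not} assumed directly and must therefore be derived from the specification conditions together with modal-irioco.

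The hard part will be exactly this transfer of the failure-state conditions from the specifications to the implementations. I would argue by contradiction: if some state $p$ of $i_1$, reachable by a trace $\sigma\in\maystraces(i_1)$, had $p\oversetmust{a}p_{i1\Phi}$ for a common input $a\in I_{i1}\cap O_{i2}$, then $\varphi\in\mayout(i_1\mayafter(\sigma\cdot a))$ since the failure state is a may-failure. Using may-input-enabledness of $s_1$ to ensure $\sigma\cdot a\in\maystraces(s_1)$, Clause~1 of $i_1\mathmiocophi s_1$ forces $\varphi\in\mayout(s_1\mayafter(\sigma\cdot a))$; because $q_{s1\Phi}$ is a sink reachable only by an input step, this entails a transition $q\oversetmay{a}q_{s1\Phi}$ in $s_1$ on the common action $a$, contradicting the hypothesis on $s_1$. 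The delicate point is the alignment of the suspension traces of $i_1$ and $s_1$ through the $\epsilon$-closures hidden in $\maystraces$ and $\mayafter$, which must be handled carefully to guarantee both that $\sigma\cdot a$ is genuinely a trace of $s_1$ and that the observed $\varphi$ necessarily originates from an $a$-labelled transition into the failure state. Once the conditions transfer, Lemma~\ref{lemma:composition-input-enabledness-preservation} discharges the implementation-side hypothesis, and the three-step reduction above closes the proof.
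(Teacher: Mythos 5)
Your overall reduction is exactly the paper's: apply Theorem~\ref{theorem:compositionality-multicast} to get $i_1\parallelforbidden i_2\mathmiocophi s_1\parallelforbidden s_2$, rewrite both sides via Lemma~\ref{lemma:multicast-hiding-coherence}, and conclude with Theorem~\ref{theorem:mioco-hiding}, discharging the specification-side hypothesis through Lemma~\ref{lemma:composition-input-enabledness-preservation} (the paper invokes Corollary~\ref{corollary:hiding-input-enabledness-preservation}). You are also right that this leaves the implementation-side hypothesis of Theorem~\ref{theorem:mioco-hiding} --- weak must-input-enabledness of $i_1\parallelforbidden i_2$ --- in need of justification; the paper's proof never discharges it and simply asserts the conclusion under the stated conditions, so you have identified a genuine subtlety rather than invented one.

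However, the transfer argument you propose to close this gap fails, and not merely for the ``delicate'' trace-alignment reasons you flag: the failure-state conditions do \emph{not} transfer from specifications to conforming implementations. The broken step is the inference from $\varphi\in\mayout(s_1\mayafter(\sigma\cdot a))$ to the existence of a transition $q\oversetmay{a}q_{s1\Phi}$ in $s_1$. By Definition~\ref{def:modal-init-forbidden}, a state is a may-failure not only if it \emph{is} the failure state but also if it has an incoming \emph{optional} input transition, so the specification can legitimately exhibit $\varphi$ without any transition into $q_{s1\Phi}$. Concretely, take $I=\{a\}$, $O=\{o\}$, and let $s_1$ have $q_0\oversetmust{a}q_2$ together with an optional transition $q_0\oversetmay{a}q_1$, where $q_1$ and $q_2$ carry must-$a$ self-loops and an optional output $o$ (the optional output blocks must-quiescence). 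Then $s_1$ is strongly must-input-enabled and has no transition at all to $q_{s1\Phi}$, yet the implementation $i_1$ consisting of the single transition $p_0\oversetmust{a}p_{i1\Phi}$ satisfies $i_1\mathmiocophi s_1$: Clause~1 of Def.~\ref{def:mioco} holds after $a$ because $\varphi\in\mayout(s_1\mayafter a)$ via the optional transition to $q_1$, and Clause~2 holds because $\mustout(s_1\mayafter a)=\emptyset$ and $\mustout(s_1\mayafter a\varphi^m)=\emptyset$. So a conforming, strongly must-input-enabled implementation may redirect a common input into its failure state even though the specification satisfies all hypotheses of the theorem; your contradiction never materialises, and Lemma~\ref{lemma:composition-input-enabledness-preservation} cannot be applied to $i_1\parallelforbidden i_2$ along this route. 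Closing the proof requires either assuming the failure-state conditions for $i_1,i_2$ as well, or an argument of a different shape (e.g., showing directly that the pruning triggered by such implementation transitions cannot violate the two conformance clauses) --- neither of which the paper supplies either.
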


\setcounter{theorem}{\value{theoremnumbering}}

\begin{proof}
	From Theorem~\ref{theorem:compositionality-multicast} we know that if $i_1$ and $i_2$ are strongly must-input-enabled, and $s_1$ and $s_2$ are compatible, then $\left(i_1\mathmiocophi s_1\land i_2\mathmiocophi s_2\right)\Rightarrow i_1\parallelforbidden i_2\mathmiocophi s_1\parallelforbidden s_2$.
	Additionally, $P\hidingforbidden Q=(P\parallelforbidden Q)/S$ due to Lemma~\ref{lemma:multicast-hiding-coherence} (with $P$ and $Q$ being \miaforbidden and $S=A_P\cap A_Q$).
	Furthermore, from Theorem~\ref{theorem:mioco-hiding} it follows that $i\mathmiocophi s\Rightarrow(i\mathhiding L)\mathmiocophi(s\mathhiding L)$ if $i$ is weakly must-input-enabled and $s$ is weakly may-input-enabled.
	However, \miaforbidden parallel composition with hiding does not preserve may-input-enabledness, and strong must-input-enabledness is only preserved if the automata to be composed do not contain any input transitions having a common action with the failure state as their target (\cf Corollary~\ref{corollary:hiding-input-enabledness-preservation}).
	Therefore, we require $s_1$ and $s_2$ to be strongly must-input-enabled and not containing any input transitions having a common action with the failure state as their target in order for $s_1\hidingforbidden s_2$ to be may-input-enabled.
	It follows that $\left(i_1\mathmiocophi s_1\land i_2\mathmiocophi s_2\right)\Rightarrow i_1\hidingforbidden i_2\mathmiocophi s_1\hidingforbidden s_2$ if $i_1$, $i_2$, $s_1$, and $s_2$ are strongly must-input-enabled, and $s_1$ and $s_2$ do not contain any input transitions having a common action with the failure state as their target.
\end{proof}

\subsection{Proof of Lemma~\ref{lemma:parallel-composition-hiding-associativity}}\label{subsec:parallel-composition-hiding-associativity-proof}

\setcounter{theoremnumbering}{\value{lemma}}
\setcounter{lemma}{\value{parallel-composition-hiding-associativity-counter}}

\begin{lemma}[Associativity of IR-MIA Parallel Composition]
	Let $P$, $Q$, $R$ be IR-MIA.
	It holds that $(P\parallelforbidden Q)\parallelforbidden R=P\parallelforbidden(Q\parallelforbidden R)$ if pairwise intersection of $I_P\cap O_Q$ and $I_Q\cap O_P$ with $I_Q\cap O_R$ and $I_R\cap O_Q$ results in $\emptyset$.
\end{lemma}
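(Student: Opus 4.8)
The plan is to observe first that the displayed equality is, verbatim, the conclusion already obtained in Lemma~\ref{lemma:parallel-composition-associativity}, which establishes $(P\parallelforbidden Q)\parallelforbidden R=P\parallelforbidden(Q\parallelforbidden R)$ for \emph{arbitrary} IR-MIA $P,Q,R$ with no hypothesis on their alphabets. Since the present statement merely restricts attention to those triples satisfying the stated disjointness condition, it follows a fortiori: for every such triple the universal equality of Lemma~\ref{lemma:parallel-composition-associativity} already applies. Thus the alphabet side condition is superfluous for the multicast operator $\parallelforbidden$, and the shortest honest argument is a one-line appeal to the earlier lemma.

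For a self-contained derivation I would rerun the reduction used there. Every IR-MIA occurring in the composition is written as the image $\mathcal{T}(\cdot)$ of an underlying MIA (Def.~\ref{def:mia-transformation}), exactly as in the proof of Lemma~\ref{lemma:parallel-composition-associativity}, and I would invoke Theorem~\ref{theorem:multicast-composition-correctness}, which states $\mathcal{T}(X)\parallelforbidden\mathcal{T}(Y)\cong\mathcal{T}(X\parallelallowed Y)$, to push each application of $\parallelforbidden$ across the transformation. Applying this rewriting twice on each side turns $(P\parallelforbidden Q)\parallelforbidden R$ and $P\parallelforbidden(Q\parallelforbidden R)$ into the images under $\mathcal{T}$ of the two bracketings $(X\parallelallowed Y)\parallelallowed Z$ and $X\parallelallowed(Y\parallelallowed Z)$ of the corresponding MIA. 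Associativity of the multicast product $\parallelallowed$ of Bujtor \etal~\cite{Bujtor2015a} then equates the two MIA, and since $\mathcal{T}$ preserves the relevant structure (inputs, outputs, both transition relations, and the single failure state), the two IR-MIA coincide.

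The only non-routine point, and the single place where the disjointness hypothesis could conceivably matter, is the pruning phase: one must check that the notion of \emph{new error} together with the reachability-of-error closure (Def.~\ref{def:illegal-states}, Def.~\ref{def:parallel-composition}) is invariant under reassociation, so that the two bracketings delete the \emph{same} set of states before unreachable states are discarded. For $\parallelforbidden$ this invariance holds irrespective of the alphabets, because synchronized actions remain ordinary outputs and therefore stay available for further synchronization; no action is consumed by an inner composition. Consequently the stated requirement that the $(P,Q)$-synchronization set $(I_P\cap O_Q)\cup(I_Q\cap O_P)$ be disjoint from the $(Q,R)$-synchronization set $(I_Q\cap O_R)\cup(I_R\cap O_Q)$ is not needed here; it would become essential only if the synchronized actions were additionally hidden, since hiding removes them from the alphabet and then the composition order decides which actions are still available to synchronize. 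I therefore expect the main bookkeeping obstacle to be nothing more than matching the state pairs $((p,q),r)$ and $(p,(q,r))$ under the canonical bijection and confirming the pruning sets correspond across it, which is precisely what Lemma~\ref{lemma:parallel-composition-associativity} already supplies.
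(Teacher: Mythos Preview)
Your proposal is correct and matches the paper's own argument: the paper's proof is the single sentence ``Correctness \ldots\ directly follows from Lemma~\ref{lemma:parallel-composition-associativity} and the fact that [the present lemma] is restricted to IR-MIA not synchronizing on the same actions,'' which is precisely your one-line appeal to the unconditional associativity of $\parallelforbidden$. Your additional observation---that for the multicast operator as literally written the disjointness hypothesis is redundant, and that it would only become essential for the hiding operator $\hidingforbidden$---is well taken and in fact pinpoints an apparent typo: the surrounding text introduces this lemma as associativity of parallel composition \emph{with hiding}, so the intended operator is almost certainly $\hidingforbidden$, for which your remark about hidden actions being consumed explains exactly why the side condition is needed and why the reduction to Lemma~\ref{lemma:parallel-composition-associativity} then goes through.
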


\setcounter{lemma}{\value{theoremnumbering}}

\begin{proof}
Correctness of Lemma~\ref{lemma:parallel-composition-hiding-associativity} directly follows from Lemma~\ref{lemma:parallel-composition-associativity} and the fact that Lemma~\ref{lemma:parallel-composition-hiding-associativity} is restricted to IR-MIA not synchronizing on the same actions. 
\end{proof}

\subsection{Proof of Theorem~\ref{theorem:decompositionality}}\label{subsec:decompositionality-proof}

\setcounter{theoremnumbering}{\value{theorem}}
\setcounter{theorem}{\value{decompositionality-counter}}

\begin{theorem}[Decompositionality of \textbf{modal-irioco}]
	Let $i$, $s$, $c_i$, and $c_s$ be \miaforbidden with
	$i$ and $c_i$ being weak must-input-enabled and all output behaviors of $i$ being mandatory.
	Then $i\mathmiocophi s$ if
	$i\sslashforbidden c_i\mathmiocophi s\sslashforbidden c_s$ and $c_i\mathmiocophi c_s$.
\end{theorem}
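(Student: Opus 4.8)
The plan is to exploit that the quotient $\sslashforbidden$ acts as an inverse to parallel composition $\parallelforbidden$ and to feed the two hypotheses through the compositionality result (Theorem~\ref{theorem:compositionality-multicast}). The structural fact I would first establish is the adjunction inherited from the MIA quotient (\cf Def.~\ref{def:pseudo-quotient} and Def.~\ref{def:quotient}): whenever $P\sslashforbidden D$ is defined, recomposing it with the divisor yields a refinement of the original, i.e.\ $(P\sslashforbidden D)\parallelforbidden D\miarefforbidden P$. Applied to the specification pair this gives $(s\sslashforbidden c_s)\parallelforbidden c_s\miarefforbidden s$. For the implementation side, the hypothesis that \emph{all outputs of $i$ are mandatory} is exactly what upgrades this refinement to a faithful recovery: with no optional outputs present, the pseudo-quotient rules (in particular \emph{(QMust3)}) never turn an optional output into a mandatory input — the distortion illustrated in Fig.~\ref{fig:decompositionality} — so no spurious mandatory behaviour is created upon recomposition, and $(i\sslashforbidden c_i)\parallelforbidden c_i$ agrees with $i$ on all suspension traces, may-outputs and must-outputs.

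Next I would combine the premises. From $i\sslashforbidden c_i\mathmiocophi s\sslashforbidden c_s$ and $c_i\mathmiocophi c_s$, compositionality yields $(i\sslashforbidden c_i)\parallelforbidden c_i\mathmiocophi(s\sslashforbidden c_s)\parallelforbidden c_s$. Here I anticipate an input-enabledness mismatch that needs care: Theorem~\ref{theorem:compositionality-multicast} is phrased for \emph{strong} must-input-enabled implementations, whereas $i$ and $c_i$ are assumed only \emph{weak} must-input-enabled. I would resolve this either by checking that the relevant quotient/divisor pairs inherit strong must-input-enabledness, or — more safely — by not invoking Theorem~\ref{theorem:compositionality-multicast} as a black box but re-deriving the two required suspension-trace/output inclusions for the composed systems directly from the premises, reusing the pruning bookkeeping of the proof in Sect.~\ref{subsec:comp-multicast-proof}.

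Finally I would transport the composed statement back to $i$ and $s$. Using the implementation recovery, $(i\sslashforbidden c_i)\parallelforbidden c_i$ can be replaced by $i$ in both clauses of \textbf{modal-irioco}. The specification side is the main obstacle: there I only have the refinement $(s\sslashforbidden c_s)\parallelforbidden c_s\miarefforbidden s$, and, as Fig.~\ref{fig:mioco-miaref-preservation} shows, refinement does \emph{not} imply \textbf{modal-irioco}, so I cannot chain through transitivity (Theorem~\ref{theorem:mioco-preorder}) and must instead argue at the level of the two output inclusions. For Clause~1 I would use that refinement can only \emph{shrink} may-outputs, giving $\mayout((s\sslashforbidden c_s)\parallelforbidden c_s\mayafter\sigma)\subseteq\mayout(s\mayafter\sigma)$, and combine it with the composed inclusion and the $i$-recovery; for Clause~2 I would use that refinement \emph{preserves} must-outputs, giving $\mustout(s\mayafter\sigma)\subseteq\mustout((s\sslashforbidden c_s)\parallelforbidden c_s\mayafter\sigma)$, and push this through the composed must-output inclusion to reach $\mustout(s\mayafter\sigma)\subseteq\mustout(i\mayafter\sigma)$. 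The delicate step I expect to dominate the write-up is reconciling the differing may-/must-trace index sets across these inclusions — in particular handling a $\sigma\in\maystraces(s)$ that is absent from $\maystraces((s\sslashforbidden c_s)\parallelforbidden c_s)$, where I would argue, as in the preorder proof, that under weak may-input-enabledness such a $\sigma$ must already be excluded (the first deviating action can be neither an output, by the established may-output inclusion, nor an input, by input-enabledness) so that the missing outputs on $i$'s side are empty and the inclusions hold vacuously.
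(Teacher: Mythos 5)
Your proposal follows essentially the same route as the paper's own proof: establish that recomposing the quotient with the divisor yields a refinement of the original (the paper obtains this by proving $\mathcal{T}(P)\sslashforbidden\mathcal{T}(D)\cong\mathcal{T}(P\sslashallowed D)$ and inheriting the property from the MIA theory of Bujtor \etal), use the all-outputs-mandatory hypothesis so that $(i\sslashforbidden c_i)\parallelforbidden c_i$ agrees with $i$ (inputs by must-input-enabledness, outputs by the mandatory-only restriction), obtain the composed conformance from the premises via compositionality, and transport it back along $(s\sslashforbidden c_s)\parallelforbidden c_s\miarefforbidden s$ by arguing at the level of may-/must-output inclusions rather than through transitivity. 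If anything, your write-up is more explicit than the paper's on two points it glosses over: the strong-versus-weak must-input-enabledness mismatch when invoking Theorem~\ref{theorem:compositionality-multicast}, and the reconciliation of the differing may-trace index sets in the final step.
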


\setcounter{theorem}{\value{theoremnumbering}}

\begin{proof}
	For this theorem to hold, we first have to consider correctness of the
	pseudo-quotient and quotient in IR-MIA since the original operators are defined on
	the MIA model proposed by Bujtor \etal~\cite{Bujtor2015a}.
	In Def.~\ref{def:mia-transformation}, a transformation function $\mathcal{T}$ from \miaallowed to \miaforbidden is given,
	altering the semantics of the universal state (in \miaallowed)
	to become a failure state and adjusting transitions to these states, accordingly.
	To show correctness of Def.~\ref{def:pseudo-quotient} and Def.~\ref{def:quotient},
	we prove for a quotient pair $P$ and $D$ that
	$\mathcal{T}(P)\sslashforbidden\mathcal{T}(D)\cong\mathcal{T}(P\sslashallowed D)$ (\ie isomorphism).
	The isomorphism used is simply the identity function.

	Let $P$ and $D$ be \miaallowed (\ie MIA according to~\cite{Bujtor2015a}) such that $P$ and $D$ forms a quotient pair.
	The proof proceeds in two steps, (1) we show that the required property already holds for the pseudo-quotient, \ie $\mathcal{T}(P)\oslash_\Phi\mathcal{T}(D)\cong\mathcal{T}(P\oslash_A D)$ and (2) we show that the same set of states is pruned in order to obtain the quotients.

	For (1), we consider states $(p,d)$ of the pseudo quotient.
	Please note that on both sides, the state identities are preserved.
	Hence, it suffices to show that $(p,d)\oversetgamma{a} (p',d')$ is covered by both pseudo-quotient operations.
	For rules (QMay1) to (QMay3) and (QMust1) to (QMust3), this obviously holds.
	The only difference is the handling of the universal/failure state by the
	remaining rules, (QMay4) and (QMay5) in \cite{Bujtor2015a} and (QMay4/QMust4) in Def.~\ref{def:pseudo-quotient}.

	If $(p,d)\oversetmaysub{a}{P\oslash_A D} (e_P,e_D)$ (\ie $(p,d)\not\oversetmaysub{a}{\mathcal{T}(P\oslash_A D)}$) due to (QMay4), $p\oversetmaysub{a}{P} e_P$ and by $\mathcal{T}$, $p\not\oversetmaysub{a}{\mathcal{T}(P)}$.
	Thus, $(p,d)\not\oversetmaysub{a}{\mathcal{T}(P)\oslash_\Phi\mathcal{T}(D)}$.
	If $(p,d)\oversetmaysub{a}{P\oslash_A D} (e_P,e_D)$ due to (QMay5), $p\neq e_P$, $d\not\oversetmaysub{a}{D}$ and $a\in A_D\setminus (O_P\cap I_D)$.
	Again, neither of the rules of Def.~\ref{def:pseudo-quotient} applies.
	Thus, $(p,d)\not\oversetmaysub{a}{\mathcal{T}(P)\oslash_\Phi\mathcal{T}(D)}$.
	Suppose $(p,d)\oversetmustsub{a}{\mathcal{T}(P)\oslash_\Phi \mathcal{T}(D)} (p_\Phi,d_\Phi)$ due to rule (QMay4/QMust4) of Def.~\ref{def:pseudo-quotient}.
	In this case, $p\oversetmustsub{a}{\mathcal{T}(P)} p_\Phi$, \ie $p\not\oversetmaysub{a}{P}$.
	Since none of the rules (QMay4) or (QMay5) in \cite{Bujtor2015a} applies, $(p,d)\oversetmaysub{a}{P\oslash_A D}$ thus $(p,d)\oversetmaysub{a}{\mathcal{T}(P\oslash_A D)}$

	For step (2), we have to consider the rules to identify impossible states.
	The rules of Def.~\ref{def:quotient} are adopted from Bujtor \etal~\cite{Bujtor2015a}.
	The one rule missing in our set is due to the fact that we do not have a universal state in \miaforbidden.

	Therefore, $\mathcal{T}(P)\sslashforbidden\mathcal{T}(D)\cong\mathcal{T}(P\sslashallowed D)$.
	We now proceed the proof of Theorem~\ref{theorem:decompositionality}.

	Let $i'=(i\sslashforbidden c_i)\parallelforbidden c_i$ and $s'=(s\sslashforbidden c_s)\parallelforbidden c_s$.
	From~\cite{Bujtor2015a} and the first part of this proof, we conclude that $i'\miarefforbidden i$ and $s'\miarefforbidden s$.
	Therefore, we have to show that $i'\mathmiocophi s'\Rightarrow i\mathmiocophi s$.
	\miaforbidden $i$ is weak must-input-enabled, and, therefore, $i'$ is also weak must-input-enabled, \ie $i$ and $i'$ have the same input behavior.
	\miaforbidden $s$ may only have less than, or equal input behaviors as $s'$ as, under refinement, it is only possible to add inputs but not to remove inputs.
	Therefore, we only have to consider output behaviors.
	Hence, $i$ and $i'$ only differ in output behaviors, \ie mandatory outputs and forbidden outputs of $i'$ may be optional in $i$.
	However, both cases are not possible as we restrict $i$ to only have mandatory outputs.
\end{proof}

\subsection{Proof of Theorem~\ref{theorem:and-is-and}}\label{subsec:and-is-and-proof}

\setcounter{theoremnumbering}{\value{theorem}}
\setcounter{theorem}{\value{and-is-and-counter}}

\begin{theorem}[$\land$ is And]
	Let $p$ and $q$ be \miaforbidden{} with common alphabets.
	Then, (1) $(\exists r:r\miarefforbidden p\text{ and } r\miarefforbidden q)$ iff $p\land q$ is defined.
	Further, in case $p\land q$ is defined: (2) $r\miarefforbidden p$ and $r\miarefforbidden q$ iff $r\miarefforbidden p\land q$. 
\end{theorem}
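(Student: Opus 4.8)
The plan is to prove that $p\land q$ is the \emph{greatest lower bound} of $p$ and $q$ with respect to the refinement preorder $\miarefforbidden$, which is precisely what the two claims together assert. I would organise the argument around three auxiliary facts: (A) a \emph{projection lemma} stating that whenever $p\land q$ is defined it holds that $p\land q\miarefforbidden p$ and $p\land q\miarefforbidden q$; (B) \emph{transitivity} of $\miarefforbidden$ (a standard property of modal refinement, obtained by composing two refinement relations and using that weak transitions compose); and (C) a \emph{greatest-lower-bound lemma} producing, from two refinements $r\miarefforbidden p$ and $r\miarefforbidden q$, a single refinement $r\miarefforbidden p\land q$. Given these, the theorem falls out quickly: claim~(1) ``$\Leftarrow$'' takes $r:=p\land q$ and invokes (A); claim~(1) ``$\Rightarrow$'' is the non-pruning argument sketched below; claim~(2) ``$\Leftarrow$'' chains $r\miarefforbidden p\land q\miarefforbidden p$ (and $\miarefforbidden q$) through (A) and (B); and claim~(2) ``$\Rightarrow$'' is exactly (C).

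For the projection lemma (A) I would exhibit the candidate relation $\{((p',q'),p')\mid (p',q')\in p\land q\}$ (and symmetrically for $q$) and check the six clauses of Def.~\ref{def:mia-refinement-forbidden}. The only nontrivial clauses are the must-clauses~\ref{refine-must-i} and~\ref{refine-must-o}: when $p'$ carries a must-transition, I must produce a matching must-transition of $(p',q')$. Rules \emph{(OMust1)} and \emph{(IMust1)} of Def.~\ref{def:conjunctive-product} generate precisely such a transition provided the $q'$-component offers a corresponding (weak) may-move; if it does not, then $(p',q')$ would be inconsistent by rules \emph{(F1)}--\emph{(F4)} of Def.~\ref{def:conjunction} and hence pruned, contradicting $(p',q')\in p\land q$. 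The demonic state $q_d$ of Def.~\ref{def:demonic state} must be handled separately here, using that it allows every output optionally and refuses no input, so that at a pair $(p',q_d)$ the conjunction faithfully reproduces the behaviour of $p'$ (via \emph{(OMust1)}/\emph{(DMay1)}).

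The heart of the proof, and the step I expect to be hardest, is the greatest-lower-bound lemma (C). Given witnessing refinement relations $\mathcal{R}_1\subseteq r\times p$ and $\mathcal{R}_2\subseteq r\times q$, I would define
$$\mathcal{R}=\{(t,(p',q'))\mid (t,p')\in\mathcal{R}_1\text{ and }(t,q')\in\mathcal{R}_2\}$$
and verify that $\mathcal{R}$ is an IR-MIA refinement relation containing the initial triple. The difficulty is the simultaneous bookkeeping across the conjunctive-product rules: a must-transition of $(p',q')$ may originate from a must of $p'$ (rule \emph{(OMust1)}/\emph{(IMust1)}) or of $q'$, and in each case I must retrieve a matching (weak) must-move of $t$ from the appropriate $\mathcal{R}_k$ while keeping the other component aligned through its may-move; the weak-transition closures $\Oversetmust{\hat\omega}$ and $\Oversetmay{\epsilon}$ appearing in Def.~\ref{def:conjunctive-product} and Def.~\ref{def:mia-refinement-forbidden} have to be threaded carefully. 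Three further subtleties arise: forbidden inputs, where rules \emph{(FMust)}/\emph{(FMay)} send the pair to the unified failure state and clause~\ref{refine-failure} forces $t=t_\Phi$; unspecified inputs resolved through the demonic state, where one of the $\mathcal{R}_k$-components imposes no constraint; and the obligation that every pair reached by $\mathcal{R}$ actually lies in $p\land q$ (is not pruned), which I would discharge by the same invariant used for claim~(1).

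Finally, for claim~(1) ``$\Rightarrow$'' I would show that the existence of a common refinement rules out pruning of the initial pair. The key invariant is: if there is a $t$ with $(t,p')\in\mathcal{R}_1$ and $(t,q')\in\mathcal{R}_2$, then $(p',q')\notin F$. One checks directly that none of the new-error rules \emph{(F1)}--\emph{(F6)} can fire for such a pair (a conflict between $p'$ and $q'$ would force $t$ to both provide and simultaneously forbid or omit the same behaviour, contradicting that $t$ refines both), and that the reachability rule \emph{(F7)} cannot propagate inconsistency to it, because every must-transition of $(p',q')$ is matched by must-behaviour of $t$ leading again to a commonly-refinable pair. Applying the invariant to the initial states yields that the initial pair is not in $F$, i.e.\ $p\land q$ is defined, completing the argument.
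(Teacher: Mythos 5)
Your architecture for claim (2) matches the paper's proof in substance: your lemma (C) is exactly the paper's ``only if'' construction (the relation $\{(r_1,(p_1,q_1))\mid (r_1,p_1)\in\mathcal{R}_p\wedge(r_1,q_1)\in\mathcal{R}_q\}$ extended with $(r_\Phi,(p\land q)_\Phi)$), and your (A)$+$(B) route to the ``if'' direction is a mild variant of the paper's, which projects an arbitrary relation $\mathcal{R}$ witnessing $r\miarefforbidden p\land q$ onto its $p$-component in one step, instead of first proving $p\land q\miarefforbidden p$ and then composing via transitivity (transitivity is standard but is nowhere established for IR-MIA refinement in the paper, so the one-step projection is the more economical choice). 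For claim (1), the paper proves only the ``$\Leftarrow$'' direction, exactly as you do, by taking $r:=p\land q$.

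The genuine gap is in your argument for the ``$\Rightarrow$'' direction of (1) --- the direction the paper silently omits --- and it is not repairable. Your invariant asserts that if $(t,p')\in\mathcal{R}_1$ and $(t,q')\in\mathcal{R}_2$ then $(p',q')\notin F$, justified by the claim that none of \emph{(F1)}--\emph{(F6)} can fire at a commonly-refinable pair. That justification is sound for \emph{(F1)}--\emph{(F4)}, which encode semantic conflicts, but it fails for \emph{(F5)}/\emph{(F6)}, which encode an expressiveness restriction (avoidance of disjunctive transitions) rather than a conflict. Concretely, let $p$ have two may-transitions $p\oversetmay{o}p_1$ and $p\oversetmay{o}p_2$ with $p_1\neq p_2$ (both deadlocked), and let $q$ have the single must-transition $q\oversetmust{o}q_1$. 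The IR-MIA $r$ with one transition $r\oversetmust{o}r_1$ refines both $p$ and $q$ (a refinement may simply drop one of the optional branches of $p$), yet $(p,q)\in F$ by rule \emph{(F5)}, so $p\land q$ is undefined. Hence the ``$\Rightarrow$'' half of (1) is not merely unprovable by your invariant; it is false as stated, which is presumably why the paper's proof only derives ``$p\land q$ defined $\Rightarrow$ common refinement exists'' from reflexivity and part (2), and never addresses the converse. The same failure also undermines the way you propose to discharge, inside lemma (C), the obligation that every pair reached by $\mathcal{R}$ survives pruning: a pair reached by a \emph{may}-transition can be pruned by \emph{(F5)}/\emph{(F6)} even though it has a common refinement; for pairs reached by \emph{must}-transitions the correct argument is the contrapositive of \emph{(F7)}, which is what the paper uses.
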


\setcounter{theorem}{\value{theoremnumbering}}

\begin{proof}
	Let $p$ and $q$ be \miaforbidden{} with common alphabets.
	We first prove (2), that is if $p\land q$ is defined, then for every \miaforbidden{} $r$ it holds that $r\miarefforbidden p$ and $r\miarefforbidden q$ iff $r\miarefforbidden p\land q$.
	\begin{description}
		\item[if:] Let $r$ be a \miaforbidden{} such that $r\miarefforbidden p\land q$ by refinement relation $\mathcal R$.
		We define\\[.3em] 
		\centerline{$\mathcal R_p := \{ (r_1,p_1) \mid (r_1,(p_1,q_1))\in\mathcal R \} \cup \{ (r_\Phi,p_\Phi) \}$}\\[.3em]
		and prove that $\mathcal R_p$ is a refinement relation proving $r\miarefforbidden p$ ($\mathcal R_q$ for $r\miarefforbidden q$, analogously).
		We proceed by the steps of Def.~\ref{def:mia-refinement-forbidden} for every $(r_1,p_1)\in\mathcal R_p$ with $r_1\neq r_\Phi$.
		\begin{enumerate}
			\item We need to show that $p_1\neq p_\Phi$.
			By construction of $p\land q$ (Def.~\ref{def:conjunctive-product}), there is no combined state $(p_\Phi,q')$ (by IMust1, IMust2, IMay, DMay1, FMay2, FMust).
			Thus, $\mathcal R$ cannot relate $r_1$ with such a state, as long as $r_1\neq r_\Phi$.
			
			\item Suppose $p_1 \oversetmust{i} p_2 \neq p_\Phi$.
			Since $(r_1,p_1)\in\mathcal R_p$, there is a $q_1$ such that $(r_1,(p_1,q_1))\in\mathcal R$.
			Then, by IMust1 or DMay1, there exists some $q_2$ such that $(p_1,q_1)\oversetmust{i} (p_2,q_2)$ in the conjunctive product $p\& q$.
			This transition is not pruned by constructing $p\land q$, since otherwise, the pair $(p_1,q_1)$ is also pruned, contradicting the assumption that $\mathcal R$ is a refinement relation.
			Thus, there exists an $r_2\neq r_\Phi$ such that $r_1 \oversetmust{i}\Oversetmust{\varepsilon} r_2$ and $(r_2,(p_2,q_2))\in\mathcal R$.
			By definition of $\mathcal R_p$, $(r_2,p_2)\in\mathcal R_p$.
			
			\item The case $p_1 \oversetmust{\omega} p_2$ is analogous, using OMust1 to construct a respective $r_2$ with $(r_2,p_2)\in\mathcal R_p$.
			
			\item Suppose $r_1 \oversetmay{i} r_2$ and $p_1\oversetmay{i}$.
			We need to show that there is a $p_2$ with $p_1\oversetmay{i}\Oversetmay{\varepsilon} p_2$ such that $(r_2,p_2)\in\mathcal R_p$.
			By $r\miarefforbidden p\land q$ and the construction of $\mathcal R_p$, there is a $q_1$ with $(r,(p_1,q_1))\in\mathcal R$ and $(p_1,q_1)\oversetmay{i}\Oversetmay{\varepsilon} (p_2,q_2)$ with $(r_2,(p_2,q_2))\in\mathcal R$, only if $(p_1,q_1)\oversetmay{i}$.
			If so, take the respective $p_2$ and $(r_2,p_2)\in \mathcal R_p$.
			$(p_1,q_1)\oversetmay{i}$ always holds, since $p_1\oversetmay{i}$ by assumption and rule (IMay), (DMay1), or (FMay) ensure its existence.
			Note that the latter rule covers the case where $(p_1,q_1)\oversetmay{i} (p\land q)_\Phi$ (forbidden state of the conjunction).
			In this case, we get $p_2 = p_\Phi$ and $r_2= r_\Phi$.
			
			\item The case $p_1 \oversetmay{i} p_2$ where $p_2\neq p_\Phi$ is analogous to case 2, using IMay to derive an $r_2$ such that $r_1 \oversetmay{i}\Oversetmay{\varepsilon} r_2$ and $(r_2,p_2)\in\mathcal R_p$.
			Consider $p_2=p_\Phi$.
			$p_1 \oversetmay{i} p_2$ is complemented by $q_1$ using rule (FMay).
			Thus, we get $(p_1,q_1)\oversetmay{i} (p\land q)_\Phi$.
			By $r\miarefforbidden p\land q$ (condition (5) of Def.~\ref{def:mia-refinement-forbidden}), then $r_1 \oversetmay{i} r_\Phi$.
			Thus, $(r_\Phi,p_2)=(r_\Phi,p_\Phi)\in\mathcal R_p$.
			
			\item Suppose $r_1 \oversetmay{\omega} r_2$.
			From $\mathcal R$, we obtain a $q_1$ such that $(r_1,(p_1,q_1))\in\mathcal R$.
			Since $r\miarefforbidden p\land q$, there is a transition $(p_1,q_1)\Oversetmay{\hat\omega} (p_2,q_2)$ such that $(r_2,(p_2,q_2))\in\mathcal R$.
			Transition $(p_1,q_1)\Oversetmay{\hat\omega} (p_2,q_2)$ is due to repeated application of (OMay).
			Thus, $p_1 \Oversetmay{\hat\omega} p_2$ and by construction of $\mathcal R_p$, $(r_2,p_2)\in\mathcal R_p$.
		\end{enumerate}
		\item[only if:]
		Suppose we have $r\miarefforbidden p$ and $r\miarefforbidden q$ witnessed by refinements $\mathcal R_p$ and $\mathcal R_q$.
		We define\\[.3em]
		\centerline{$\mathcal R := \{ (r_1,(p_1,q_1)) \mid (r_1,p_1)\in\mathcal R_p \land (r_1, q_1) \in R_q \} \cup \{ (r_\Phi,(p\land q)_\Phi \}$}\\[.3em]
		and show that $\mathcal R$ is a refinement relation.
		We follow the steps in Def.~\ref{def:mia-refinement-forbidden} for all $(r_1,(p_1,q_1))\in\mathcal R$ such that $r_1\neq r_\Phi$:
		\begin{enumerate}
			\item This case ($(p_1,q_1)\neq(p\land q)_\Phi$) holds by construction of $\mathcal R$.
			
			\item In this case, we have $(p_1,q_1)\oversetmust{i} (p_2,q_2)$ due to (IMust1) or (IMust2).
			W.\,l.\,o.\,g., we consider (IMust1), \ie $p_1\oversetmust{i} p_2\neq p_\Phi$ and $q_1 \oversetmay{i}\Oversetmay{\varepsilon} q_2\neq q_\Phi$.
			Furthermore, note that $p_2\land q_2$ is defined, \ie not pruned.
			By $(r_1, p_1)\in\mathcal R_p$, we have an $r_2$ such that $r_1\oversetmust{i}\Oversetmust{\varepsilon} r_2\neq r_\Phi$ and $(r_2,p_2)\in\mathcal R_p$.
			Since $r_1 \oversetmay{i}\Oversetmay{\varepsilon} r_2$ and $(r_1, q_1)\in\mathcal R_q$, there are $\hat{q_2}$ such that $q_1 \oversetmay{i}\Oversetmay{\varepsilon} \hat{q_2}$ with $(r_2, \hat{q_2})\in\mathcal R_q$.
			Since $p_2\land q_2$ is defined, there is at most one such $\hat{q_2}$, \ie $q_2 = \hat{q_2}$.
			Thus, $(r_2, q_2)\in\mathcal R_q$ and $(r_2,(p_2,q_2))\in\mathcal R$.
			
			\item This case is analogous to the one before, using (OMust1) und (OMust2).
			
			\item This case is analogous to case 6, using (IMay), (DMay1), (DMay2) and (FMay).
			
			\item Here, we distinguish between (a) $(p_1,q_1)\oversetmay{i} (p_2,q_2)$ due to rules (IMay), (DMay1), or (DMay2), and (b) $(p_1,q_1)\oversetmay{i} (p\land q)_\Phi$ due to (FMay).
			In case (b), it holds that $p_1 \oversetmay{i} p_\Phi$ or $q_1 \oversetmay{i} q_\Phi$.
			Since $r\miarefforbidden p$ and $r\miarefforbidden q$, we may deduce $r_1 \oversetmay{i} r_\Phi$, completing the case.
			In case (a), we further distinguish by the respective rules:
			\begin{description}
				\item[(IMay)] Here, we get that $p_1 \oversetmay{i}\Oversetmay{\varepsilon} p_2 \neq p_\Phi$ and $q_1 \oversetmay{i}\Oversetmay{\varepsilon} q_2 \neq q_\Phi$.
				By assumption, $r\miarefforbidden p$ and $r\miarefforbidden q$.
				Thus, $r_1 \oversetmay{i}\Oversetmay{\varepsilon}$ with appropriate target states of $p_2$ and $q_2$.
				If there is no target state $r_2$ such that $(r_2,p_2)\in\mathcal R_p$ and $(r_2,q_2)\in\mathcal R_q$, there {\em must} be a contradiction following $p_2$ and $q_2$, making their conjunction $p_2\land q_2$ undefined.
				But $p_2\land q_2$ is defined.
				Thus, such an $r_2$ exists.
				
				\item[(DMay1)] In this case, we get that $q_2 = q_d$ and for every $r_i$, it holds that $(r_i,q_d)\in\mathcal R_q$.
				
				\item[(DMay2)] Completely analogous. 
			\end{description}
			
			\item In case we have $r_1 \oversetmay{\omega} r_2$
			Since $r\miarefforbidden p$ and $r\miarefforbidden q$, we have that $p_1 \Oversetmay{\hat\omega} p_2$ and $q_1 \Oversetmay{\hat\omega} q_2$ such that $(r_2,p_2)\in\mathcal R_p$ and $(r_2,q_2)\in\mathcal R_q$.
			Thus $(r_2,(p_2,q_2))\in\mathcal R$.
			By repeated application of rule (OMay), we get that $(p_1,q_1) \Oversetmay{\hat\omega} (p_2,q_2)$.
		\end{enumerate}
	\end{description}
	Part (1) follows from (2) and the fact that $\miarefforbidden$ is reflexive, \ie if $p\land q$ is defined, then there is an $r$ such that $r\miarefforbidden p$ and $r\miarefforbidden q$.
	Since $p\land q \miarefforbidden p\land q$, (2) allows to deduce $p\land q \miarefforbidden p$ and $p\land q \miarefforbidden q$.
\end{proof}

\subsection{Proof of Lemma~\ref{lemma:conjunction-associativity}}\label{subsec:conjunction-associativity-proof}

\setcounter{theoremnumbering}{\value{lemma}}
\setcounter{lemma}{\value{conjunction-associativity-counter}}

\begin{lemma}[Associativity of IR-MIA Conjunction]
	Let $P$, $Q$, $R$ be IR-MIA.
	Then, (1) $P\land (Q\land R)$ is defined iff $(P\land Q)\land R$ is defined, and (2) if $P\land (Q\land R)$ is defined, then $S\miarefforbidden P\land (Q\land R)$ iff $S\miarefforbidden (P\land Q)\land R$.
\end{lemma}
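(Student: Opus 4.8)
The plan is to derive associativity purely from the universal (greatest-lower-bound) characterization of conjunction established in Theorem~\ref{theorem:and-is-and}, rather than from the operational rules of Definitions~\ref{def:conjunctive-product} and~\ref{def:conjunction}. The intuition is that both $P\land(Q\land R)$ and $(P\land Q)\land R$ are infima of the collection $\{P,Q,R\}$ with respect to the refinement preorder $\miarefforbidden$, and any two infima of the same collection share exactly the same set of lower bounds. Throughout I would exploit the two facts supplied by Theorem~\ref{theorem:and-is-and}: (i) $X\land Y$ is defined iff $X$ and $Y$ admit a common refinement, and (ii) whenever $X\land Y$ is defined, a state $S$ satisfies $S\miarefforbidden X\land Y$ exactly when $S\miarefforbidden X$ and $S\miarefforbidden Y$.

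For part (1), I would first observe that the mere well-formedness of the expression $P\land(Q\land R)$ already presupposes that the inner conjunction $Q\land R$ is defined. Assuming $P\land(Q\land R)$ is defined, fact (i) yields a common refinement $r$ of $P$ and $Q\land R$; applying fact (ii) to the defined conjunction $Q\land R$ turns $r\miarefforbidden Q\land R$ into $r\miarefforbidden Q$ and $r\miarefforbidden R$. Thus $r$ is a common refinement of all three of $P$, $Q$, and $R$. I would then run the construction in the other bracketing: since $r\miarefforbidden P$ and $r\miarefforbidden Q$, fact (i) makes $P\land Q$ defined, fact (ii) gives $r\miarefforbidden P\land Q$, and together with $r\miarefforbidden R$ a final application of fact (i) makes $(P\land Q)\land R$ defined. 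The reverse implication is entirely symmetric, which establishes the ``iff'' of part (1).

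For part (2), assume $P\land(Q\land R)$ is defined; by part (1) so is $(P\land Q)\land R$, and in particular all intermediate conjunctions $Q\land R$ and $P\land Q$ are defined, so that fact (ii) is applicable at each step. For an arbitrary IR-MIA $S$, I would unfold the left-hand bracketing by fact (ii) into $S\miarefforbidden P$ and $S\miarefforbidden Q\land R$, and then unfold $S\miarefforbidden Q\land R$ once more by fact (ii) into $S\miarefforbidden Q$ and $S\miarefforbidden R$; hence $S\miarefforbidden P\land(Q\land R)$ is equivalent to the symmetric three-fold condition $S\miarefforbidden P$ and $S\miarefforbidden Q$ and $S\miarefforbidden R$. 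The same two unfoldings applied to $(P\land Q)\land R$ reduce $S\miarefforbidden (P\land Q)\land R$ to the identical three-fold condition. Since both sides are equivalent to the bracketing-free statement, they are equivalent to one another, which is exactly part (2).

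I expect the only real obstacle to be the bookkeeping of definedness: fact (ii) of Theorem~\ref{theorem:and-is-and} may only be invoked once the conjunction in question is known to be defined, so the argument must thread the definedness claims of $Q\land R$, $P\land Q$, and the two outer conjunctions through in the correct order -- which is precisely why part (1) must be proved before part (2), and why the symmetric common refinement $r$ is introduced as the device that propagates definedness between the two bracketings. I would also note explicitly that the lemma asserts agreement only up to mutual refinement (the two reassociations possess identical refinement sets), not syntactic identity of the two IR-MIA, so no stronger uniqueness of the meet is required and the preorder (rather than partial-order) nature of $\miarefforbidden$ causes no difficulty.
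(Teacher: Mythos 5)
Your proposal is correct and follows essentially the same route as the paper: both derive associativity purely from the greatest-lower-bound characterization in Theorem~\ref{theorem:and-is-and}, chaining part~(1) (definedness iff a common refinement exists) and part~(2) (refining a defined conjunction iff refining both conjuncts) to reduce each bracketing to the symmetric condition $S\miarefforbidden P$, $S\miarefforbidden Q$, $S\miarefforbidden R$. Your version merely spells out the definedness bookkeeping that the paper's one-line iff-chain leaves implicit.
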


\setcounter{lemma}{\value{theoremnumbering}}

\begin{proof}
	(1) Theorem~\ref{theorem:and-is-and} parts~(1) and~(2) imply that $P\land (Q\land R)$ is defined iff $\exists S.S\miarefforbidden P$ and $S\miarefforbidden Q\land R$ iff $\exists S.S\miarefforbidden P$ and $S\miarefforbidden Q$ and $S\miarefforbidden R$ iff $\exists S.S\miarefforbidden P\land Q$ and $S\miarefforbidden R$ iff $(P\land Q)\land R$ is defined.
	Statement~(2) follows directly from multiple applications of Theorem~\ref{theorem:and-is-and} part~(2).
\end{proof}

\subsection{Proof of Theorem~\ref{theorem:conjunction}}\label{subsec:conjunction-proof}

\setcounter{theoremnumbering}{\value{theorem}}
\setcounter{theorem}{\value{conjunction-counter}}

\begin{theorem}[Compositionality of Conjunction of \textbf{modal-irioco}]
	Let $s$, $s'$ and $i$ be \miaforbidden.
	Then it holds that $\left(i\mathmiocophi s\land i\mathmiocophi s'\right)\Rightarrow i\mathmiocophi s\land s'$.
\end{theorem}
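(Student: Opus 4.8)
The plan is to verify the two defining conditions of $i\mathmiocophi(s\land s')$ from Def.~\ref{def:mioco} directly, reducing each to the hypotheses $i\mathmiocophi s$ and $i\mathmiocophi s'$. The bridge is a structural lemma relating the may- and must-behaviour of the conjunction to that of its operands: the \emph{allowed} (may) behaviour of $s\land s'$ is the ``intersection'' of the allowed behaviours of $s$ and $s'$, whereas the \emph{mandatory} (must) behaviour is a controlled ``combination'' of the mandatory behaviours, as dictated by the conjunctive-product rules (Def.~\ref{def:conjunctive-product}) and the pruning of inconsistent states (Def.~\ref{def:conjunction}). Concretely, I would first show that $\maystraces(s\land s')$ is contained in $\maystraces(s)\cap\maystraces(s')$ up to the demonic-state construction, so that both hypotheses are applicable at every $\sigma\in\maystraces(s\land s')$.

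For condition~1, fix $\sigma\in\maystraces(s\land s')$ and let $\omega\in\mayout(i\mayafter\sigma)$. Since $\sigma$ is also a may-trace of $s$ and of $s'$, the two hypotheses give $\omega\in\mayout(s\mayafter\sigma)$ and $\omega\in\mayout(s'\mayafter\sigma)$. Hence there are reachable states $p\in s\mayafter\sigma$ and $q\in s'\mayafter\sigma$ weakly enabling $\omega$, and rule \emph{(OMay)} of Def.~\ref{def:conjunctive-product} makes $(p,q)$ weakly enable $\omega$ as well, so $\omega\in\mayout((s\land s')\mayafter\sigma)$. The set identity $\bigcup_{p,q}(\mayout(p)\cap\mayout(q))=\mayout(s\mayafter\sigma)\cap\mayout(s'\mayafter\sigma)$ packages this cleanly whenever pruning does not intervene.

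For condition~2, fix $\sigma\in\maystraces(i)$; the inclusion is trivial unless $\sigma\in\maystraces(s\land s')$, which I henceforth assume. For a genuine output $o\in O$, rules \emph{(F1)} and \emph{(F2)} ensure that any \emph{surviving} pair $(p,q)$ in which $o$ is mandatory in one component has $o$ allowed in the other, so \emph{(OMust1)}/\emph{(OMust2)} keep $o$ mandatory in $(p,q)$; thus a mandatory $o$ of the conjunction is already mandatory in $s$ or in $s'$, and condition~2 of the respective hypothesis yields $o\in\mustout(i\mayafter\sigma)$. The markers $\delta$ and $\varphi$ need a different and more delicate argument: a conjunction state is must-quiescent precisely when its two components share \emph{no} common weakly-enabled output, and in that case condition~1 of \emph{both} hypotheses squeezes $\mayout(i\mayafter\sigma)$ into the (now empty) intersection of the operands' allowed outputs, forcing $i$ itself to be must-quiescent; an analogous squeezing, driven by the \emph{(FMay)}/\emph{(FMust)} rules, handles $\varphi$.

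The hard part will be making the structural lemma airtight against three interacting complications: non-determinism together with weak ($\tau$) transitions, which forces all reasoning through the $\Oversetmay{\hat\omega}$ and $\Oversetmust{\hat\omega}$ closures and through after-\emph{sets} rather than single states; the demonic states introduced for inputs unspecified in one operand, where that operand permits arbitrary subsequent behaviour so that the conjunction's outputs are governed solely by the other operand; and the pruning of inconsistent states, which may delete exactly the product pairs one would like to use as witnesses, so that I must argue that whenever $\sigma\in\maystraces(s\land s')$ a surviving reachable pair still realises the required may-output and no surviving pair spuriously mandates an output or a refusal. I expect the $\delta$/$\varphi$ cases of condition~2 -- and their reliance on condition~1 of both hypotheses rather than on condition~2 -- to be the crux; this is also where one must take care that the set-lifting of $\mustout$ reflects outputs guaranteed in \emph{every} reachable state, since the mandatory behaviour of a non-deterministic conjunction does not decompose naively.
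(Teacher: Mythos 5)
Your skeleton coincides with the paper's proof: decompose $\sigma\in\maystraces(s\land s')$ into traces of the operands, identify the allowed outputs of the conjunction with the intersection of the operands' allowed outputs via rule \emph{(OMay)} for condition~1, and push mandatory outputs of the conjunction back into $s$ or $s'$ via \emph{(OMust1)}/\emph{(OMust2)} for condition~2. The genuine gap is exactly where you predicted it, and your proposed fix does not close it. For the $\delta$ case of condition~2 you argue that if a reachable conjunction state $(p,q)$ is must-quiescent (no common weakly may-enabled output), then condition~1 of both hypotheses squeezes $\mayout(i\mayafter\sigma)$ into a ``now empty'' intersection. This conflates the state level with the after-set level: $\mayout(s\mayafter\sigma)$ and $\mayout(s'\mayafter\sigma)$ are unions over \emph{all} states reachable after $\sigma$, so their intersection can be far from empty even though one particular pair $(p,q)$ shares no output. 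Concretely: let $s$ offer an optional output $o$ branching nondeterministically to $p$ (optional output $a$ only) and $p'$ (optional output $b$ only); let $s'$ be the same with $a$ and $b$ swapped; let $i$ offer optional $o$ to a single state with optional outputs $a$ and $b$. Then $i\mathmiocophi s$ and $i\mathmiocophi s'$ (after $o$ both specifications allow exactly $\{a,b,\delta\}$ and mandate nothing), and $s\land s'$ is defined (there are no must-transitions, so no rule \emph{(F1)}--\emph{(F7)} fires). But the conjunction reaches, after $o$, the pair $(p,q)$ with no common enabled output, which is must-quiescent, so $\delta\in\mustout\bigl((s\land s')\mayafter o\bigr)$, while no state of $i$ after $o$ is must-quiescent, so $\delta\notin\mustout(i\mayafter o)$ and condition~2 fails. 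Your squeeze would need $\mayout(s\mayafter o)\cap\mayout(s'\mayafter o)=\emptyset$, whereas this intersection is $\{a,b,\delta\}$.

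For fairness of comparison: the paper's own proof does not fare better on this point. It asserts $\mustout(s\land s'\mayafter\sigma)\subseteq\mustout(s\mayafter\sigma)\cup\mustout(s'\mayafter\sigma)$ without comment; that inclusion is sound for genuine outputs (every must-transition of the conjunctive product stems from a must-transition of an operand) and for $\varphi$ (failure transitions of the product stem from failure transitions of an operand), but it is false for $\delta$, since a pair can be must-quiescent although neither component is -- which is precisely what the example above exploits. So you correctly isolated the crux that the paper glosses over, but under the paper's (standard, union-based) lifting of $\textit{Out}_\gamma$ to after-sets the $\delta$ case is not merely delicate: the implication itself fails there, so no local repair of your squeezing argument can succeed without strengthening the hypotheses (e.g., determinism of the specifications after each trace) or replacing the existential lifting of $\mustout$ over after-sets by a universal one.
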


\setcounter{theorem}{\value{theoremnumbering}}

\begin{proof}
	Let $s,s',i$  be \miaforbidden such that $s\land s'$ is defined, $i\mathmiocophi s$, and $i\mathmiocophi s'$.
	We need to show that
	\begin{enumerate}
		\item for all $\sigma\in\maystraces(s\land s')$, $\mayout(i\mayafter \sigma)\subseteq \mayout(s\land s' \mayafter \sigma)$, and
		\item for all $\sigma\in\maystraces(i)$, $\mustout(s\land s' \mayafter \sigma)\subseteq\mustout(i\mayafter \sigma)$.
	\end{enumerate}
	For case 1, let $\sigma\in\maystraces(s\land s')$.
	By construction of $s\land s'$ using rules (OMay), (IMay), (DMay1), (DMay2), and (FMay), $\sigma\in\maystraces(s)$ or $\sigma\in\maystraces(s')$.
	In case both happen to be true, $\mayout(s\land s' \mayafter \sigma) = \mayout(s\mayafter \sigma)\cap\mayout(s'\mayafter \sigma)$ by (OMay).
	By assumptions $i\mathmiocophi s$ and\linebreak$i\mathmiocophi s'$, we get $\mayout(i\mayafter \sigma)\subseteq\mayout(s\mayafter\sigma)\cap\mayout(s'\mayafter\sigma)=\mayout(s\land s'\mayafter\sigma)$.
	In case $\sigma\in\mayout(s\mayafter\sigma)\setminus\mayout(s'\mayafter\sigma)$ (analogous argumentation for the symmetric case),\linebreak$\mayout(s\land s'\mayafter\sigma)=\mayout(s\mayafter\sigma)$ and at some point in $s'$, the continuation of a prefix of $\sigma$ is underspecified.
	Thus, we deduce $\mayout(i\mayafter\sigma)\subseteq\mayout(s\mayafter\sigma)=\mayout(s\land s'\mayafter\sigma)$.

	In case 2, let $\sigma\in\maystraces(i)$.
	It suffices to consider the case of $s\land s'\mayafter\sigma \neq\emptyset$.
	Thus, as above $\sigma\in\maystraces(s)$ or $\sigma\in\maystraces(s')$.
	It holds that $\mustout(s\land s'\mayafter\sigma)\subseteq\mustout(s\mayafter\sigma)\cup\mustout(s'\mayafter\sigma)$.
	Since $i\mathmiocophi s$ and $i\mathmiocophi s'$, $\mustout(s\land s'\mayafter\sigma)\subseteq\mustout(i\mayafter\sigma)$.
\end{proof}

\end{appendix}

\end{document}